\documentclass[11pt]{article}
\usepackage[letterpaper,margin=1in]{geometry}
\usepackage{iftex}
\usepackage{authblk}
\ifPDFTeX
  \usepackage[T1]{fontenc}
  \usepackage{lmodern}
\else
  \usepackage[no-math]{fontspec}
\fi

\usepackage{amsmath,amssymb,amsthm,mathtools,bm,mathrsfs}
\usepackage{graphicx,booktabs,array,microtype}
\usepackage{xcolor,url,needspace,placeins}
\usepackage[colorlinks=true,linkcolor=blue!55!black,citecolor=blue!55!black,urlcolor=blue!55!black]{hyperref}
\usepackage[nameinlink,noabbrev]{cleveref}
\newtheorem{theorem}{Theorem}
\newtheorem{lemma}[theorem]{Lemma}
\newtheorem{proposition}[theorem]{Proposition}
\newtheorem{corollary}[theorem]{Corollary}
\theoremstyle{definition}

\theoremstyle{remark}
\newtheorem{remark}[theorem]{Remark}
\newcommand{\ket}[1]{\lvert #1\rangle}
\newcommand{\bra}[1]{\langle #1\rvert}
\newcommand{\braket}[2]{\langle #1\mid #2\rangle}
\newcommand{\norm}[1]{\left\lVert #1\right\rVert}

\newcommand{\supp}{\operatorname{supp}}
\newcommand{\diam}{\operatorname{diam}}
\newcommand{\dist}{\operatorname{dist}}
\newcommand{\polylog}{\operatorname{polylog}}
\newcommand{\CX}{\operatorname{CX}}
\newcommand{\CCX}{\operatorname{CCX}}
\newcommand{\SWAP}{\operatorname{SWAP}}
\newcommand{\CSWAP}{\operatorname{CSWAP}}
\newcommand{\Dph}{D_{\mathrm{ph}}}
\newcommand{\calA}{\mathcal A}
\newcommand{\PREPARE}{\operatorname{PREPARE}}
\newcommand{\SELECT}{\operatorname{SELECT}}
\newcommand{\calU}{\mathcal U}
\newcommand{\calT}{\mathcal C_{\mathrm{pre}}}

\newcommand{\ind}[1]{[\,#1\,]}
\newcommand{\reg}[1]{\mathsf{#1}}
\newcommand{\gcnt}{\mathsf{g}}
\newcommand{\dcnt}{\mathsf{d}}
\newcommand{\acnt}{\mathsf{a}}
\definecolor{datacol}{RGB}{45,51,57}
\definecolor{ctrlcol}{RGB}{40,92,148}
\definecolor{selcol}{RGB}{151,60,61}
\definecolor{workcol}{RGB}{116,123,133}

\makeatletter
\let\standardmaketitle\maketitle
\renewcommand{\maketitle}{%
  \ifx\@author\@empty
    \begin{center}{\Large\bfseries\@title\par}\end{center}\vspace{1em}%
  \else\standardmaketitle\fi}
\makeatother

\newcommand{\pku}{Center on Frontiers of Computing Studies, School of Computer Science, Peking University, Beijing 100871, China}
\newcommand{\bnu}{School of Artificial Intelligence, Beijing Normal University, Beijing 100875, China}
\newcommand{\nju}{State Key Laboratory for Novel Software Technology, Nanjing University, Nanjing 210023, China}
\newcommand{\hefei}{Hefei National Laboratory, Hefei 230088, China}
\newcommand{\oxford}{Mathematical Institute, University of Oxford, Woodstock Road, Oxford OX2 6GG, United Kingdom}
\newcommand{\pkuphy}{School of Physics, Peking University, Beijing 100871, China}

\begin{document}
\hypersetup{pdftitle={Toward Optimal Circuit Depth for Geometrically Local Hamiltonian Simulation}}
\title{Toward Optimal Circuit Depth for\\Geometrically Local Hamiltonian Simulation}

\author[1,2]{Zheyu Shen\thanks{\texttt{zheyushen@stu.pku.edu.cn}}}

\author[3]{Yusen Wu\thanks{\texttt{yusen.wu@bnu.edu.cn}}}

\author[4,5]{Penghui Yao\thanks{\texttt{phyao1985@gmail.com}}}

\author[1]{Xiao Yuan\thanks{\texttt{xiaoyuan@pku.edu.cn}}}

\author[1,6]{Yukun Zhang\thanks{\texttt{yukun.zhang@maths.ox.ac.uk}}}

\affil[1]{\pku}
\affil[2]{\pkuphy}
\affil[3]{\bnu}
\affil[4]{\nju}
\affil[5]{\hefei}
\affil[6]{\oxford}


\date{}

\maketitle
\begin{abstract}
Understanding the dynamics of interacting quantum systems is a central goal of quantum simulation. For geometrically local Hamiltonians, locality suggests that digital simulation should retain much of the parallelism of the underlying dynamics. For a target accuracy $\epsilon$, we construct a nearest-neighbor quantum circuit that simulates the dynamics of an $n$-qubit, time-independent, finite-range Hamiltonian with normalized local interaction strength on a fixed-dimensional lattice for time $T\ge1$, with circuit depth $O(T\log(nT/\epsilon))$ and gate count $O(nT\log(nT/\epsilon))$. This reduces the polylogarithmic overhead of Haah, Hastings, Kothari, and Low~\cite{HHKL} to a single logarithm. On the logarithmic-scale blocks produced by the Lieb--Robinson decomposition, a shallow but highly accurate implementation is essential for reducing the circuit depth. Our key observation is that spatial localization also controls the cost of high precision. In each block, the error of a constant-order product formula is compensated for by a correction, yielding the required exponential accuracy. The correction can be implemented with depth proportional to the block diameter, introducing no additional asymptotic depth scale beyond that required by spatial localization.

Furthermore, we prove an unconditional circuit-depth lower bound for uniform XXZ Hamiltonians: at fixed evolution time $T$ in the stated regime and sufficiently small fixed error $\epsilon$, any nearest-neighbor quantum circuit simulating these dynamics requires depth $\Omega(\log n/\log\log n)$. This matches our $O(\log n)$ upper bound up to a doubly logarithmic factor, establishing near-optimal circuit depth for lattice Hamiltonian simulation in this regime. Together, these results clarify the role of locality and precision in determining the circuit complexity of quantum dynamics and suggest broader principles for spatially constrained quantum computation.
\end{abstract}

\section{Introduction}
Hamiltonian simulation allows a quantum computer to reproduce the dynamics of an interacting quantum system~\cite{Feynman82,Lloyd}.
For spins on a lattice, these dynamics include both the transport of excitations and the spread of quantum information~\cite{Bose03,OsborneLinden04}.
Since local interactions act throughout the lattice at once, it is natural to ask whether a simulation circuit can preserve this parallelism: ideally, its depth would grow with the evolution time, while its gate count and ancillary space would grow proportionally to the system size.
The difficulty is that local Hamiltonian terms do not by themselves make a high-precision algorithm geometrically local: its coherent controls may be shared by distant parts of the system, and distributing those controls contributes to the circuit depth.

Haah, Hastings, Kothari, and Low (HHKL) gave an algorithm that achieves these resource bounds up to polylogarithmic factors~\cite{HHKL}.
Their construction extends Osborne's one-dimensional decomposition~\cite{Osborne} by using Lieb--Robinson bounds~\cite{LR,NS} to approximate short-time evolution with a constant number of layers of evolutions on overlapping regions.
For $n$ qubits, normalized local interaction strength, evolution time $T\ge1$, and error $\epsilon$, each region has diameter $O(\log(nT/\epsilon))$.
Simulating these regions gives depth $O(T\,\polylog(nT/\epsilon))$ and gate count $O(nT\,\polylog(nT/\epsilon))$, with bounded ancillary density.
The spatial decomposition thus identifies a more specific question: can each region be simulated to exponentially small error in its diameter, using depth only linear in that diameter?

Our main result answers this question affirmatively for time-independent, finite-range Hamiltonians in fixed dimension.
By replacing the simulator within each HHKL region, we reduce the overhead in both depth and gate count to a single logarithm.
The resulting circuits use exact single-qubit gates and nearest-neighbor CNOTs, with a bounded number of ancillary qubits per site; the global result assumes that each lattice side is long enough to contain the logarithmic regions.
The depth bound accounts for preparation, coherent coefficient evaluation, distribution of controls, local data operations, and amplification.
This accounting is necessary because efficient oracle access to preparation or selection does not by itself give an equally efficient lattice implementation~\cite{ChildsConcrete18,Beals}.

The central idea is to exploit spatial localization twice: first to reduce the coefficients of a product-formula correction, and then to distribute its controls efficiently.
Local commutator bounds explain why product formulas approximate the evolution well~\cite{ChildsSu,Trotter}, but a coherent correction also requires control of the sum of absolute coefficients in its expansion~\cite{ChildsWiebe,Taylor,Compensation}.
On an HHKL region, a fixed formula order, chosen only from the dimension, suffices to make this sum small.
We keep each correction between successive formula steps, so that its elementary insertions retain bounded spatial support, and represent the resulting expansion by independently prepared labels.
This representation reproduces the required low-degree coefficients while introducing additional terms that we bound separately.
It also makes the controls simple enough to distribute reversibly through shared local registers.
Successive labels can then travel concurrently, so their processing time adds to the time for a single traversal of the region, rather than multiplying it.
Section~\ref{sec:insight} develops the estimates that allow both costs to fit the linear depth bound.

To assess how closely the upper bound approaches the geometric limit, we also study a uniform Hamiltonian whose coefficients are explicitly known.
For nearest-neighbor XXZ dynamics, a local circuit's exact causal cone~\cite{BHV,Rosenbaum} can be tested against the small but positive probability that an excitation propagates beyond it~\cite{Bose03,Konno05}.
Our lower bound combines these weak signals across separated regions.
Because the physical Hamiltonian still couples the regions, the argument compares its evolution with independent reference walks and bounds the comparison error relative to the arrival signal.
This gives an unconditional lower bound with joint dependence on size, time, and error.
At fixed time and error in the stated range, it is within a doubly logarithmic factor of the upper bound; for growing time, Eq.~\eqref{eq:joint-depth-comparison} makes the remaining gap explicit.

\subsection{Hamiltonians and the simulation model}
Let $\Lambda\subset\mathbb Z^D$ be a finite rectangular lattice, with a fixed number of data qubits at each site and fixed dimension $D\ge1$. Distances are measured in the lattice metric $\dist(x,y):=\sum_{j=1}^D|x_j-y_j|$. For nonempty sets $X,Y\subseteq\Lambda$, we write
\[
 \diam(X):=\max_{x,x'\in X}\dist(x,x'),\qquad
 \dist(X,Y):=\min_{x\in X,\,y\in Y}\dist(x,y).
\]
The input Hamiltonian is specified by explicit Hermitian local matrices defining
\begin{equation}
 H:=\sum_{Z\subseteq\Lambda}h_Z,\qquad
 \diam(Z)\le R,\qquad
 \max_{x\in\Lambda}\sum_{Z\ni x}\norm{h_Z}\le J.
 \label{eq:model}
\end{equation}
Each term is assigned a nonempty set $Z$ and acts as the identity outside that set; this convention assigns a location to scalar terms as well.
The range $R$, the number of terms containing any site, and the number of data qubits per site are fixed constants, whereas $J>0$ bounds the local interaction strength.
Throughout, $\norm A:=\sup_{\norm\psi=1}\norm{A\psi}$ denotes the operator norm, and the same notation denotes the Euclidean norm on vectors.
We write $\supp(A)$ for the smallest set on which $A$ acts nontrivially, $[A,B]:=AB-BA$ for the commutator, and $\dist(X,\varnothing):=\infty$.

The spatial decomposition will replace the full evolution by evolutions on smaller regions.
For a region $Y\subseteq\Lambda$, the corresponding Hamiltonian and unitary are
\begin{equation}
 H_Y:=\sum_{Z\subseteq Y}h_Z,\qquad U_Y(t):=e^{-iH_Yt}.
 \label{eq:restrictiondefinition}
\end{equation}
Thus the restriction retains precisely the terms whose assigned sets lie in $Y$, including scalar terms, and $U_Y(t)$ acts as the identity outside $Y$.
Different layers of the decomposition may use overlapping regions, although the regions within a layer will be disjoint.

Given $H$, an evolution time $T\in\mathbb R$, and an error tolerance $0<\epsilon<1/2$, the simulation problem is to construct a unitary circuit $\mathcal U$ such that
\begin{equation}
 \iota_0\ket\psi:=\ket0_{\mathrm{anc}}\ket\psi,\qquad
 \norm{\mathcal U\iota_0-\iota_0e^{-iHT}}\le\epsilon.
 \label{eq:embedding}
\end{equation}
The isometry $\iota_0$ initializes all ancillary qubits in their computational zero state.
Consequently, Eq.~\eqref{eq:embedding} controls the entire output, rather than only the data conditioned on a particular ancillary outcome.
Because this is an operator-norm bound, it also holds for inputs entangled with an arbitrary reference system.

Our elementary gates are arbitrary exact single-qubit $U(2)$ gates and CNOTs between adjacent physical sites.
The input and output data occupy the same lattice sites, so any permutation of the data must be implemented by gates.
Ancillas start in $\ket0$ and are reused coherently within the unitary circuit.
To accommodate their local registers and connecting paths, we allow a fixed-scale refinement of the lattice and a constant-factor enlargement of each processing region, without changing the spatial dimension or the bounded number of qubits per original site.
All of these qubits and gates contribute to the ancillary count $\acnt$, gate count $\gcnt$, and depth $\dcnt$; a depth layer consists of gates on disjoint qubits.

Gate angles are specified as expressions in the input coefficients.
For this classical specification, the coefficients may be rational or algebraic, or given in another representation that permits exact zero tests for the derived coefficients and certified evaluation of the required expressions.
The theorems count quantum gates in the resulting exact-gate model.
They therefore neither impose a representation-independent cost on numerical preprocessing nor give the same single-logarithmic overhead after synthesis over a fixed finite gate set.
Appendix~\ref{app:gates} explains how numerical precision and gate synthesis add to the stated costs.

All logarithms are natural unless the base is displayed. The total number of data qubits is $n$, and $v$ denotes the number of sites in a simulation block. Constants in the global bounds depend only on the fixed local model; the block bounds also allow dependence on fixed side-length comparison constants, specified below.

It is convenient to absorb the local interaction strength into the evolution time and divide the evolution into intervals of normalized length at most one.
For this purpose, we set
\begin{equation}
 \mathcal T:=\max\{1,\lceil J|T|\rceil\},\qquad
 M:=c_0\log(2+n\mathcal T/\epsilon),
 \label{eq:globalM}
\end{equation}
where $c_0$ is a sufficiently large model-dependent constant. For $T\ne0$, we write $\operatorname{sgn}(T):=T/|T|$ and define
\begin{equation}
 \bar H:=\operatorname{sgn}(T)H/J,\qquad
 \bar H_Y:=\operatorname{sgn}(T)H_Y/J,\qquad
 \bar U_Y(t):=e^{-i\bar H_Yt},\qquad
 \tau:=J|T|/\mathcal T\le1.
 \label{eq:normalizedrestriction}
\end{equation}
With this normalization, $\bar H$ has local strength at most one and $e^{-iHT}=\bar U_\Lambda(\tau)^{\mathcal T}$.
It is therefore enough to construct and compose $\mathcal T$ short-time circuits.
The case $T=0$ requires only the identity circuit.

\subsection{Our results}\label{sec:spatial-reduction}
Our upper bounds are obtained by simulating short-time evolution on blocks of logarithmic diameter.
We first state the block guarantee and its consequence for the full lattice; Section~\ref{sec:insight} gives the HHKL reduction and the main ideas behind our local simulator.

\begin{theorem}[Short-time block simulation]\label{thm:block}
Fix constants $0<c_-\le c_+$. Suppose $H_{\rm blk}$ satisfies Eq.~\eqref{eq:model} with $J\le1$ on a rectangular $D$-dimensional block $\Lambda_{\rm blk}$ whose side lengths lie in $[c_-M,c_+M]$. The number of sites is $v:=|\Lambda_{\rm blk}|$, so $v=\Theta(M^D)$. There is a constant $M_0$, depending only on the local model and $c_-,c_+$, such that, for $M\ge M_0$ and $0<\tau\le1$, a classically specified nearest-neighbor circuit $\mathcal U$ satisfies
\begin{equation}
 \norm{\mathcal U\iota_0-\iota_0e^{-iH_{\rm blk}\tau}}\le e^{-M}
 \label{eq:blockerror}
\end{equation}
with resources
\begin{equation}
 \begin{aligned}
 \text{circuit depth}\quad\dcnt&=O(M),\\
 \text{gate count}\quad\gcnt&=O(vM),\\
 \text{ancillary qubits}\quad\acnt&=O(v).
 \end{aligned}
 \label{eq:blockresources}
\end{equation}
The ancillary qubits have bounded density in a constant-factor enlargement of the block.
\end{theorem}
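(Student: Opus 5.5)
Split the evolution $e^{-iH_{\rm blk}\tau}$ into $r=\Theta(1)$ steps of a fixed-order product formula, and repair the residual error with a coherently implemented correction series, as the introduction indicates.

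\textbf{Product formula and its error.} Fix an order $2k$, depending only on $D$. Take the Suzuki formula $S(\tau)$ built from a constant colouring of the terms $h_Z$ into commuting layers. Each layer is a product of disjoint local unitaries, so $S(\tau)$ has depth $O(1)$ and $O(v)$ gates.

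Write the exact evolution as $e^{-iH\tau}=S(\tau)\,\mathcal E(\tau)$. The error generator $\mathcal E(\tau)-I$ has a Taylor/Dyson expansion in nested commutators of local terms. The degree-$m$ part is a sum of $O(v\,C^m)$ operators, each supported on a connected cluster of $O(m)$ sites, with coefficients of size $\tau^m/m!$-type. Its lowest nonzero degree is $2k+1$.

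I would split $\mathcal E=\sum_j \mathcal E_j$ into local insertions placed between successive formula stages. Then expanding $\mathcal E$ as a linear combination of unitaries (LCU) with the compensation technique gives
\begin{equation}
 \mathcal E=\sum_{\alpha}c_\alpha V_\alpha ,
\end{equation}
and the 1-norm satisfies $\sum|c_\alpha|\le \exp\bigl(O(v\,\tau^{2k+1}/r^{2k})\bigr)$ up to a truncation.

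Choose $r$ constant and $k$ large enough. The block has $v=\Theta(M^D)$ sites, so I would actually take $r=\Theta(M^{D/(2k)})$, or rather subdivide so that $v/r^{2k}=O(1)$ with the total depth $r\cdot O(1)=O(M)$. This needs $2k\ge D$. The 1-norm is then $O(1)$, and truncating the series at total degree $O(M)$ costs error $e^{-2M}$.

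\textbf{LCU implementation with local controls.} Represent the truncated expansion by independently prepared local labels. Each site-cluster carries a small register whose state encodes which insertion, if any, acts there. Products of independent local superpositions reproduce the required low-degree coefficients; the cross terms they introduce are of higher degree and are bounded separately.

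Preparation is then a tensor product of constant-size states, of depth $O(1)$. SELECT is applied locally, controlled by neighbouring label registers. The only nonlocal ingredient is oblivious amplitude amplification, whose reflection about $\ket0_{\rm anc}$ requires an OR over $O(v)$ ancillas. I would compute that OR with a tree of CNOT/Toffoli gates along the lattice, giving depth $O(\diam)=O(M)$. The number of amplification rounds is $O(1)$ because the 1-norm is $O(1)$, or I would use robust amplification to reach error $e^{-M}$.

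Labels for successive correction slots are pipelined: slot $j$ starts its traversal one layer after slot $j-1$. The total depth is therefore $O(r+M)=O(M)$ rather than $rM$, and the gate count is $O(vM)$ with $O(v)$ ancillas.

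\textbf{Main obstacle.} The main obstacle is simultaneously controlling the 1-norm, which must stay $O(1)$ despite $v=M^D$ sites, and the truncation error, which must be $e^{-M}$ within depth $O(M)$. Neither $r$ nor the truncation degree may grow faster than $M$. The spurious products from independent labels must also not blow up the norm.

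My first attempt would be to bound the label-generated series by a cluster expansion $\prod_x(1+O(r^{-2k}))\le e^{O(v r^{-2k})}$. Then I would verify that the deviation from the exact correction is a sum of terms of degree $>M$, each carrying a factor $(C/r)^{M}$.
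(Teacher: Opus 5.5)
\textbf{There is a genuine gap, and it sits where you place the main obstacle.}

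\emph{Independent per-cluster labels do not reproduce the low-degree coefficients.} The correction $E(h)=C(h)^\dagger e^{-iH_{\rm blk}h}$ is a time-ordered exponential of a generator whose local terms overlap and need not commute. The coefficient of an ordered pair of overlapping insertions is therefore a simplex integral, given by the reciprocal cumulative denominators $1/b_i$ in $d_\Gamma$. It is not a product of single-insertion coefficients.

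A product of independent local superpositions, in any fixed spatial or stage order, is already wrong at two insertions. For anticommuting Paulis $A,B$, the cross term of $e^{-ih(aA+bB)}$ vanishes, whereas $(I-ihbB)(I-ihaA)$ contains $-h^2ab\,BA$. Summed over overlapping pairs and microsteps, this mismatch is of order $v\,r\,h^{2(p+1)}$. That is polynomially, not exponentially, small, so it is not a harmless higher-degree cross term.

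\emph{Your expansion also loses locality.} Nested-commutator terms of degree $m$ have support $O(m)$, so near your truncation degree $\Theta(M)$ a single ``insertion'' spans the block.

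The paper avoids both problems in two steps. First, it writes the generator of $E(s)$ as a fixed table of bounded-support Pauli strings with scalar coefficients $c_\lambda e^{i\omega_\lambda s}$. Precision then enters only through scalar Taylor degrees, and the Dyson expansion yields ordered products of bounded-support Paulis. Second, it encodes time order with $N=r+K_0$ globally ordered slots. Each slot carries an independently prepared global address, a degree offset, and an insertion/background bit. Reversible prefix arithmetic computes the stopping rule and the denominators, and phase averages supply the coefficient ratios.

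\emph{Global addresses make control distribution the bottleneck, and your parameters do not fit it.} Each record has $w=\Theta(\log M)$ bits. At bounded qubit density the paper's pipeline spaces successive records by $O(w)$ layers, giving depth $O(M+Nw)$. You give no mechanism for your one-layer spacing. Your choice $v/r^{p}=O(1)$ permits only a constant weight per degree, so you must truncate at $Q=\Theta(M)$. That gives $\Theta(M)$ slots and depth $\Theta(M\log M)$, which is exactly the logarithm the theorem is meant to remove.

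The missing idea is a \emph{growing} degree weight. Take $\xi=\Theta(M^{\vartheta})$ and $r=\Theta(M/\log M)$, and bound the weighted mass $v\xi^{p+1}/r^{p}$. It stays bounded only if $D-p+\vartheta(p+1)<0$, which requires $p>D$ strictly; your condition $2k\ge D$ is not enough. With that choice, $Q=\Theta(M/\log M)$ already gives $\xi^{-Q}=e^{-\Omega(M)}$ and $Nw=\Theta(M)$.

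\emph{Two smaller points.} Amplification is not the only nonlocal step: prefix arithmetic and control distribution are nonlocal too. Also, one-step oblivious amplification needs normalization exactly two. The paper arranges this through $(1-a)^r=1/2$ and contraction factors, and it proves a full-output error bound because the encoded operator is not unitary.
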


The constants $c_-$ and $c_+$ allow different block shapes while keeping their side lengths proportional to $M$.
In particular, the HHKL regions introduced in Section~\ref{sec:insight} satisfy the theorem with fixed choices of these constants.
Applying the block circuits in the decomposition order, and then composing the normalized time intervals, gives the following global bound.

\begin{theorem}[Simulation on a large lattice]\label{thm:global}
Let $H$ satisfy Eq.~\eqref{eq:model}, with $\mathcal T,M$ as in Eq.~\eqref{eq:globalM}. If every side of the lattice rectangle has length at least $c_1M$ for a sufficiently large constant $c_1$, then the simulation problem Eq.~\eqref{eq:embedding} has a nearest-neighbor solution with
\begin{equation}
 \begin{aligned}
 \text{circuit depth}\quad\dcnt&=O(\mathcal TM),\\
 \text{gate count}\quad\gcnt&=O(n\mathcal TM),\\
 \text{ancillary qubits}\quad\acnt&=O(n).
 \end{aligned}
 \label{eq:globalresult}
\end{equation}
The ancillary qubits have bounded spatial density and are reused throughout the circuit.
\end{theorem}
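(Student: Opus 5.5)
The plan is to reduce the theorem to \cref{thm:block} through the HHKL Lieb--Robinson decomposition, and then compose over the $\mathcal T$ normalized intervals. Fix $0<\tau\le1$ and the normalized Hamiltonian $\bar H$ of Eq.~\eqref{eq:normalizedrestriction}, whose local strength is at most one. Following HHKL, tile the lattice $\Lambda$ into blocks of side lengths in $[c_-M,c_+M]$. This tiling is possible because every side of $\Lambda$ has length at least $c_1M$, so the leftover strips can be absorbed into neighboring blocks without leaving the allowed range.

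From this tiling we obtain a constant number $L=L(D)$ of layers, each a product of block unitaries $\bar U_Y(\tau)$ or $\bar U_Y(\tau)^\dagger$ over disjoint regions $Y$. The relevant identity is the one-step Lieb--Robinson patching
\[
\bar U_{A\cup B\cup C}\approx \bar U_{A\cup B}\,\bar U_B^\dagger\,\bar U_{B\cup C},
\]
whose error is $O(|\partial B|\,e^{-\mu\,\dist(A,C)})$ for a buffer $B$ of width $\Theta(M)$. Applying this patching iteratively in each coordinate direction gives a decomposition with error
\[
O\bigl(n\,e^{-\mu' M}\bigr)\le \epsilon/(3\mathcal T)
\]
per interval, once $c_0$ is chosen large relative to $1/\mu'$. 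Each region in the decomposition is a rectangle with sides $\Theta(M)$, so it satisfies the hypotheses of \cref{thm:block}; the inverses $\bar U_Y(\tau)^\dagger=e^{+i\bar H_Y\tau}$ are handled by applying the theorem to $-\bar H_Y$.

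Next, replace each block unitary by the circuit $\mathcal U$ of \cref{thm:block}, each of which has error $e^{-M}$ in the sense of Eq.~\eqref{eq:blockerror}. These replacements compose cleanly because each circuit returns its ancillas to $\ket0$ up to the stated error. Concretely, I will use the telescoping bound
\[
\norm{\mathcal U_2\mathcal U_1\iota_0-\iota_0 V_2V_1}\le \norm{\mathcal U_2\iota_0-\iota_0V_2}+\norm{\mathcal U_1\iota_0-\iota_0V_1},
\]
which uses only that $\mathcal U_2$ is unitary and that the $V_i$ are unitaries on the data. Block circuits in the same layer act on disjoint enlarged regions, with the ancilla regions made disjoint by a constant coloring of the blocks into sublayers, so their errors add. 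The number of blocks per interval is $O(n/M^D)$, which gives a total error of
\[
\mathcal T\cdot O\!\bigl(n\,e^{-M}\bigr)+\epsilon/3\le\epsilon
\]
for $c_0$ large.

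For resources, each interval uses $O(1)$ layers. Each layer has depth $O(M)$ and contains $O(n/M^D)$ blocks, each with $O(M^{D}\cdot M)$ gates. This gives depth $O(M)$ and gate count $O(nM)$ per interval, hence depth $O(\mathcal TM)$ and gate count $O(n\mathcal TM)$ in total. Ancillas are reset to $\ket0$ after each block, coherently up to the accounted error, so they can be reused across layers and intervals; this gives $\acnt=O(n)$ with bounded density. The main obstacle is the geometry of the decomposition. The constant-factor enlarged regions of \cref{thm:block} must fit without overlap within each sublayer, and every region must keep aspect ratio within fixed $[c_-,c_+]$ near the lattice boundary. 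I would handle this by fixing the coloring count $3^D$ and choosing $c_1$ large enough that boundary blocks can be merged.
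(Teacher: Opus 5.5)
Your proposal is correct and follows essentially the paper's proof: HHKL layers with $\ell_{\rm buf}=\Theta(M)$ (Lemma~\ref{lem:spatial}), Theorem~\ref{thm:block} applied to every signed factor, the telescoping bound for composing calls that share ancillas without reset, and parallel execution within constantly many color classes whose enlarged rectangles are disjoint. The differences are minor: applying Theorem~\ref{thm:block} to $-\bar H_Y$ for the inverse factors is legitimate, because $-\bar H_Y$ satisfies the same model hypotheses, and it spares you the adjoint estimate of Eq.~\eqref{eq:inverseerror}; however, nothing justifies your fixed coloring count $3^D$, so replace it by the constant that the bounded-degree intersection graph of the enlarged rectangles supplies, which depends on the enlargement factor in Theorem~\ref{thm:block}.
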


The side-length condition ensures that the HHKL regions and the auxiliary sites used to process them fit the available geometry.
Section~\ref{sec:error} proves both upper bounds, keeping track of inverse calls and the error accumulated when ancillas are reused.

\paragraph{A depth lower bound for uniform spin dynamics.}
For a lower bound on the class in Eq.~\eqref{eq:model}, it suffices to exhibit an explicit family whose evolution requires large depth.
We use one qubit at each site of $\Lambda=\{1,\ldots,\ell_{\rm sys}\}^D$, so $n=\ell_{\rm sys}^D$, with uniform nearest-neighbor XXZ interactions.
Here $X_x,Y_x,Z_x$ are Pauli operators at site $x$; the $X\!X$ and $Y\!Y$ couplings are equal, while a fixed anisotropy $\Delta_{\rm XXZ}$ determines the $Z\!Z$ coupling.
For a fixed bound $\Delta_0\ge0$, this Hamiltonian and its normalization are
\begin{equation}
 \begin{aligned}
 H_{\rm XXZ}&:=J_{\rm hop}\sum_{\langle x,y\rangle}
 \left(\frac{X_xX_y+Y_xY_y}{2}
       +\Delta_{\rm XXZ}Z_xZ_y\right),\\
 J_{\rm hop}&:=\frac{J}{2D(1+\Delta_0)},\qquad
 |\Delta_{\rm XXZ}|\le\Delta_0.
 \end{aligned}
 \label{eq:lb-hamiltonian}
\end{equation}
The sum is over unordered nearest-neighbor edges with open boundaries.
Since each term has norm $J_{\rm hop}(1+|\Delta_{\rm XXZ}|)$, the chosen normalization ensures that its local strength is at most $J$.
Thus this family belongs to the upper-bound model, and we use the same data placement, initialized ancillas, geometric gates, and complete-output error criterion.

\begin{theorem}[Worst-case geometric depth lower bound]\label{thm:depth-lower-bound}
Fix $D\ge1$, $\Delta_0\ge0$, and the geometric circuit model above. There are constants $c_{\rm time},c_{\rm lb},\epsilon_{\rm lb}>0$ and $\ell_0$, depending only on these fixed choices, such that the following holds. If
\begin{equation}
 \ell_{\rm sys}\ge\ell_0,\qquad
 2D(1+\Delta_0)\le J|T|\le c_{\rm time}\ell_{\rm sys},
 \qquad 0<\epsilon\le\epsilon_{\rm lb},
 \label{eq:lb-regime}
\end{equation}
then every circuit satisfying Eq.~\eqref{eq:embedding} for $H=H_{\rm XXZ}$ has depth
\begin{equation}
 \dcnt\ge c_{\rm lb}\min\!\left\{
 \ell_{\rm sys},\quad
 \mathcal T+
 \frac{\log(n/\epsilon)}
 {\log\!\bigl(e+\log(n/\epsilon)/\mathcal T\bigr)}
 \right\}.
 \label{eq:depth-lower-bound}
\end{equation}
The conclusion is uniform over $|\Delta_{\rm XXZ}|\le\Delta_0$ and permits all ancillas and fixed-scale refinements allowed by the model.
\end{theorem}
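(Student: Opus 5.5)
We prove the two terms of the minimum separately: a light-cone (transport) bound giving $\dcnt\gtrsim\min\{\ell_{\rm sys},\mathcal T\}$, and a precision bound giving $\dcnt\gtrsim\min\{\ell_{\rm sys},L\}$ with $L:=\log(n/\epsilon)/\log(e+\log(n/\epsilon)/\mathcal T)$. Since $\max\{a,b\}\ge(a+b)/2$, the two bounds together give Eq.~\eqref{eq:depth-lower-bound} after adjusting $c_{\rm lb}$. Both bounds use the same observation. A nearest-neighbor circuit of depth $\dcnt$ has a causal cone of radius at most $c\,\dcnt$ in the refined lattice, with $c$ depending only on the refinement. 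Hence, if a site $x$ initially carries a flipped spin, the reduced output state at a site $y$ with $\dist(x,y)>c\,\dcnt$ does not depend on whether $x$ was flipped. This holds for the complete output including ancillas, since $\iota_0$ fixes the ancillas and the cone argument applies to the entire unitary $\mathcal U$. Throughout we work in the one-excitation sector $\ket{x}:=X_x\ket{0\cdots0}$. Because $H_{\rm XXZ}$ conserves total $Z$, on this sector it acts as $J_{\rm hop}A+{\rm const}$, where $A$ is the adjacency matrix of the open grid. The $ZZ$ term contributes only a diagonal boundary correction, which we treat as part of a Schr\"odinger operator on the grid.

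\emph{Transport term.} Take $x$ at the centre of the grid and $y$ at distance $r=\lfloor c'J|T|\rfloor\le\ell_{\rm sys}/4$; the hypothesis $J|T|\le c_{\rm time}\ell_{\rm sys}$ guarantees such a $y$ exists. Ballistic spreading of the continuous-time quantum walk (Bessel asymptotics in $D=1$, and products of one-dimensional walks up to the diagonal potential in general) gives $|\bra{y}e^{-iHT}\ket{x}|^2\ge p_0>0$ for a suitable $y$ at distance $\Theta(J|T|)$, with $p_0$ depending only on $D,\Delta_0$. We then compare the probability that $Z_y=-1$ under the input $\ket{x}$ with the same probability under the all-zero input. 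Under $\mathcal U$ these two probabilities agree if $c\,\dcnt<r$. Under the exact dynamics they differ by at least $p_0$, because the vacuum is stationary. By Eq.~\eqref{eq:embedding} they must agree with the exact values up to $O(\epsilon)$, which is a contradiction once $\epsilon_{\rm lb}<p_0/4$. This gives $\dcnt\gtrsim\mathcal T$. The lower end of the time regime ensures $\mathcal T\asymp J|T|$.

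\emph{Precision term.} Here we use weak signals beyond the true light cone. For distance $r\gg J|T|$ the amplitude is $|\bra{y}e^{-iHT}\ket{x}|\approx (eJ_{\rm hop}|T|/2r)^{r}$ up to polynomial factors, so the arrival probability is $q(r)=\exp(-\Theta(r\log(r/\mathcal T)))$. A single pair would only force $q(c\,\dcnt)\lesssim\epsilon$, which gives $\log(1/\epsilon)$ without $\log n$. To gain the $\log n$, we place $m\asymp n/r^D$ disjoint well-separated regions. Each region contains a source $x_k$ and a detector $y_k$ at distance $r$. We take the input $\bigotimes_k(\ket{0}+\ket{1})_{x_k}/\sqrt2$, or a product input with independent excitations, and as test the observable $\Pi$ that at least one detector fires. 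Under a depth-$\dcnt$ circuit with $c\,\dcnt<r$, the detector outcomes are independent of the sources, so $\Pi$ has the same distribution as for the vacuum-like reference. Under the ideal dynamics, independence across regions would give a firing probability of order $1-(1-q(r)/2)^m\approx \min\{1,mq(r)\}$. Requiring this to exceed $C\epsilon$ forces $\dcnt\ge r/c$ with $r$ solving $r\log(r/\mathcal T)\asymp\log(n/\epsilon)$, i.e.\ $r\asymp L$. The cap $\min\{\ell_{\rm sys},\cdot\}$ handles the case where $r$ would exceed the system size.

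\emph{Main obstacle.} The hard step is justifying the independence heuristic for the true Hamiltonian. In the multi-excitation sector the interacting XXZ evolution couples the regions, and the exponentially small signal $mq(r)$ must dominate the comparison error. My first attempt will keep at most one excitation per region and truncate the Dyson/path expansion of $e^{-iHT}$. The amplitude for an excitation to leave its region of radius $\sim2r$ is then $\ll q(r)$ by the same combinatorial path count, since it requires a longer path. I would also choose a single-excitation input with a uniformly random source region, $\frac{1}{\sqrt m}\sum_k\ket{x_k}$, in place of product states. This keeps the whole analysis inside the free one-excitation walk, where interactions only enter through the diagonal term. The comparison is then exact linear algebra, and the probability that the detector matching the excited region fires is $\ge q(r)$. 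Using the coherent superposition, I expect the distinguishing quantity to scale like $\sqrt{m\,q(r)}$ via amplitude interference against the vacuum branch; checking that this yields exactly the $\log(n/\epsilon)$ scaling, rather than a weaker one, is where I expect the real work to lie.
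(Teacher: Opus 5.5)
You have the right skeleton: the causal cone tested against the stationary vacuum, many separated probes whose weak arrival events add to a union signal of order $\min\{1,mq(r)\}$, and inversion at a distance just outside the cone. However, the step you flag as the main obstacle is the heart of the proof, and the route you lean toward cannot close it.

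\textbf{The single-excitation input loses the factor $m$.} With $m^{-1/2}\sum_k\ket{x_k}$ the excitation lives in one probe, so its probability of reaching $\bigcup_kB_k$ is the average $m^{-1}\sum_k\norm{P_{B_k}\psi_k}^2\approx q(r)$. Superposing with the vacuum only turns this into a trace distance of order $\sqrt{q(r)}$ between the true reduced receiver state and the vacuum one, and the separated probes do not interfere. Either way $m$ drops out: you recover $\log(1/\epsilon)$ but not $\log n$, which is vacuous at fixed $\epsilon$. The expected $\sqrt{mq(r)}$ scaling does not occur.

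\textbf{What the product-input route needs.} The factor $m$ requires your product input with one excitation per probe, hence the interacting multi-excitation XXZ evolution. That in turn requires a comparison whose error per probe is a small fraction of the probability $q(r)$. An operator-norm comparison charges $\theta_{\rm hop}$ times the number of cut edges and is hopeless. The ingredients you are missing are:
\begin{itemize}
\item A reference Hamiltonian that deletes only the crossing terms $v_{xy}$ and keeps the one-body terms. Each interior cube is then a free walk up to a phase, and the reference output factorizes with independent events.
\item The state-dependent Duhamel bound $\norm{(e^{-iTH_\circ}-e^{-iTH_{\rm ref}})\ket S}\le\int_0^{|T|}\sum_{\rm cut}\norm{v_{xy}e^{-i\sigma uH_{\rm ref}}\ket S}\,du$. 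Here the interacting propagator disappears by unitarity, so colliding excitations never need to be controlled.
\item The estimate $\norm{v_{xy}\psi}\le J_{\rm hop}\max\{1,4\Delta_0\}\,\langle\nu_x+\nu_y\rangle_\psi^{1/2}$, with the reference boundary occupation bounded by a factorial tail.
\item A buffer $b_{\rm pr}=A_{\rm buf}\ell_{\rm pr}$ with a fixed large $A_{\rm buf}$. Rigorous bounds then give $\eta_{\rm cut}\lesssim\theta_{\rm hop}b_{\rm pr}^{D-1}F_{\rm tail}(b_{\rm pr},2D\theta_{\rm hop})\le p_*/100$ uniformly. The proven inequality $\mathrm J_\ell(2\theta_{\rm hop})\ge(\theta_{\rm hop}/2\ell)^\ell$ for $\ell\ge2\theta_{\rm hop}$ replaces your heuristic asymptotics.
\item Using only $k=\min\{m,\lfloor1/p_*\rfloor\}$ probes. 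The vector error grows like $k\eta_{\rm cut}$ while the union signal saturates at one, so with all $m\gg1/p_*$ probes the comparison error would swamp the signal.
\end{itemize}

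\textbf{The transport term.} A single detector site cannot supply a constant $p_0$. The one-dimensional amplitude satisfies $\mathrm J_y(2\theta_{\rm hop})=O(\theta_{\rm hop}^{-1/3})$ uniformly in $y$, so single-site arrival probabilities decay polynomially. Hence no $p_0$ depending only on $D,\Delta_0$ exists once $J|T|$ ranges up to $c_{\rm time}\ell_{\rm sys}$. Use instead the receiver slab $\{x:|x_1-o_1|\ge\theta_{\rm hop}\}$, whose arrival probability is bounded below uniformly. The paper gets $1/8$ from Paley--Zygmund applied to the second and fourth moments of the walk displacement, and controls the finite lattice against $\mathbb Z^D$ by the exponential-series tail.

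With these changes your plan of combining the two bounds works. The probe distance should be $\max\{2\theta_{\rm hop},a_{\rm cone}+1\}$, so that the arrival lower bound applies.
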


Although the proof uses excitation transport to distinguish the exact dynamics from a shallow circuit, the circuit itself need not preserve excitation number.
It may use the full classical description of the Hamiltonian and any gates allowed by the geometric model.

\paragraph{Comparison of the bounds.}
For the fixed-time comparison, we take $0<\epsilon\le\epsilon_{\rm lb}$ and a normalized time $J|T|\ge2D(1+\Delta_0)$ independent of $n$.
Both results apply when $\ell_{\rm sys}\ge C\log(n/\epsilon)$ for a sufficiently large $C$, allowed to depend on this fixed time.
In that common regime, their depth bounds are
\begin{equation}
 \Omega\!\left(\frac{\log(n/\epsilon)}
 {\log(e+\log(n/\epsilon))}\right)
 \ \le\ \text{worst-case simulation depth}
 \ \le\ O(\log(n/\epsilon)).
 \label{eq:fixedtimecomparison}
\end{equation}
When time grows with the system size, the comparison retains a different form.
With $m_{\epsilon}:=\log(n/\epsilon)$, we denote the worst-case simulation depth over the Hamiltonian class on the same cube by $\dcnt_{\rm worst}$.
Under the shared hypotheses, we obtain
\begin{equation}
 \Omega\!\left(\min\!\left\{\ell_{\rm sys},\,
 \mathcal T+\frac{m_{\epsilon}}{\log(e+m_{\epsilon}/\mathcal T)}\right\}\right)
 \ \le\ \dcnt_{\rm worst}
 \ \le\ O\!\left(\mathcal T\log(2+n\mathcal T/\epsilon)\right).
 \label{eq:joint-depth-comparison}
\end{equation}
Here the common regime requires $\ell_{\rm sys}\ge C\log(2+n\mathcal T/\epsilon)$ and the time and error conditions of Eq.~\eqref{eq:lb-regime}, with sufficiently large fixed $C$.
Equation~\eqref{eq:fixedtimecomparison} leaves a doubly logarithmic factor at fixed time.
In contrast, Eq.~\eqref{eq:joint-depth-comparison} has additive time and precision terms on the lower-bound side, rather than the multiplicative logarithmic overhead of the upper bound.

\subsection{Why localization makes efficient correction possible}\label{sec:insight}
The main difficulty in the block theorem is to obtain exponential accuracy without spending more than linear depth on either the data or their controls.
Our construction separates these two costs but chooses their parameters together.
We first explain why the HHKL block size makes a fixed-order formula an effective starting point, and then describe a coefficient representation whose controls fit the remaining depth.
The lower bound addresses the other side of this accuracy requirement: sufficiently weak propagation can still be detected when many regions contribute to one global test.

\paragraph{Spatial reduction to logarithmic blocks.}
The HHKL decomposition reduces each normalized time interval to evolutions on small rectangles.
In the version used here, layer $a$ contains a family $\mathcal B_a$ of disjoint regions, and the sign $\sigma_{a,Y}\in\{-1,1\}$ specifies whether region $Y\in\mathcal B_a$ is evolved forward or backward.
Regions in different layers may overlap; the positive integer $\ell_{\rm buf}$ controls their overlap widths, and $L_{\rm H}$ denotes the number of layers.

\begin{lemma}[HHKL spatial decomposition; {\cite[Lemma~6 and proof of Theorem~1]{HHKL}}]\label{lem:spatial}
For the normalized local model, there are constants $C,\mu>0$ and an integer $\ell_0$ such that the following holds. Suppose $\ell_{\rm buf}\ge\ell_0$ and every side of $\Lambda$ is at least $C\ell_{\rm buf}$. For $|t|\le1$, there are $L_{\rm H}=O(1)$ layers as above for which the product
\begin{equation}
 \mathcal V(t):=\mathcal V_{L_{\rm H}}(t)\cdots\mathcal V_1(t),\qquad
 \mathcal V_a(t):=\prod_{Y\in\mathcal B_a}\bar U_Y(\sigma_{a,Y}t),
 \label{eq:hhkl-decomposition}
\end{equation}
obeys
\begin{equation}
 \norm{\bar U_\Lambda(t)-\mathcal V(t)}
 \le Cn e^{-\mu\ell_{\rm buf}}.
 \label{eq:hhkl-error}
\end{equation}
Every region $Y$ has side lengths in $[\ell_{\rm buf},C\ell_{\rm buf}]$, and
$\sum_{a=1}^{L_{\rm H}}\sum_{Y\in\mathcal B_a}|Y|=O(n)$.
The regions, signs, and layer order can be computed classically from the lattice and $\ell_{\rm buf}$.
\end{lemma}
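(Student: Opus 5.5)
The plan is to reproduce the HHKL argument: a Lieb--Robinson bound to control boundary errors, then a recursive ``split and patch'' identity that removes one cut direction at a time. The key identity is Osborne's patching lemma. For regions $A$, $B$, $C$ with $B$ separating $A$ from $C$, and normalized $|t|\le1$,
\[
 \bar U_{ABC}(t)\approx \bar U_{AB}(t)\,\bar U_{B}(-t)\,\bar U_{BC}(t).
\]
The error is at most the integrated norm of the interaction-picture term. Write $V$ for the sum of the terms $h_Z$ that cross between $AB$ and $BC$. The error is then controlled by
\[
 \int_0^{|t|}\norm{[\bar U_{BC}(s)^\dagger V\bar U_{BC}(s),\,\bar H_{AB}-\bar H_B]}\,ds .
\]
The Lieb--Robinson bound for the finite-range model with local strength at most one makes this $O(|\partial|\,e^{-\mu\,\dist})$, where $\dist$ is the distance between the crossing terms and $A$, and $|\partial|$ is the area of the cut. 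I would first state and prove this patching estimate carefully, including the bookkeeping for scalar terms and the convention $H_Y=\sum_{Z\subseteq Y}h_Z$.

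Next, tile $\Lambda$ along coordinate $1$ into slabs of width $\Theta(\ell_{\rm buf})$ and split the global evolution into alternating forward ``large'' slabs and backward ``overlap'' slabs. Apply the patching identity once per cut, with overlaps of width $\ell_{\rm buf}$. Because the slabs in one layer are disjoint, their unitaries commute and can be applied in parallel. This gives three layers, with error $\sum_{\text{cuts}}O(|\partial| e^{-\mu\ell_{\rm buf}})=O(ne^{-\mu\ell_{\rm buf}})$.

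Then recurse. Each slab evolution $\bar U_Y(\pm t)$ is itself a restricted evolution of the same normalized model, so the same cutting can be applied along coordinate $2$, and so on through coordinate $D$. Backward evolutions are handled by the same lemma with $t\mapsto-t$. After $D$ rounds, this produces $L_{\rm H}\le3^D=O(1)$ layers of rectangles with side lengths in $[\ell_{\rm buf},C\ell_{\rm buf}]$. The side condition on $\Lambda$ is what guarantees that every slab in the cutting can be given such a width. Total volume grows by at most a constant factor per round, giving $\sum|Y|=O(n)$. Errors add by the triangle inequality because all factors are unitary, and the sum of boundary areas over all rounds is $O(n)$. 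Choosing $\mu$ to absorb constants yields Eq.~\eqref{eq:hhkl-error}. The regions, signs, and layer order are explicit functions of the lattice and $\ell_{\rm buf}$.

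I expect the main obstacle to be the Lieb--Robinson estimate in the patching step. The difficulty is making the error bound depend only on the cut area, with a decay rate uniform over restricted Hamiltonians and over the recursion. My first approach would be the standard commutator-bound series for finite-range interactions, applied to $\bar H_{BC}$. This gives $\norm{[\tau_s(O_X),O_Y]}\le 2\norm{O_X}\norm{O_Y}|X|e^{-\mu(\dist(X,Y)-v|s|)}$, and with $\ell_{\rm buf}\ge\ell_0$ large the velocity term is absorbed.
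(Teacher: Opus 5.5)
Your proposal is correct, but its per-direction step differs from the paper's. You follow HHKL's cut-by-cut route, applying the three-region patching estimate $U_{ABC}\approx U_{AB}U_B^\dagger U_{BC}$ once per cut. The paper (Appendix~\ref{app:spatial-proof}) instead handles all cuts in one direction at once. It works in the interaction picture of the decoupled strip Hamiltonian $H_0=\sum_iH_{B_i}$ and uses Lemma~\ref{lem:restriction} to replace each $U_0(s)^\dagger V_iU_0(s)$ by its conjugation inside the slab $P_i$. The localized generator is exactly that of $E_i=U_{i,0}^\dagger U_{P_i}$, so one Duhamel bound gives the layers $(\prod_iU_{B_i})(\prod_iU_{P_i^-}^\dagger U_{P_i^+}^\dagger)(\prod_iU_{P_i})$ directly.

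That route makes the three-layer form automatic. Yours reuses a single modular lemma, as in the cited source, but you must supply the ordering your sketch leaves implicit. If you patch every cut with the same orientation in a sweep, you get a staircase $U_{A_1B_1}U_{B_1}^\dagger U_{B_1A_2B_2}U_{B_2}^\dagger U_{B_2A_3B_3}\cdots$. Its depth grows with the number of cuts, because consecutive large blocks share an overlap. The fix is to alternate the form above with the mirror form $U_{BC}U_B^\dagger U_{AB}$ at successive cuts, each time patching the remaining large factor. The even blocks then fall in the first executed layer, the overlaps in the second, and the odd blocks in the third, and disjointness lets you commute all factors into place.

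For the patching estimate itself, differentiate $U_{AB}^\dagger U_{ABC}U_{BC}^\dagger U_B$ and apply Duhamel's formula once more. This gives your commutator bound, up to a harmless factor $|t|$, with $V:=\bar H_{BC}-\bar H_B$. This $V$ contains all terms meeting $C$, including those lying inside $C$. Those terms are Lieb--Robinson suppressed, so the cut-area estimate survives after a slight reduction of $\mu$.
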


The inverse evolutions on overlapping regions correct the boundary error made by evolving the restricted Hamiltonians separately.
We use a rectangular, short-time form of the HHKL decomposition, for which the fixed interaction range and the rounding of strip widths change only the constants.
For completeness, Appendix~\ref{app:spatial-proof} derives this form from the restriction estimate for local dynamics and verifies that its regions have the sizes required here.

The auxiliary registers require a constant-factor enlargement of each data rectangle.
To accommodate these enlargements within a constant number of layers, we divide the rectangles into classes whose enlargements are disjoint.
Indeed, each original rectangle has volume at least $\ell_{\rm buf}^D$ and bounded aspect ratio, so only a constant number can meet a box of side $O(\ell_{\rm buf})$.
The intersection graph of the enlarged rectangles consequently has bounded degree and can be colored with a constant number of colors.
Executing one color at a time preserves the ideal layer product, since its factors have disjoint data supports.

We choose $\ell_{\rm buf}=\lceil c_{\rm buf}M\rceil$, with fixed $\mu c_{\rm buf}\ge1$. For any region $Y$ in Eq.~\eqref{eq:hhkl-decomposition}, our block notation is
\begin{equation}
 \Lambda_{\rm blk}:=Y,\qquad H_{\rm blk}:=\bar H_Y,
 \qquad v:=|Y|=\Theta(M^D).
 \label{eq:hhkl-blocknotation}
\end{equation}
These block dimensions identify the accuracy and depth that our local simulator must attain: error $e^{-M}$ in depth $O(M)$ suffices.
Theorem~\ref{thm:block} gives such a circuit for $e^{-iH_{\rm blk}\tau}$.
For an inverse factor, we use the adjoint of the complete circuit; Section~\ref{sec:global} shows that this preserves the error guarantee even though ancillary cleanup after amplification is only approximate.

\paragraph{The depth target.}
A block has diameter $\Theta(M)$, so its controls can traverse it in depth $O(M)$.
To achieve error $e^{-M}$ within this depth, the number of sequentially processed labels must be smaller than $M$.
High-precision simulation suggests the appropriate scale: when the integrated sum of absolute generator coefficients is bounded, the factorial tail of a Taylor or time-ordered expansion decays as $e^{-\Theta(k\log k)}$ at truncation order $k$.
Hence $O(M/\log M)$ expansion positions suffice for error $e^{-M}$~\cite{Taylor,Dyson,LowWiebe}, as also reflected in near-optimal query bounds~\cite{BCK}.
This would leave $O(\log M)$ depth for processing each position, provided the traversal of the block is paid for only once rather than once per position.

Two obstacles prevent a direct application of this observation.
The block Hamiltonian has absolute coefficient sum $O(v)$ rather than $O(1)$, and the controls of an arbitrary expansion need not admit concurrent distribution.
A fixed-order product formula addresses the first obstacle by leaving a much smaller residual to simulate.
The independent-label representation and reversible distribution below address the second.

\paragraph{The block volume determines the order.}
Our starting point is a symmetric product formula $C(h)$ of fixed even order $p$, applied in $r$ equal microsteps of duration $h:=\tau/r$, where $0<\tau\le1$.
Local commutator counting gives accumulated operator-norm error $O(v/r^p)$~\cite[Theorem~6]{Trotter}.
That estimate suggests the right dependence on the block volume, but it is not by itself a bound on the coefficient sum required by a linear combination of unitaries (LCU).
For this reason, Lemma~\ref{lem:residual} constructs an explicit local expansion of the residual generator.
Its coefficient count is likewise extensive in $v$, and the product-formula order conditions then give the coefficient estimate used in our correction.

After the low-order coefficients have canceled, the residual generator begins at degree $p$.
Integration raises the degree by one, so each insertion contributes at least $p+1$ powers of $h$.
To bound the tail, we attach a weight $\xi^d$ to insertion degree $d$, with $\xi>1$.
The weighted absolute coefficient sum is then controlled by
\begin{equation}
 \frac{v\xi^{p+1}}{r^p}.
 \label{eq:orderbalance}
\end{equation}
We choose $r=\Theta(M/\log M)$ and $\xi=\Theta(M^\vartheta)$. On an HHKL block, this gives
\begin{equation}
 \frac{v\xi^{p+1}}{r^p}
 =O\!\left(M^{D-p+\vartheta(p+1)}(\log M)^p\right).
 \label{eq:overviewbalance}
\end{equation}
Choosing a fixed even order $p>D$ leaves a positive range $0<\vartheta<(p-D)/(p+1)$ in which this quantity remains bounded.
At the same time, a degree threshold $Q=\Theta(M/\log M)$ gives $\xi^{-Q}=e^{-\Omega(M)}$.
The product formula and the correction therefore serve different purposes: the former makes the unweighted coefficient cost $v/r^p=O(M^{D-p}(\log M)^p)=o(1)$, while the latter supplies exponential accuracy.
We use the smallest even order greater than $D$.
The strict inequality is needed for the growing weight $\xi$, whereas keeping the order fixed prevents the number of Suzuki stages~\cite{Suzuki} and the residual support radius from growing with precision.

This is where the spatial decomposition is essential.
On the full lattice, the same argument would involve $n/r^p$, which at fixed time and constant error grows as $n/(\log n/\log\log n)^p$ for our choice of $r$.
Replacing $n$ by the HHKL block volume $v=\Theta(M^D)$ makes the exponent in Eq.~\eqref{eq:overviewbalance} negative.
Moreover, the controls now travel only distance $O(M)$.
The same localization step thus makes both the coefficient estimate and the communication cost compatible with linear block depth.

\paragraph{Corrections remain local in time and space.}
For one step, we use
\begin{equation}
 E(s):=C(s)^\dagger e^{-iH_{\rm blk}s},\qquad
 e^{-iH_{\rm blk}h}=C(h)E(h).
 \label{eq:overviewcorrection}
\end{equation}
With this convention, $E(h)$ acts before $C(h)$.
Its generator involves conjugation only by the fixed-depth formula and its prefixes, and hence has a finite expansion in bounded-radius Pauli strings with scalar exponential coefficients.
Increasing the precision changes the scalar degrees and the number of insertions, but not the support of an elementary insertion.

To preserve this locality over all microsteps, we expand the interleaved product $[C(h)E(h)]^r$ directly as one LCU.
Every selected Pauli then comes from the same local residual table, and a single amplification step completes the block simulation.
Collecting the corrections at one end would instead conjugate them by an increasing number of formula steps, losing the fixed-depth support guarantee; Remark~\ref{rem:correction-order} makes this distinction precise.

\paragraph{A coefficient representation adapted to the local circuit.}
The remaining task is to represent this interleaved expansion without making state preparation or selection expensive.
Our representation uses $N=\Theta(r+Q)$ ordered slots, whose independent labels choose an insertion or a formula step, a local address, and a scalar degree.
Since the labels are independent, PREPARE consists only of single-qubit gates.
The correlations required by the expansion enter instead through reversible prefix arithmetic and one-qubit phase averages in SELECT.
These operations recover every coefficient through degree $Q$ with LCU normalization two; a separate tail estimate bounds the additional high-degree terms introduced by the product distribution.

Each record occupies $w=\Theta(\log M)$ qubits, shared by a group of $\Theta(w)$ sites so that ancillary density remains bounded.
A local use of the record takes depth $O(w)$, and successive records travel through the distribution circuit concurrently.
The distribution and local operations consequently have depth $O(M+Nw)$, rather than $O(MN)$.
The source coefficient evaluation adds $O(Qw)$ depth, where the chosen parameters satisfy $Q=\Theta(N)$.
Since $Nw=\Theta(M)$, coefficient processing, local selection, distribution, and uncomputation all fit within the required depth.
Section~\ref{sec:architecture} describes the layout and the action of the complete PREPARE--SELECT circuit before we derive its coefficients.

\paragraph{The lower bound uses positive propagation signals.}
For the lower bound, the useful distinction is between an exact circuit cone and a small propagation tail, rather than between two propagation velocities~\cite{BHV,NS}.
We compare the vacuum with an input containing separated excitation sources, and ask whether at least one excitation reaches a receiver outside the circuit cone.
If one probe has arrival probability at least $p_0$, then $K$ independent reference probes produce a signal of order $\min\{1,Kp_0\}$.
To retain this gain under the interacting XXZ evolution, however, an absolute comparison bound is not enough: the error per probe must be a small fraction of $p_0$.
A state-dependent boundary integral gives precisely this estimate with buffers of fixed size relative to the propagation distance, allowing Lemma~\ref{lem:separated-probes} to transfer the reference signal to the physical system.
Inverting the resulting error bound yields the joint size--precision term.
A separate second-moment argument, with the fourth moment controlling the spread of the walk, supplies the linear time term~\cite{Konno05,PaleyZygmund32}.

\subsection{Relation to earlier work}
The relevant earlier results address three different parts of the simulation problem: spatial decomposition, high-precision correction, and the implementation of coherent controls.
HHKL supplies the first part by reducing global evolution to high-precision simulations on small regions~\cite[Theorem~1 and Lemma~6]{HHKL}.
We retain that reduction in Lemma~\ref{lem:spatial} and improve the circuit used within a region.
For this improvement, local commutator estimates~\cite{ChildsSu,Trotter} identify the favorable volume dependence, but the desired depth also requires an explicit coefficient expansion and a local circuit that implements it.

For the second part, Zeng, Sun, Jiang, and Zhao combine product formulas with compensating LCUs and analyze both sampled and coherent implementations~\cite{Compensation}.
Our contribution concerns the geometric implementation of this correction: corrections remain interleaved with the formula, elementary insertions have bounded support, and independent labels match the desired coefficients through a chosen degree.
Bounding the additional terms in this representation lets us maintain constant LCU normalization while implementing SELECT at bounded ancillary density.
Multiproduct formulas provide a different route to high precision by combining complete formula sequences~\cite{ChildsWiebe,MPF}.
Their conditioning is controlled by Low, Kliuchnikov, and Wiebe~\cite{MPF}, and their commutator dependence is analyzed by Aftab, An, and Trivisa and by Mizuta~\cite{Aftab,Mizuta}.
These advances explain how commutators can improve high-precision simulation; the further requirement here is to realize the resulting coherent controls by a circuit with the claimed geometric depth.

The third part draws on coherent fan-out, distributed computation, and routing.
Coherent fan-out allows several gates to depend on the same basis control value without changing that value~\cite{MooreNilsson}, while Beals et al. give distributed circuit emulation using $O(\log N)$ qubits at each of $N$ processors~\cite[Theorem~2]{Beals}.
Here a logarithmic-width register is shared across a group of sites, because the number of qubits at each site must remain bounded.
Routing by matchings~\cite{Routing,CSU}, swap networks~\cite{OGorman}, and routing-based analyses of compilation overhead~\cite{YuanZhang} provide ways to implement prescribed permutations and interactions.
Using fixed permutations, we distribute successive records concurrently, apply their local operations in the required order, and return the correlated copies for erasure.
Section~\ref{sec:compiler-interface} isolates the conditions under which the same SELECT circuit applies to other ordered products.

The distinction between an oracle and its implementation also separates our result from query bounds.
Quantum-walk simulation~\cite{BerryChilds}, quantum signal processing~\cite{QSP}, and qubitization~\cite{Qubitization} give strong guarantees when matrix access or preparation and selection are supplied as oracles, and parallel access permits additional depth--space trade-offs~\cite{Parallel}.
Our bounds instead charge for these operations as nearest-neighbor circuits.
A corresponding distinction is important for lower bounds.
Geometric causal arguments constrain long-range correlations and local circuit implementations~\cite{BHV,Rosenbaum}; HHKL establishes a spacetime gate-count lower bound for time-dependent lattice Hamiltonians~\cite[Theorem~2]{HHKL}; and Chia et al. prove unconditional parallel-query bounds as well as circuit-depth bounds under additional assumptions~\cite[Theorems~2--4]{ParallelLower}.
The witness used here is explicit, time-independent, and uniform, and gives an unconditional geometric depth bound involving size, time, and error in a single inequality.

There is also a relation to the coefficient-sensitive gate and classical-query lower bounds of Zlokapa, Allen, and Harrow, which accumulate contributions from independent components~\cite{Zlokapa26}.
In our setting, independence holds only for a reference evolution, since neighboring probe regions remain coupled by the physical XXZ Hamiltonian.
We therefore need a boundary comparison whose error is measured relative to each small arrival probability.
Starting from the usual evolution-comparison identity, as used in quasi-local analysis~\cite{NSY19}, we derive a state-dependent estimate with this relative accuracy.
That estimate is what allows the number of probes to grow without losing the precision dependence.

\paragraph{Organization.}
Section~\ref{sec:architecture} specifies the block layout and the action required of its control circuit.
With that target fixed, Section~\ref{sec:expansion} derives the finite correction and Section~\ref{sec:local} implements it locally.
Section~\ref{sec:error} combines their estimates through amplification and the HHKL layers to prove the upper bounds; Section~\ref{sec:lower} gives the separate XXZ lower-bound argument.
The appendices supply the geometric transport proofs, elementary gates and arithmetic, synthesis precision, and the stated form of the spatial decomposition.

\FloatBarrier
\section{Block simulation with local control}\label{sec:architecture}
The block reduction leaves us with the task of implementing $U:=e^{-iH_{\rm blk}\tau}$ for a normalized Hamiltonian on a region of diameter $\Theta(M)$.
Before choosing its correction coefficients, we need to understand what a geometrically local selection circuit can implement within depth $O(M)$.
The difficulty is already present for a single selected local operation: its address has logarithmic width, whereas a site can hold only a constant number of qubits.
Our solution is to share the address among neighboring sites and decode it near the selected interaction.
For the layout, we enclose the rectangular block in a cube of side $\ell=\Theta(M)$, increasing its volume by only a constant factor.
The added data wires are idle; when the circuit is applied to the original rectangle, they start in zero and count as ancillas.
The precise padding and widths will be fixed in Section~\ref{sec:parameters}.

The selection is specified by $N$ ordered control records, each stored in a $w$-qubit register with unused bits set to zero; we call such a register a word.
Eventually, $N=\Theta(M/\log M)$ and $w=\Theta(\log M)$, so processing a record in depth $O(w)$ is consistent with the desired total depth.
Distributing the records must not multiply this cost by the diameter of the block, and it must preserve the order of operations that do not commute.
These two requirements determine the construction below.

\subsection{Shared control and reversible distribution}\label{sec:layout-overview}
To retain the controls needed for uncomputation, the original records remain on an auxiliary path of length $O(Nw+\ell)$ in a constant-factor enlargement of the block.
We refer to this path as the source region.
At the data, each group $\Omega_g$ of $\Theta(w)$ neighboring sites shares a width-$w$ register $\reg R_g$ and local work $\reg W_g$; these groups are the subblocks.
This grouping provides enough space for the address at bounded qubit density, while a reversible decoder converts each arriving binary address into a selector next to the addressed term.
The selected data gate can therefore act locally.

The local gates use each record only as a computational-basis control and preserve its basis value, which is the condition needed for reversible coherent fan-out~\cite{MooreNilsson}.
For a single field, a CNOT from the original qubit to a zero target gives
\begin{equation}
 \CX\left(\sum_{y=0}^1\alpha_y\ket y\ket0\right)
 =\sum_{y=0}^1\alpha_y\ket y\ket y.
 \label{eq:overviewcopy}
\end{equation}
The two qubits in Eq.~\eqref{eq:overviewcopy} carry correlated basis values, rather than independent copies of an unknown state.
Even after controlled uses introduce phases or entanglement with the data, these values still agree.
Returning the auxiliary qubit and reversing the CNOT therefore clears it without undoing the data operation.
The same principle underlies coherent fan-out and distributed quantum-memory circuits~\cite{MooreNilsson,Beals}.
In the present layout, every record visits every subblock on a fixed schedule; the address controls only the local decoding, not the physical route.

For $D\ge2$, a tree of auxiliary paths carries the records to the subblocks, while separate return paths allow earlier copies to be erased as later records move forward.
Successive records occupy different positions, so their travel overlaps: after the initial traversal, a new record reaches every subblock at regular intervals.
Thus the total depth includes one traversal time plus the time for processing the records, rather than one traversal for each record.
The same tree would not have bounded density on a line.
For $D=1$, we instead let different subblocks process a record at different times and choose the spacing so that operations with overlapping supports retain their prescribed order.

The layout also suggests how to divide the coefficient calculation.
A factor depending only on a degree or an earlier label can be applied at the source, where those values are already available.
Only the factors depending on the selected interaction need to be applied beside the data.
Accordingly, we distribute the fields needed by those local factors and retain the other fields for the inverse source computation.

\subsection{Distributing controls while preserving operation order}\label{sec:geometry}
To separate the transport argument from the correction formula, we now state the local action that the distribution circuit requires.
The original records $\reg O_0,\ldots,\reg O_{N-1}$ may be in an arbitrary superposition, including one entangled with the data.
A local work register is called clean when it starts and ends in $\ket0$ for every basis record.
Since empty transport positions will contain zero words, we also require the all-zero record to induce the identity.
These conditions ensure that unused positions and previously processed records need no special treatment in the schedule.

The only routing primitive needed for local alignment is a prescribed permutation on a path.
We write $\CX$ for CNOT and $\SWAP$ for an exchange, with the same notation for bitwise operations on equal-width registers.
The following standard bound~\cite{CSU,Routing} applies because the positions to be aligned are fixed before the quantum computation; none of its swaps depends on a quantum address.

\begin{lemma}[Fixed path permutations; {\cite[Section~3.1]{CSU}}]\label{lem:path-permutation}
A prescribed permutation of $k$ qubits on a nearest-neighbor path has an exact implementation of depth $O(k)$ using $O(k^2)$ adjacent SWAP gates and no auxiliary qubits. Its swap list is computable from the permutation, and reversing the list restores all qubit positions.
\end{lemma}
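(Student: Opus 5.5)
We will prove the statement with odd--even transposition sort on the path. We label the $k$ positions $1,\ldots,k$ and write the prescribed permutation as a target map $\pi$: the qubit now at position $i$ must end at position $\pi(i)$. At every stage we attach to each qubit the classical key $\pi(\text{original position})$. We then run $k$ rounds. In odd rounds we compare the keys on the pairs $(1,2),(3,4),\ldots$; in even rounds we compare the keys on the pairs $(2,3),(4,5),\ldots$. Whenever the two keys in a compared pair are out of order, we apply an adjacent $\SWAP$ to those two qubits.

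Each round consists of disjoint adjacent pairs, so it is one depth layer, and a $\SWAP$ is an allowed gate, or three $\CX$ gates if the model counts only $\CX$. This gives depth $O(k)$. The whole network has at most $k\lfloor k/2\rfloor=O(k^2)$ comparators, and hence $O(k^2)$ $\SWAP$ gates. No auxiliary qubits are used.

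The one nontrivial ingredient is correctness: after $k$ rounds, odd--even transposition sort has sorted any sequence of keys. We will use the classical $0$--$1$ principle for comparator networks. It suffices to check sorting of $0$--$1$ sequences. For such a sequence, the rightmost $1$ starts moving by the second round and then moves right in every round until it stops. The $j$-th $1$ from the right is delayed by at most one extra round relative to its predecessor. So every $1$ reaches its final position within $k$ rounds. Once the keys are sorted, the qubit carrying key $j$ sits at position $j$, which is exactly the prescribed permutation. This is the step I expect to need the most care, but it is a standard sorting-network argument, and citing~\cite{CSU} for it is acceptable.

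It remains to address computability and reversibility. All comparisons involve classical keys known before the circuit runs, so the swap list is produced by simulating the sort classically in $O(k^2)$ time. No gate depends on quantum data, which matches the paper's requirement that no swap depends on a quantum address. The circuit is a product of $\SWAP$ gates, and each $\SWAP$ is self-inverse. Applying the same list in reverse order therefore implements the inverse permutation, with the same depth, and restores every qubit to its original position.
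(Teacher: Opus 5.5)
Your proposal is correct and follows the paper's proof: both sort the classical destination labels by odd--even transposition in at most $k$ layers of disjoint adjacent comparisons, precompute every comparison outcome classically, replace each exchanging comparison by a SWAP (three adjacent CNOTs), and reverse the list of self-inverse SWAPs to restore positions. Your $0$--$1$-principle sketch of the $k$-round sorting bound is correct, and it supplies the correctness argument that the paper leaves to its citation.
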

\begin{proof}
Assigning each input its destination gives a classical list of $k$ labels.
Odd--even transposition sorting orders this list in at most $k$ layers of disjoint adjacent comparisons, which is routing by matchings on a path~\cite{Routing}.
Because the list is known, every comparison outcome can be computed in advance.
Replacing the exchanging comparisons by SWAPs therefore gives a fixed circuit implementing the desired permutation on arbitrary quantum states.
There are at most $k/2$ swaps per layer, and three adjacent CNOTs implement each swap, giving the stated bounds.
\end{proof}

A subblock must accommodate both a word and the work used to align and decode it.
For $D\ge2$, we therefore choose the smallest power of two $s$ satisfying $s^D\ge4w+8$ and set its side to $b:=2s$.
This gives $\Theta(w)$ sites.
A nearest-neighbor snake orders them by traversing successive first-coordinate lines in alternating directions and applying the same rule recursively in the other coordinates.
Since the snake visits each site once, its positions can hold the shared word and work registers at constant density.

The local action is divided into a fixed number of stages $k=0,\ldots,L$, because operations belonging to different stages need not commute.
For a basis record $y$, the data unitary in subblock $g$ is denoted by $D_{g,k}(y)$ and its scalar phase by $\phi_{g,k}(y)$.
The stage circuit $\mathcal L_{g,k}$ must implement the following action on the full data register $\reg D$, preserving the record and clearing the local work:
\begin{equation}
 \mathcal L_{g,k}\ket y_{\reg R_g}\ket0_{\reg W_g}\ket\psi_{\reg D}
 =e^{i\phi_{g,k}(y)}
 \ket y_{\reg R_g}\ket0_{\reg W_g}D_{g,k}(y)\ket\psi_{\reg D}.
 \label{eq:distribution-local-interface}
\end{equation}
For the record to resume its travel, this circuit must also restore the physical positions of its bits.
Its permitted cost is depth $O(w)$, $O(w^2)$ gates, and $O(w)$ work qubits, and its action must be trivial for $y=0$.
Every assigned data gate has bounded support within a fixed distance of $\Omega_g$; conflicts between neighboring implementations can consequently be resolved by a constant coloring.
This geometric condition alone does not justify exchanging selected operations.
We additionally require that, for a fixed stage and basis record, the operators $D_{g,k}(y)$ commute across groups.
In this $D\ge2$ construction, completing a stage in all groups before the next begins preserves the order of noncommuting stages, even when their implementations temporarily route through nearby qubits.

Under this condition, one record applies the data operator and phase
\begin{equation}
 \mathscr D(y):=\left(\prod_gD_{g,L}(y)\right)\cdots
                    \left(\prod_gD_{g,0}(y)\right),\qquad
 \phi(y):=\sum_{g,k}\phi_{g,k}(y).
 \label{eq:distribution-ordered-stages}
\end{equation}
In the simulator, stage zero will be the selected insertion, while stages $1,\ldots,L$ will implement the product formula.
Section~\ref{sec:local-use} verifies the required commutation and cleanup properties for these operations.
Once they hold, the following lemma supplies the global ordered circuit.

\begin{lemma}[Ordered geometric distribution]\label{lem:geometry}
Fix $D\ge2$ and the local circuit assumptions in Eq.~\eqref{eq:distribution-local-interface}. Suppose a cube has side $\ell$, $N\ge1$ records have width at most the power of two $w$, and
\[
 w\ge\log_2\ell,\qquad Nw+w\log\ell=O(\ell),\qquad b\le\ell.
\]
There is a nearest-neighbor circuit in a constant-factor enlargement of the cube that applies, for every sequence of basis records $y_0,\ldots,y_{N-1}$,
\begin{equation}
 e^{i\sum_{u=0}^{N-1}\phi(y_u)}
 \mathscr D(y_{N-1})\cdots\mathscr D(y_0)
 \label{eq:distributed-product}
\end{equation}
to the data. It preserves the original records, returns their distributed copies and all work to zero, and restores all register positions. Its resources are
\begin{equation}
 \text{depth }O(\ell),\qquad
 \text{gates }O(\ell^{D+1}),\qquad
 \text{ancillas }O(\ell^D),
 \label{eq:compilerresources}
\end{equation}
with bounded ancillary density. Linearity gives the same coherent action on superpositions of records.
\end{lemma}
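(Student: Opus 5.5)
We build the circuit from three pieces: a routing tree, a pipelined schedule, and a reversible copy/erase discipline. The correctness proof is a basis-state induction.

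\textbf{Layout.} The cube of side $\ell$ is tiled by subblocks $\Omega_g$ of side $b=2s$, so there are $(\ell/b)^D$ groups. Outside the cube, in a constant-factor enlargement, we place the source path holding $\reg O_0,\ldots,\reg O_{N-1}$. From its output end we run an H-tree of auxiliary wires of width $w$ (bundles of $w$ parallel qubit paths) that reaches a port adjacent to each subblock's snake.
\begin{itemize}
\item Because $D\ge2$ and the branching is recursive with halving sides, the wire volume at level $j$ of the tree is $O(\ell^D/2^{j})\cdot O(w/b^{D-1})$.
\item Choosing $b^D=\Theta(w)$ makes the total wire density bounded. This is exactly where $D\ge2$ is used: the boundary of a subblock has $\Theta(b^{D-1})\ge\Theta(w^{1/2})$ sites, enough to host a bundle crossing.
\item A parallel return tree of the same shape carries copies back.
\end{itemize}

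\textbf{Transport primitive.} A word moves one tree edge per $O(1)$ depth by layers of bitwise SWAPs along the bundle. At a branch point, we apply a bitwise $\CX$ from the incoming word into zero words on the child bundles. This is the fan-out of Eq.~\eqref{eq:overviewcopy}; it copies basis values and is undone later by the reverse $\CX$ once a copy returns.

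Total depth to reach all leaves from the source is $O(\ell)$. The extra term $w\log\ell$ in the hypothesis is for two things: the $\log\ell$ tree levels, each possibly requiring an $O(w)$ realignment of the bundle onto the next snake segment via Lemma~\ref{lem:path-permutation}, and the address width $w\ge\log_2\ell$.

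\textbf{Pipelined schedule.} Record $u$ is injected at time $\Theta(uw\cdot c)$ for a constant $c$ exceeding the per-stage cost. At each leaf, arrivals are therefore spaced by $\Theta(w)$, and the leaf does the following for each arrival:
\begin{enumerate}
\item Swap the arriving word into $\reg R_g$ using Lemma~\ref{lem:path-permutation}, at depth $O(w)$.
\item Run stages $\mathcal L_{g,0},\ldots,\mathcal L_{g,L}$.
\item Swap the word back out onto the return tree.
\end{enumerate}
The return trip mirrors the outward trip in reverse order. The $\CX$ merges then erase the copies pairwise, bottom-up, which works because sibling copies carry identical basis values; finally the source copy is restored.

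To preserve stage order across groups, all leaves should receive a given record at the same time. The H-tree is balanced, so every root-to-leaf path has the same length, and we pad idle steps where needed. Stage $k$ for record $u$ then occurs simultaneously in all groups, in colors fixed by a constant coloring to resolve support conflicts. The commutation assumption on $D_{g,k}(y)$ across $g$ makes this product equal to $\prod_g D_{g,k}(y_u)$. Records are strictly sequential in time at each leaf, and all leaves share one clock. Hence the data sees $\mathscr D(y_{N-1})\cdots\mathscr D(y_0)$ with accumulated phases $\sum_u\phi(y_u)$, by Eq.~\eqref{eq:distribution-local-interface}.

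Empty pipeline slots carry zero words, which act trivially. The depth is then $O(\ell+Nw+w\log\ell)=O(\ell)$.

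\textbf{Resource and correctness accounting.}
\begin{itemize}
\item \emph{Gates.} Each tree position performs $O(1)$ gates per time step over depth $O(\ell)$ and $O(\ell^D)$ qubits, giving $O(\ell^{D+1})$. The leaf work, $O(w^2)$ per record per group, totals $O(Nw^2\ell^D/w)=O(\ell^{D+1})$.
\item \emph{Ancillas.} The bounded density gives $O(\ell^D)$ ancillas.
\item \emph{Correctness.} We fix a basis record sequence and track invariants. Every auxiliary word is either zero or a basis copy of some $y_u$. Every work register is clean between uses. After the schedule, linearity extends the action to superpositions, including superpositions entangled with the data.
\end{itemize}

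\textbf{Main obstacle.} The hardest step is the geometric embedding. We must show simultaneously that the H-tree of width-$w$ bundles, with its return tree and branch-point fan-out cells, fits at bounded density with equal-length paths, and that outgoing and returning words for different records never collide. My first approach would be a recursive construction in which each level's bundle runs along a hyperplane separating child cubes. Those hyperplanes have room $\Theta((\ell 2^{-j})^{D-1})\gg w$ for $j\le\log(\ell/b)$. Outgoing and returning words would use disjoint lanes, and time-slot spacing of $\Theta(w)$ would keep successive words on a lane separated.
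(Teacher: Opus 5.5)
\textbf{Gap: width-$w$ bundles cannot have bounded density.} Your schedule and copy/erase discipline match the paper's. You use equal root-to-leaf delays so every leaf uses record $u$ in the same period, and you complete stage $k$ at all leaves before stage $k+1$. You also keep separate forward and return trees, undo the $\CX$ fan-out by merging identical returned copies, and finally clear against $\reg{O}_u$. The problem is the transport medium. You move each word as a bundle of $w$ parallel lanes, and no tree of such bundles can reach the subblocks at bounded density when $b^D=\Theta(w)$, which the hypotheses fix.

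Your own level count shows this. Summing $O(\ell^D2^{-j})\cdot O(w/b^{D-1})$ over levels gives wire volume $\Theta(\ell^Dw/b^{D-1})$, and $w/b^{D-1}=\Theta(b)=\Theta(w^{1/D})$ is unbounded, not $O(1)$. More robustly, any tree reaching ports at $(\ell/b)^D$ subblocks spaced $b$ apart has length $\Omega(b)$ per subblock. With $w$ lanes it therefore carries $\Omega(bw)$ qubits per subblock, while a subblock has only $\Theta(w)$ sites. Your justification for $D\ge2$ fails for the same reason: a face of $\Theta(b^{D-1})=\Theta(w^{(D-1)/D})$ sites (only $\Theta(\sqrt w)$ when $D=2$) cannot host $w$ lanes. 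Likewise, your separating hyperplanes have room $\Theta((\ell2^{-j})^{D-1})\ll w$, not $\gg w$, on the lowest $\Theta(\log w)$ levels.

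Your accounting therefore gives $\Theta(\ell^Dw^{1/D})$ ancillas and $O(\ell^{D+1}w^{1/D})$ gates. That violates both resource bounds and bounded density. Enlarging $b$ until $b^{D-1}\gtrsim w$ does not help. The subblock side is fixed in Section~\ref{sec:geometry}, and the $O(w)$-depth local use of Eq.~\eqref{eq:distribution-local-interface} is assumed only for subblocks of $\Theta(w)$ sites.

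\textbf{Missing idea: serialize each word.} Route each word along a single path of constant cross-section, so it occupies $w$ consecutive positions, and advance it by $w$ reversible bit shifts per period. The paper uses a closed cyclic array with a record channel and a zero channel; each bit shift is the pair of $\SWAP$ matchings in Eq.~\eqref{eq:cyclicshift}. Every leaf already spends $\Theta(w)$ depth per record, so a period of depth $O(w)$ per word step loses no throughput. The root-to-leaf latency becomes $\sum_k(\lceil m_k/w\rceil+O(1))$ periods with $m_k=Db2^{k-1}$, giving depth $O(\ell+w\log\ell)$. In the paper, the $w\log\ell$ term comes from this per-level rounding and the endpoint words.

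A level-$k$ connection then uses only $O(b2^k+w)$ qubits, and $\sum_k(\ell/(b2^k))^D(b2^k+w)=O(\ell^D)$ because $\sum_k2^{-(D-1)k}$ converges. This is where $D\ge2$ really enters; for $D=1$ every level has total length $\Theta(\ell)$. It also enters through the paper's $2$-adic argument: a site and a segment direction determine the level of any path through it, so paths do not pile up across levels. With this change your schedule goes through. You must also keep the originals on a source buffer long enough to meet their returning copies, which are then aligned by one fixed permutation. The paper's buffer has $N+2T_\downarrow+3$ words, where $T_\downarrow$ is the root-to-leaf latency in periods.
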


The depth estimate follows by allowing one local use and one word shift in each period of $O(w)$ gate layers.
A level-$k$ connection of length $O(b2^k)$ takes $O(b2^k/w+1)$ such periods.
Giving all connections at a level the same delay makes a record arrive at every leaf together, with successive records separated by one period.
For $K$ levels, $b2^K=\Theta(\ell)$; adding the initial travel time and the $N$ processing periods therefore gives
\begin{equation}
 \dcnt=O\!\left(w\sum_{k=1}^{K}(b2^k/w+1)+Nw\right)
       =O(\ell+w\log\ell+Nw).
 \label{eq:overviewdistribution}
\end{equation}
Returning and clearing the copies incurs the same order of cost, so the size condition in Lemma~\ref{lem:geometry} makes the complete depth $O(\ell)$.
To turn this timing argument into a circuit bound, Appendix~\ref{app:geometry} specifies the paths, proves bounded occupancy at every site, and verifies the return circuit.
The one-dimensional construction in Appendix~\ref{app:1d} instead checks the order of every pair of overlapping operations; it reaches the same depth without requiring simultaneous arrival at all subblocks.

\subsection{PREPARE, SELECT, and the encoded evolution}\label{sec:lcu-overview}
With the geometric requirements in place, we can describe the LCU matrix element that the circuit will encode.
Position $u$ in the ordered list, called slot $u$, contains a bit $X_u$, a local-table address $\lambda_u$, and a nonnegative degree offset $J_u$, for $0\le u<N$.
The bit chooses a local Pauli insertion when $X_u=1$ and a product-formula step when $X_u=0$.
Our complete label is $z:=((X_u,\lambda_u,J_u))_{u=0}^{N-1}$, prepared with a probability $\pi(z)$ that factors over its binary fields.
Rather than placing all coefficient dependence in this preparation, we use additional independent phase qubits and controlled diagonal gates.
Their basis string is denoted by $s$ and their number by $n_s$.
The signal register $\reg S$ contains both labels and phase qubits, the data register is $\reg D$, and temporary work is stored in $\reg W$.
The next section derives the distribution and coefficients; their role in the circuit is as follows.

The useful observation is that any complex scalar of modulus at most one is the average of two phases.
This lets coefficient dependence enter through SELECT while preparation remains simple, as in HHKL's use of selected reflections~\cite[Appendix~E.1]{HHKL}.
For each scalar factor, we prepare one qubit in $\ket+:=(\ket0+\ket1)/\sqrt2$.
For $z_{\rm c}\in\mathbb C$ with $|z_{\rm c}|\le1$, we choose $\phi:=\arg z_{\rm c}\in(-\pi,\pi]$ (with $\phi:=0$ at zero) and $\theta:=\arccos|z_{\rm c}|$.
These two angles define a diagonal unitary with the required matrix element:
\begin{equation}
 \Dph(z_{\rm c}):=
 \begin{pmatrix}e^{i(\phi+\theta)}&0\\0&e^{i(\phi-\theta)}\end{pmatrix}
 ,\qquad
 \bra+\Dph(z_{\rm c})\ket+=z_{\rm c}.
 \label{eq:phaseaverage}
\end{equation}
Taking the joint $\ket+$ matrix element over independent phase qubits multiplies their scalar factors.
Consequently, products of Hamiltonian-dependent coefficients can be encoded by controlled diagonal gates in $\SELECT$, without complicating the initial state.
What remains to prove is that every required factor has modulus at most one; the bounds in Section~\ref{sec:expansion} will supply this condition.

The circuit $\PREPARE$ consists of independent single-qubit preparations. It creates
\begin{equation}
 \PREPARE\ket0_{\reg S}
 =\sum_{z,s}\sqrt{\frac{\pi(z)}{2^{n_s}}}\ket{z,s}_{\reg S}.
 \label{eq:overviewprepare}
\end{equation}
For the distribution and phase qubits specified in Sections~\ref{sec:distribution} and~\ref{sec:prepare}, the corresponding $\SELECT$ circuit must apply an ordered data unitary $W_z$ and a real phase $\Phi(z,s)$ on each branch:
\begin{equation}
 \SELECT\ket0_{\reg W}\ket{z,s}_{\reg S}\ket\psi_{\reg D}
 =e^{i\Phi(z,s)}\ket0_{\reg W}\ket{z,s}_{\reg S}W_z\ket\psi_{\reg D}.
 \label{eq:selectinterface}
\end{equation}
Here $W_z$ keeps the Pauli insertions and product-formula steps in slot order; an execution bit makes later slots inactive once the required number of steps has occurred.
The scalar $\Phi(z,s)$ is the sum of the selected diagonal phases.
Equation~\eqref{eq:selectinterface} requires exact cleanup for every basis signal, rather than only for the particular state prepared by $\PREPARE$.
Linearity then extends the action to arbitrary signals entangled with the data, which is needed because amplitude amplification also invokes the inverse circuit on such signals.

This action follows the division of work suggested by the layout: a source circuit $\calT$ computes the prefix information, $\mathcal D_0$ applies the source coefficient phases, and $\mathcal N$ implements the remaining phases and data operations geometrically.
Reversing the prefix calculation gives
\begin{equation}
 \SELECT:=\calT^\dagger\mathcal N\mathcal D_0\calT,
 \qquad B:=\PREPARE^\dagger\SELECT\PREPARE.
 \label{eq:Bdefinition}
\end{equation}
In this product the rightmost operator acts first, so the execution order is the one shown in Figure~\ref{fig:lcu-overview}.
Table~\ref{tab:lcu-interface} records where each circuit acts and which later section supplies its construction.

\begin{table}[t]
\centering\small
\begin{tabular}{p{.14\linewidth}p{.52\linewidth}p{.24\linewidth}}
\toprule
Circuit & Operation & Construction\\
\midrule
$\PREPARE$ & Prepares the original labels and independent $\ket+$ phase qubits in the source region. & Section~\ref{sec:prepare}\\
$\calT$ & Computes execution bits and cumulative denominators, retaining the values needed to reverse the calculation. & Section~\ref{sec:source}\\
$\mathcal D_0$ & Applies factorial and denominator coefficient phases at the original slots. & Section~\ref{sec:source}\\
$\mathcal N$ & Distributes the local control fields, applies address-dependent phases and ordered data gates, and returns and clears the copies. & Sections~\ref{sec:geometry} and~\ref{sec:local-use}\\
$\calT^\dagger$ & Clears the retained prefix outputs after all controlled uses. & Section~\ref{sec:source}\\
$\PREPARE^\dagger$ & Interferes the signal branches to encode their weighted sum. & Eq.~\eqref{eq:blockencoding}\\
\bottomrule
\end{tabular}
\caption{The block circuit and its temporary work. The components of $\SELECT$ preserve the original signal basis values, which allows their temporary work to be cleared after the selected data operations and phases have been applied.}
\label{tab:lcu-interface}
\end{table}

\begin{figure}[tbp]
\centering
\includegraphics[width=.90\linewidth]{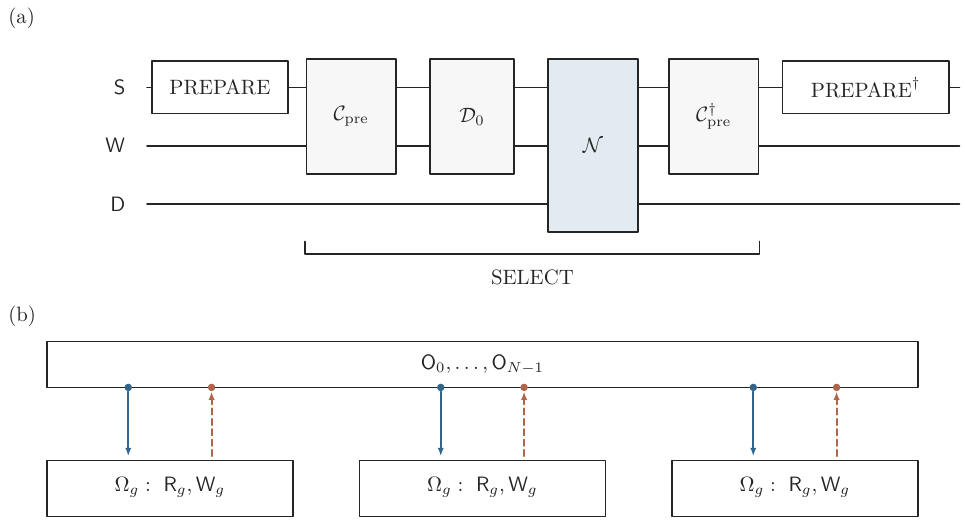}
\caption{Block encoding and distribution of its controls. (a) Gates act from left to right. PREPARE creates the labels and phase qubits; $\calT$ computes prefix values, $\mathcal D_0$ applies source phases, and $\mathcal N$ applies local phases and ordered data gates. Temporary distribution work is cleared before the prefix computation and preparation are reversed. The work wire includes both source and geometric ancillas. (b) A basis record is distributed to the shared words $\reg R_g$, used without changing its control values, and returned for uncomputation. Solid and dashed arrows distinguish forward and return paths on separate qubits. The omitted tree levels and their local connections are specified in Appendix~\ref{app:geometry}; Appendix~\ref{app:1d} gives the one-dimensional schedule.}
\label{fig:lcu-overview}
\end{figure}

Reversing preparation sums the selected branches coherently: the zero-signal and zero-work matrix element in Eqs.~\eqref{eq:overviewprepare} and~\eqref{eq:selectinterface} is
\begin{equation}
 \iota_0^\dagger B\iota_0
 =\sum_z\pi(z)\left(2^{-n_s}\sum_s e^{i\Phi(z,s)}\right)W_z.
 \label{eq:overview-phase-sum}
\end{equation}
By Eq.~\eqref{eq:phaseaverage}, the parenthesized average is the product of the chosen scalar factors.
Sections~\ref{sec:expansion} and~\ref{sec:local} choose them so that the sum is $\widehat A_Q/2$, for a finite approximation $\widehat A_Q$ to $U$ with degree threshold $Q$.
Thus the complete output on a data state $\ket\psi$ is
\begin{equation}
 B\ket0_{\reg W\reg S}\ket\psi_{\reg D}
 =\ket0_{\reg W\reg S}\frac{\widehat A_Q}{2}\ket\psi_{\reg D}
   +\ket{\perp_\psi},
 \qquad (\bra0_{\reg W\reg S}\otimes I_{\reg D})\ket{\perp_\psi}=0.
 \label{eq:overviewencodedstate}
\end{equation}
The term $\ket{\perp_\psi}$ contains all other signal outcomes.
In the standard terminology, this LCU circuit~\cite{ChildsWiebe,Taylor} is a block encoding of $\widehat A_Q$ with normalization two~\cite{QSVT}.
This normalization is chosen so that one step of oblivious amplitude amplification suffices~\cite[Lemma~4.6]{BCCKS}.
Because $\widehat A_Q$ only approximates a unitary, the amplification analysis must still bound the entire output, including $\ket{\perp_\psi}$; Section~\ref{sec:amplification} does so before the resource bounds are combined in Section~\ref{sec:parameters}.

\section{A finite coherent correction}\label{sec:expansion}
The geometric circuit is useful only if its independently prepared labels encode a sufficiently accurate correction.
We construct that finite operator by matching the low-degree coefficients of the exact evolution, then bound the terms that remain.
Throughout, $H_{\rm blk}$ acts on $v$ sites with local strength at most one, and $0<\tau\le1$ is fixed.
The construction starts from an exact expansion in bounded-support insertions.
The resulting finite sum reproduces the expansion coefficients through a specified insertion degree, but independence also introduces additional high-degree terms; the encoded operator therefore differs from the truncated exact expansion.
Bounding these extra terms separately from the exact tail is what allows us to avoid a joint degree test in the circuit.

\subsection{Product-formula order and the local residual}\label{sec:residual}
To obtain elementary insertions that remain local as precision increases, we examine the generator of the correction to a single product-formula step.
The formula has fixed depth; conjugating a local term through it therefore enlarges the support by only a constant amount.
The scalar expansion of this generator supplies higher precision while leaving the support of each elementary insertion fixed.

A Pauli expansion of the input matrices makes this locality explicit.
Coloring the resulting terms by their declared supports gives
\begin{equation}
 H_{\rm blk}=\sum_a F_a=\sum_{j=1}^c H_{\rm blk,j},
 \qquad F_a:=h_aP_a,\qquad
 H_{\rm blk,j}:=\sum_{a:\,\operatorname{col}(a)=j}F_a.
 \label{eq:color}
\end{equation}
Here $h_a\in\mathbb R$, $P_a$ is a Hermitian Pauli string (a tensor product of $I,X,Y,Z$), and $\operatorname{col}(a)$ is the color of $F_a$.
The declared supports are retained even for scalar terms.
Since each support intersects only a bounded number of others, greedy coloring uses a constant number $c$ of classes, with disjoint supports within a class.
Moreover, the Pauli expansion changes the local-strength bound by only a fixed factor, since each input matrix has bounded size.
Assigning $F_a$ to the lexicographically least site $x_a$ of its declared support determines which subblock controls its rotation, while the rotation still acts on the full support.

The volume dependence of the correction will be offset by a fixed product-formula order larger than the spatial dimension.
We therefore choose
\begin{equation}
 p:=2\bigl(\lfloor D/2\rfloor+1\bigr)>D.
 \label{eq:fixedorder}
\end{equation}
This is the smallest even order greater than $D$; it does not grow with the target precision.
Starting from the splitting of an exponential into exponentials of its summands~\cite{Trotter1959}, we use the symmetric second-order formula~\cite{Strang} as the base of Suzuki's recursion~\cite{Suzuki}.
In physical stage order, the resulting formula is
\begin{equation}
 C(s):=V_L(s)\cdots V_1(s),\qquad
 V_k(s):=e^{-i\alpha_kH_{\rm blk,\kappa_k}s}.
 \label{eq:PF}
\end{equation}
The number of stages is $L$, and their coefficients $\alpha_k\in\mathbb R$ and color indices $\kappa_k\in\{1,\ldots,c\}$ are fixed by the formula.
For completeness, the recursion that specifies them is
\begin{align}
 S_2(s)&:=e^{-iH_{\rm blk,1}s/2}\cdots e^{-iH_{\rm blk,c}s/2}
          e^{-iH_{\rm blk,c}s/2}\cdots e^{-iH_{\rm blk,1}s/2},\notag\\
 S_{2k}(s)&:=S_{2k-2}(a_ks)^2S_{2k-2}((1-4a_k)s)S_{2k-2}(a_ks)^2,
 \qquad a_k:=(4-4^{1/(2k-1)})^{-1},
 \label{eq:suzuki}
\end{align}
with $2\le k\le p/2$ and $C:=S_p$.
The resulting stage count is at most $2c\,5^{p/2-1}$ before adjacent stages are combined.
Since $p$ and the color count are fixed, both $L$ and all stage coefficients are independent of block volume and precision.

The relevant error estimate is the commutator bound of Childs et al.~\cite[Theorem~6]{Trotter}, whose local form underlies nearly optimal product-formula gate counts~\cite{ChildsSu}.
Applied to our fixed color decomposition, it gives the following extensive dependence on block volume.
Only the finite-range counting argument is specific to the present setting.
\begin{lemma}[Local product-formula error; {\cite[Theorem~6]{Trotter}}]\label{lem:pferror}
For the local summands in Eq.~\eqref{eq:color}, define
\begin{equation}
 \alpha_{\rm com}:=\sum_{a_1,\ldots,a_{p+1}}
 \norm{[F_{a_{p+1}},\ldots,[F_{a_2},F_{a_1}]\ldots]}.
 \label{eq:commutatorsum}
\end{equation}
There are constants $C_p,C'_p$, depending only on the fixed local model and $p$, such that the formula in Eq.~\eqref{eq:PF} satisfies
\begin{equation}
 \norm{C(s)-e^{-iH_{\rm blk}s}}
 \le C_p\alpha_{\rm com}|s|^{p+1}
 \le C'_p v|s|^{p+1}
 \qquad(s\in\mathbb R).
 \label{eq:pfcommutator}
\end{equation}
For an unnormalized input of local strength at most $J$, the last bound is $C'_p v(J|s|)^{p+1}$. In the normalized model, the $r$-step approximation with $h:=\tau/r$ satisfies
\begin{equation}
 \norm{C(h)^r-e^{-iH_{\rm blk}\tau}}
 \le C'_p v\tau^{p+1}/r^p.
 \label{eq:pfaccumulated}
\end{equation}
\end{lemma}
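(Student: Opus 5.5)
The plan is to take the first inequality in Eq.~\eqref{eq:pfcommutator} directly from \cite[Theorem~6]{Trotter}, and to prove the second and the accumulated bound by finite-range counting and telescoping. Theorem~6 of Childs et al.\ states that a $p$th-order product formula built from stages $e^{-i\alpha_k H_{\rm blk,\kappa_k}s}$ has error bounded by a constant, depending on $p$ and the stage coefficients, times $\tilde\alpha_{\rm comm}|s|^{p+1}$. Here $\tilde\alpha_{\rm comm}$ is the nested commutator sum over the summands $H_{\rm blk,j}$.

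The first step checks the input hypotheses. The formula in Eq.~\eqref{eq:suzuki} has order $p$, because Suzuki's recursion raises the order of a symmetric formula by two at each level. Its stage count $L$ and coefficients $\alpha_k$ are fixed, so the prefactor is a constant $C_p$. Expanding each $H_{\rm blk,j}$ in terms of the $F_a$ and using multilinearity and the triangle inequality gives $\tilde\alpha_{\rm comm}\le\alpha_{\rm com}$. This bound holds whether Theorem~6 is stated over the color summands or over arbitrary finer summands. This yields the first inequality.

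The second step bounds $\alpha_{\rm com}\le C'' v$. A nested commutator $[F_{a_{p+1}},\ldots,[F_{a_2},F_{a_1}]\ldots]$ vanishes unless each $F_{a_{j+1}}$ has support meeting the union of the supports of $F_{a_1},\ldots,F_{a_j}$. Fix $a_1$. There are $O(v)$ such choices, since each site lies in a bounded number of terms. The nonvanishing chains are then contained in a ball of radius $O(pR)$ around $\supp F_{a_1}$.

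The sum over each successive index is weighted by the norms. Two facts handle this weighting. The first is $\norm{[A,B]}\le2\norm A\norm B$. The second is that the Pauli expansion changes the local strength bound by at most a fixed factor. Together they give $\sum_{a_{j+1}:\,\supp F_{a_{j+1}}\cap S\neq\varnothing}\norm{F_{a_{j+1}}}\le O(|S|)$, and $|S|=O((pR)^D)$ is a constant. The number of chains is therefore bounded by a constant times $\sum_{a_1}\norm{F_{a_1}}$. The nested norm product is bounded by $2^p\prod_j\norm{F_{a_j}}$. Together these give $\alpha_{\rm com}\le C''v$. With local strength $J$, each $F_a$ scales by $J$, which produces the factor $J^{p+1}$.

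The third step handles the accumulated error. Telescoping with unitaries gives $\norm{A^r-B^r}\le r\norm{A-B}$. Taking $A=C(h)$ and $B=e^{-iH_{\rm blk}h}$ with $h=\tau/r$, this gives $rC'_pv(\tau/r)^{p+1}=C'_pv\tau^{p+1}/r^p$, which is Eq.~\eqref{eq:pfaccumulated}.

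The main obstacle is the counting in the second step. The difficulty is to make sure the per-step sum stays a constant even though the supports of later terms may grow by $R$ at each step. Because $p$ is fixed, the union of supports has diameter $O(pR)$ throughout, so the argument only needs this bounded-growth observation stated carefully. One also needs scalar terms to retain declared supports, so that they are counted consistently; these terms commute with everything and contribute nothing.
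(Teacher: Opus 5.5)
Your proposal is correct and follows essentially the same route as the paper: apply \cite[Theorem~6]{Trotter} to the fixed color decomposition, bound the resulting commutator sum by $\alpha_{\rm com}$ via multilinearity and the triangle inequality, show $\alpha_{\rm com}=O(v)$ by counting chains whose successive supports meet the union of the earlier ones, and telescope the $r$ unitary microsteps. The one detail you leave implicit is that the cited theorem is stated for anti-Hermitian summands and nonnegative time; the case $s<0$ follows by negating the summands, which leaves $\alpha_{\rm com}$ unchanged.
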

\begin{proof}
Applying the anti-Hermitian case of the cited theorem to the fixed color decomposition $-iH_{\rm blk}=\sum_{j=1}^c(-iH_{\rm blk,j})$ gives a prefactor independent of $v$.
Multilinearity expands each commutator of color sums into commutators of local terms; the triangle inequality then bounds their sum by $\alpha_{\rm com}$ in Eq.~\eqref{eq:commutatorsum}.
Since the local terms within each color class commute, each color exponential is exactly the product of the local exponentials used in the circuit.
The anti-Hermitian hypothesis also means that all conjugations in the error integrals are unitary, so no exponential factor in $\norm{H_{\rm blk}}$ appears.
Negating the summands gives the same estimate for $s<0$.

It remains to show that $\alpha_{\rm com}$ is extensive.
A nonzero nested commutator starts with one local term, and every subsequent term must meet the union of the preceding supports.
There are $O(v)$ choices for the first term but only constantly many choices at each later step, because the order and range are fixed.
Together with bounded term norms and $\norm{[A,B]}\le2\norm A\norm B$, this gives $\alpha_{\rm com}=O(v)$, or $O(vJ^{p+1})$ before normalization.
Telescoping the $r$ unitary microsteps now yields Eq.~\eqref{eq:pfaccumulated}.
\end{proof}

Although this norm bound identifies the favorable volume dependence, an LCU needs explicit coefficients whose absolute sum can be bounded.
For that purpose, we use $C(s)$ as the change of frame, following the interaction-picture principle of replacing a tractable part of the evolution by its circuit~\cite{LowWiebe}.
The resulting generator will be specified by a finite classical list, which we call the \emph{local residual table}:
\begin{equation}
 \mathscr R:=\bigl(x_\lambda,c_\lambda,\omega_\lambda,P_\lambda\bigr)_{\lambda=0}^{A-1}.
 \label{eq:residualtabledata}
\end{equation}
In this list, $x_\lambda$ locates the local controls, $c_\lambda$ is a complex coefficient, $\omega_\lambda$ is a real frequency, and $P_\lambda$ is a Pauli string supported near $x_\lambda$.
These are classical data used to specify the gates and their angles, rather than additional quantum registers.
The following lemma obtains the table directly from the input matrices and the fixed product formula.

\begin{lemma}[Local residual table]\label{lem:residual}
For $E(s):=C(s)^\dagger e^{-iH_{\rm blk}s}$, there is a classically computable local residual table $\mathscr R$ as in Eq.~\eqref{eq:residualtabledata}, with $A=O(v)$ entries, whose associated generator satisfies
\begin{equation}
 K(s):=\sum_{\lambda=0}^{A-1}c_\lambda e^{i\omega_\lambda s}P_\lambda,
 \qquad E'(s)=-iK(s)E(s),\qquad E(0)=I.
 \label{eq:Fourier}
\end{equation}
Each site is assigned a bounded number of entries, and $\supp(P_\lambda)$ lies within a fixed radius of its assigned site $x_\lambda$. There are model constants $c_*,\omega_*\ge1$ such that $|c_\lambda|\le c_*$ and $|\omega_\lambda|\le\omega_*$. After its vanishing Taylor coefficients are removed, the generator has the expansion
\begin{equation}
 K(s)=\sum_{\lambda=0}^{A-1}\sum_{q=p}^{\infty}
 c_\lambda\frac{(i\omega_\lambda)^q}{q!}s^qP_\lambda.
 \label{eq:cancelled}
\end{equation}
The table can be padded to a power-of-two address range using entries with $(c_\lambda,\omega_\lambda,P_\lambda)=(0,0,I)$.
\end{lemma}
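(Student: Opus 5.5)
Differentiating $E(s)=C(s)^\dagger e^{-iH_{\rm blk}s}$ gives $E'(s)=C(s)^\dagger\bigl(-iH_{\rm blk}-C'(s)C(s)^\dagger\bigr)e^{-iH_{\rm blk}s}$. This identifies the generator as
\[
K(s)=C(s)^\dagger H_{\rm blk}C(s)-iC(s)^\dagger C'(s),
\]
so that $E'(s)=-iK(s)E(s)$ with $E(0)=I$.

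\textbf{Expressing the generator through conjugations.} Writing $C(s)=V_L(s)\cdots V_1(s)$, the derivative term expands as a sum over stages:
\[
-iC(s)^\dagger C'(s)=-\sum_{k=1}^L \alpha_k\, V_1^\dagger\cdots V_k^\dagger H_{\rm blk,\kappa_k}V_k\cdots V_1 .
\]
Both pieces of $K(s)$ are therefore sums of local terms $F_a$ conjugated by prefixes (or by the whole) of the fixed-depth circuit. Each $V_j(s)$ is a product of commuting exponentials $e^{-i\alpha_j h_bP_bs}$ on disjoint supports. When conjugating $F_a$ by one layer, only the boundedly many $b$ with $\supp P_b\cap\supp(\text{current operator})\neq\varnothing$ act nontrivially.

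\textbf{Reducing to a finite Fourier sum.} For a single Pauli string, $e^{i\theta P_b}Qe^{-i\theta P_b}$ equals $Q$ if $P_b$ commutes with $Q$. If they anticommute, it equals $\cos(2\theta)Q+i\sin(2\theta)P_bQ$, which can be written as $\tfrac12 e^{2i\theta}(Q+P_bQ)+\tfrac12 e^{-2i\theta}(Q-P_bQ)$, up to the sign convention. Iterating over at most $L\cdot O(1)$ such conjugations shows that each conjugated $F_a$ is a finite sum of terms $c\,e^{i\omega s}P$. Here $P$ is, up to a phase, a Pauli string supported within radius $O(LR)$ of $x_a$, and $\omega$ is a signed sum of $2\alpha_jh_b$. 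The number of terms per $a$ is bounded by $2^{O(1)}$.

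Collecting over all $a$ and the $L+1$ pieces gives $A=O(v)$ entries, with each site assigned boundedly many. We obtain $|c_\lambda|\le c_*$ because $|h_a|$ and $|\alpha_k|$ are bounded and each branching carries a factor $1/2$ or a phase. We obtain $|\omega_\lambda|\le\omega_*$ because $|h_b|$ and $|\alpha_j|$ are bounded and there are boundedly many factors. Phases of the form $\pm i$ from products of Paulis are absorbed into $c_\lambda$, so each $P_\lambda$ is a Hermitian Pauli string. Finally, the table is padded with entries $(0,0,I)$; this is harmless since such entries contribute zero.

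\textbf{Main obstacle: the vanishing of the Taylor coefficients of orders $0,\dots,p-1$.} Expanding $e^{i\omega_\lambda s}$ as a power series gives $K(s)=\sum_q K_q s^q$ with $K_q=\sum_\lambda c_\lambda(i\omega_\lambda)^q P_\lambda/q!$. To show $K_q=0$ for $q<p$, I would use the order condition on the product formula. Since $C(s)=e^{-iH_{\rm blk}s}+O(s^{p+1})$ as an analytic operator identity at fixed block size, we get $E(s)=I+O(s^{p+1})$. Then $K(s)=iE'(s)E(s)^{-1}=O(s^p)$, so every Taylor coefficient of order $<p$ vanishes identically as an operator.

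This argument is not a norm estimate but an exact order condition of Suzuki's $S_p$, valid for any bounded operators. It holds because the Baker--Campbell--Hausdorff expansion of $S_p$ agrees with the exact exponential through order $p$ formally, independent of $v$. The subtle point is that the expansion in Eq.~\eqref{eq:cancelled} must hold as stated with the table's own entries. Since the table is a finite sum of entire functions, the operator identity $K_q=0$ for $q<p$ means those series terms sum to zero. Removing them leaves exactly Eq.~\eqref{eq:cancelled}, with no alteration of the table.
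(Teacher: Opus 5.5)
Your proposal is correct and follows the paper's proof essentially step for step. You use the same generator $K=C^\dagger H_{\rm blk}C-iC^\dagger C'$, expanded over stages; your conjugation by $V_k\cdots V_1$ agrees with the paper's prefix $C_{<k}$ because $V_k$ commutes with $H_{\rm blk,\kappa_k}$. You then apply the same anticommuting-Pauli conjugation identity, rewritten as finitely many bounded-support Fourier terms, and the same analytic order-condition argument $E=I+O(s^{p+1})\Rightarrow K=iE'E^{\dagger}=O(s^p)$, applied to the sum of table entries rather than entry by entry.
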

\begin{proof}
Differentiating $E$ expresses its generator through $C$ and the prefixes $C_{<k}:=V_{k-1}\cdots V_1$, with $C_{<1}:=I$.
Using $C^\dagger C=I$ gives
\begin{equation}
 K(s)=C(s)^\dagger H_{\rm blk}C(s)-iC(s)^\dagger C'(s)
 =C(s)^\dagger H_{\rm blk}C(s)-\sum_{k=1}^L\alpha_k
 C_{<k}(s)^\dagger H_{\rm blk,\kappa_k}C_{<k}(s).
 \label{eq:residualformula}
\end{equation}
The advantage of Eq.~\eqref{eq:residualformula} is that it exposes the support before any Taylor expansion.
Each local term passes through at most $L$ conjugation layers, and in each layer only rotations meeting its current support can change it.
Both the range and the number of such rotations are bounded.
After the fixed number of layers, the support therefore remains within a constant-radius neighborhood of its assigned site.

The scalar dependence follows by conjugating one Hermitian Pauli string $Q$ by another, $P$.
The conjugation leaves $Q$ unchanged when the strings commute; when they anticommute, it gives
\begin{equation}
 e^{i\vartheta P}Qe^{-i\vartheta P}
 =\cos(2\vartheta)Q+i\sin(2\vartheta)PQ.
 \label{eq:PauliConj}
\end{equation}
Here $iPQ$ is a Hermitian Pauli string up to a real sign, which can be absorbed into the coefficient.
Iterating this identity through the fixed set of overlapping rotations and expressing each sine and cosine as two exponentials produces only constantly many Fourier terms per initial local term.
Their frequencies are sums of constantly many bounded rotation coefficients, so both frequencies and coefficients are uniformly bounded.
This proves the claimed size, locality, and classical computability of the table.
Because the table is fixed by the input and the product formula, it is independent of the Taylor degree $q$ and threshold $Q$.
Expanding its scalar exponentials therefore supplies higher powers without changing any Pauli support.

The remaining issue is the order at which this scalar expansion begins.
For a fixed block, Lemma~\ref{lem:pferror} gives $E(s)=I+O(s^{p+1})$ near zero.
Analyticity then yields $E'(s)=O(s^p)$, the usual transfer of product-formula order conditions to the error generator~\cite[Theorem~4]{Trotter}.
Since $K(s)=iE'(s)E(s)^\dagger$, it follows that
\begin{equation}
 \sum_{\lambda=0}^{A-1}c_\lambda(i\omega_\lambda)^qP_\lambda
 =K^{(q)}(0)=0,\qquad 0\le q<p.
 \label{eq:residual-order-conditions}
\end{equation}
These equalities hold for the sum of the table entries; they need not hold entry by entry.
Subtracting those zero sums from the Taylor expansion of Eq.~\eqref{eq:Fourier} gives Eq.~\eqref{eq:cancelled}.
Only after this cancellation do we take absolute values, thereby retaining the factor $|s|^p$ in the coefficient estimate.
\end{proof}
The explicit coefficients in Eq.~\eqref{eq:cancelled} now give the uniform residual estimate
\begin{equation}
 \norm{K(s)}
 \le Ac_*\sum_{q=p}^{\infty}\frac{(\omega_*|s|)^q}{q!}
 \le\frac{Ac_*}{p!}(\omega_*|s|)^p e^{\omega_*|s|}
 =O(v|s|^p)\qquad(|s|\le1).
 \label{eq:residualnorm}
\end{equation}
The second inequality uses $(p+j)!\ge p!j!$.
Thus the absolute coefficient bound, not just the operator-norm Trotter estimate, has the extensive dependence required by Eq.~\eqref{eq:orderbalance}.
At the same time, every elementary insertion retains the support of its table entry, regardless of how many scalar powers are used.

\begin{remark}[Which side carries the correction]\label{rem:correction-order}
The convention $e^{-iH_{\rm blk}s}=C(s)E(s)$ puts the generator on the left in $E'(s)=-iK(s)E(s)$, so later insertions act on the left as well.
Placing the correction on the other side instead gives an equation with right multiplication, as in the multiplicative remainder of Zeng et al.~\cite[Section~V.A, Eqs.~(94)--(95)]{Compensation}.
Both choices are valid, but their time-ordered products must not be interchanged.

There is a separate locality reason to keep the corrections between microsteps.
Indeed, writing $C=C(h)$ and $E=E(h)$ and collecting the formula factors on the left gives
\begin{equation}
 (CE)^r=C^r(C^{-(r-1)}EC^{r-1})\cdots(C^{-1}EC)E.
 \label{eq:collectedcorrection}
\end{equation}
The right-hand side conjugates some corrections through as many as $r-1$ formula steps, so its support bound would grow with the number of microsteps.
Keeping the original interleaved order avoids those conjugations and preserves the fixed support radius in Lemma~\ref{lem:residual}.
\end{remark}

\subsection{Ordered insertion coefficients}\label{sec:insertions}
We divide the interval into $r\ge1$ microsteps of length $h:=\tau/r$.
Within a microstep, $e^{-iH_{\rm blk}h}=C(h)E(h)$ applies $E(h)$ first and then $C(h)$, which we call the \emph{background operation}.
The differential equation in Eq.~\eqref{eq:Fourier} gives the Dyson expansion below.
Where truncated Dyson-series simulation represents time ordering by coherent time labels~\cite{Dyson}, the scalar coefficients of our residual table allow the ordered integrals to be evaluated analytically:
\begin{equation}
 E(h)=I+\sum_{k=1}^{\infty}(-i)^k
 \int_{0<s_1<\cdots<s_k<h}
 K(s_k)\cdots K(s_1)\,ds_1\cdots ds_k.
 \label{eq:Dysonintegral}
\end{equation}
Each occurrence of $K$ is an insertion.
Substituting Eq.~\eqref{eq:cancelled} assigns insertion $i$ an address $\lambda_i$ and degree $q_i\ge p$.
The scalar time integral is

\begin{equation}
 \int_{0<s_1<\cdots<s_k<h}\prod_{i=1}^k s_i^{q_i}\,ds_1\cdots ds_k
 =\frac{h^{\sum_i(q_i+1)}}{\prod_{i=1}^k b_i},
 \qquad b_i:=\sum_{a=1}^i(q_a+1).
 \label{eq:simplex}
\end{equation}
Combining this integral with the coefficient of each inserted term yields
\begin{equation}
 E(h)=\sum_{k=0}^{\infty}\ \sum_{\substack{\lambda_1,\ldots,\lambda_k\in\{0,\ldots,A-1\}\\q_1,\ldots,q_k\ge p}}
 \left(\prod_{i=1}^k
 \frac{-ic_{\lambda_i}(i\omega_{\lambda_i})^{q_i}h^{q_i+1}}
 {q_i!\,b_i}\right)
 P_{\lambda_k}\cdots P_{\lambda_1}.
 \label{eq:Dyson}
\end{equation}

The empty product, corresponding to $k=0$, is $I$.
All termwise integrations and products are justified by absolute convergence: the Fourier table is finite, and each scalar exponential series converges absolutely on the bounded time interval.

The same ordering must be retained when the microsteps are combined.
We record their insertion labels as a path
\begin{equation}
 \Gamma:=\bigl(m_j;(\lambda_{j,i},q_{j,i})_{i=1}^{m_j}\bigr)_{j=1}^r,
 \qquad m_j\ge0,\quad 0\le\lambda_{j,i}<A,\quad q_{j,i}\ge p.
 \label{eq:pathlabel}
\end{equation}
Here $j$ identifies the microstep and $i$ identifies the insertion within it, in physical time order; the number $m_j$ of insertions can vary from one microstep to the next.
The denominators reset after each background, whereas the total insertion degree adds across all microsteps.
Accordingly, we define
\begin{align}
 b_{j,i}&:=\sum_{a=1}^i(q_{j,a}+1),\label{eq:pathdenominator}\\
 |\Gamma|&:=\sum_{j=1}^r\sum_{i=1}^{m_j}(q_{j,i}+1).\label{eq:degree}
\end{align}
The path coefficient $d_\Gamma$ and ordered data operator $W_\Gamma$ are
\begin{align}
 d_\Gamma&:=\prod_{j=1}^r\prod_{i=1}^{m_j}
 \frac{-ic_{\lambda_{j,i}}(i\omega_{\lambda_{j,i}})^{q_{j,i}}h^{q_{j,i}+1}}
 {q_{j,i}!\,b_{j,i}},\label{eq:pathcoeff}\\
 W_\Gamma&:=[C(h)P_{\lambda_{r,m_r}}\cdots P_{\lambda_{r,1}}]\cdots
 [C(h)P_{\lambda_{1,m_1}}\cdots P_{\lambda_{1,1}}].\label{eq:pathword}
\end{align}
As before, an empty Pauli product is $I$ and the rightmost factor acts first.
Thus $W_\Gamma$ executes the insertions of microstep $1$, its background, then the insertions of microstep $2$, and so on.
The backgrounds remain exact product-formula operations: $|\Gamma|$ counts powers from the insertions only, not the Taylor degree of the whole operator.
Substituting the expansion of each $E(h)$ into $[C(h)E(h)]^r$ consequently gives
\begin{equation}
 U:=e^{-iH_{\rm blk}\tau}=\sum_\Gamma d_\Gamma W_\Gamma.
 \label{eq:fullpath}
\end{equation}
Although paths in this exact sum have unbounded length, only finitely many insertions are needed through any fixed degree.
This observation suggests encoding the low-degree part with a fixed number of independently prepared slots, leaving all other assignments to the error analysis.

\subsection{A finite product distribution}\label{sec:distribution}
To fit variable-length paths into a fixed circuit, we let each of the $N$ slots in Section~\ref{sec:lcu-overview} choose an insertion or a background.
Its fields are $X_u$, $\lambda_u$, and $J_u$; after the $r$th background, later slots are inactive.
For an integer degree threshold $Q\ge p+1$, the required number of slots and the offset range are
\begin{equation}
 K_0:=\left\lfloor\frac{Q}{p+1}\right\rfloor,\qquad
 N:=r+K_0,\qquad
 m:=\lceil\log_2(Q-p)\rceil,\qquad B_0:=2^m.
 \label{eq:finiteparameters}
\end{equation}
Every insertion contributes at least $p+1$ to the degree, so a path of degree at most $Q$ has at most $K_0$ insertions.
Together with the $r$ backgrounds, it therefore fits in $N$ slots.
The $m$-bit offset register has $B_0$ possible values; in the boundary case $Q=p+1$, it is empty and represents $J=0$ because $m=0$ and $B_0=1$.
Products over its absent bits are understood to equal one.

For each $u\in\{0,\ldots,N-1\}$, the labels range over
\[
 X_u\in\{0,1\},\qquad
 \lambda_u\in\{0,\ldots,A-1\},\qquad
 J_u\in\{0,\ldots,B_0-1\}.
\]
The address selects $(c_{\lambda_u},\omega_{\lambda_u},P_{\lambda_u})$ from the residual table.
Padding that table to a power of two preserves $A=O(v)$ and allows uniform address preparation by Hadamard gates.
Since the expansion starts at $q=p$, it suffices to store the offset $J=q-p$.
Using $\ind{\mathcal C}$ for the indicator of a condition $\mathcal C$, we determine the active slots from the prefix values
\begin{equation}
 j_u:=\sum_{a<u}(1-X_a),\qquad e_u:=\ind{j_u<r},\qquad q_u:=p+J_u.
 \label{eq:prefix}
\end{equation}
Thus $j_u$ counts backgrounds before slot $u$, and $e_u=1$ precisely when fewer than $r$ have occurred.
We call such a slot \emph{active}.
An active insertion applies $P_{\lambda_u}$, while an active background applies $C(h)$ and ends the current microstep; an inactive slot acts as the identity.
For a complete assignment $z=(X_u,\lambda_u,J_u)_{u=0}^{N-1}$, these rules give
\begin{equation}
 G_u(z):=
 \begin{cases}
 P_{\lambda_u},&e_uX_u=1,\\
 C(h),&e_u=1,\ X_u=0,\\
 I,&e_u=0,
 \end{cases}
 \qquad W_z:=G_{N-1}(z)\cdots G_0(z).
 \label{eq:recordword}
\end{equation}
The total insertion degree of this assignment is
\begin{equation}
 d(z):=\sum_{u:e_uX_u=1}(p+J_u+1).
 \label{eq:labeldegree}
\end{equation}

We prepare the three labels independently, with distributions
\begin{equation}
 \Pr(X_u=1)=a,\quad \Pr(X_u=0)=1-a,\qquad
 \Pr(J_u=J)=\frac{2^{-J}}{Z_0},\qquad
 \Pr(\lambda_u=\lambda)=\frac1A.
 \label{eq:labelmarginals}
\end{equation}
The probabilities are chosen for two different purposes.
Making the $r$ background choices contribute the same factor $1/2$ gives the common normalization in Lemma~\ref{lem:matching} and Eq.~\eqref{eq:blockencoding}.
Meanwhile, geometric weights $2^{-J}$ factor over the binary digits of $J$, and a uniform address needs only Hadamard gates.
These requirements lead to
\begin{equation}
 a:=1-2^{-1/r},\qquad
 Z_0:=\sum_{J=0}^{B_0-1}2^{-J}.
 \label{eq:productdist}
\end{equation}
The resulting product distribution is
\begin{equation}
 \pi(z):=\prod_{u=0}^{N-1}
 a^{X_u}(1-a)^{1-X_u}\frac1A\frac{2^{-J_u}}{Z_0}.
 \label{eq:pi}
\end{equation}
The binary expansion $J=\sum_{\beta=0}^{m-1}2^\beta J_\beta$, with $J_\beta\in\{0,1\}$, factorizes both the weights and their normalization:
\[
 2^{-J}=\prod_{\beta=0}^{m-1}(2^{-2^\beta})^{J_\beta},
 \qquad Z_0=\prod_{\beta=0}^{m-1}(1+2^{-2^\beta}),
\]
so the square-root probability state for bit $J_\beta$ is
\begin{equation}
 \frac{\ket0+2^{-2^\beta/2}\ket1}{\sqrt{1+2^{-2^\beta}}}.
 \label{eq:Jprep}
\end{equation}
Together with the independent choices of $X$ and $\lambda$, these bit states prepare the square-root state of $\pi$ using only single-qubit gates.

Preparation alone does not yet give the desired insertion coefficient.
For each active insertion, SELECT must supply the ratio between its coefficient in Eq.~\eqref{eq:Dyson} and its preparation probability in Eq.~\eqref{eq:labelmarginals}.
Only the cumulative denominator depends on earlier insertions.
We store this denominator modulo $L_b:=2^{\lceil\log_2(Q+1)\rceil}>Q$, using the finite-register recurrence
\begin{equation}
 \widetilde b_{-1}:=0,\qquad
 \widetilde b_u:=
 \begin{cases}
 (\widetilde b_{u-1}+J_u+p+1)\bmod L_b,&X_u=1,\\
 0,&X_u=0.
 \end{cases}
 \label{eq:moddenom}
\end{equation}
On a label assignment with $d(z)\le Q$, no active cumulative denominator reaches the modulus, so $\widetilde b_u$ equals the exact denominator in the current microstep.
Section~\ref{sec:local} computes these values together with $j_u$ and $e_u$ from the original labels.
The degree itself is represented by the offset $J_u$, since $q_u=p+J_u$.

The ratio naturally separates into factors depending on the source values and factors depending on the selected table entry.
The former are the reciprocal denominator and factorial ratio
\begin{align}
 \gamma(b)&:=
 \begin{cases}1/b,&1\le b\le Q,\\1,&\text{otherwise},\end{cases}
 \label{eq:extension}\\
 f(J)&:=\frac{p!}{(p+J)!}.
 \label{eq:factorialfactor}
\end{align}
Both functions have modulus at most one on every register value.
Below degree $Q$, the denominator has not wrapped, and $\gamma(\widetilde b_u)$ is the reciprocal required by Eq.~\eqref{eq:pathcoeff}.
Above that degree, exact agreement is unnecessary, but the coefficient gate must still be defined on every basis state.
The second branch of $\gamma$ supplies a bounded value even at a zero modular denominator.
With $t_\omega:=\omega_*h$, the remaining factors are
\begin{align}
 \kappa&:=\frac{AZ_0c_*h\,t_\omega^p}{a\,p!},\qquad
 \chi_\lambda:=\kappa\frac{-ic_\lambda}{c_*}
 \left(\frac{i\omega_\lambda}{\omega_*}\right)^p,
 \nonumber\\
 z_{\lambda,\beta}&:=(2i\omega_\lambda h)^{2^\beta}
 \qquad(0\le\beta<m).
 \label{eq:factors}
\end{align}
In $\chi_\lambda$, the common factor $\kappa$ compensates for the preparation probability, while the other factors supply $c_\lambda$ and the first $p$ frequency powers.
The remaining $J$ powers can be supplied bit by bit, since
\[
 \prod_{\beta=0}^{m-1}z_{\lambda,\beta}^{J_\beta}
 =(2i\omega_\lambda h)^J.
\]
When $J_\beta=0$, that bit contributes the identity, including when $\omega_\lambda=0$.
Combining the source and table-dependent factors gives the complete coefficient for slot $u$:
\begin{equation}
 \zeta_u(z):=
 \begin{cases}
 f(J_u)\gamma(\widetilde b_u)\chi_{\lambda_u}
 \displaystyle\prod_{\beta=0}^{m-1}z_{\lambda_u,\beta}^{J_{u,\beta}},&e_uX_u=1,\\
 1,&e_uX_u=0.
 \end{cases}
 \label{eq:zeta}
\end{equation}
The finite operator to be implemented is
\begin{equation}
 \widehat A_Q:=2\sum_z\pi(z)\left(\prod_{u=0}^{N-1}\zeta_u(z)\right)W_z.
 \label{eq:implemented}
\end{equation}
This sum includes every assignment, even one that exhausts the $N$ slots before completing all $r$ backgrounds.
Rather than removing such assignments by a joint test, we will show that they necessarily have degree greater than $Q$.
Matching all lower-degree terms and bounding the two remaining tails will therefore control the simulation error.

Table~\ref{tab:slotconstruction} records which values are prepared, which are computed, and where their coefficient factors are applied.
Section~\ref{sec:local} realizes each factor by the phase-average identity: source factors remain at their slots, whereas the $\lambda_u$-dependent factors are applied beside the selected interaction.
The distribution circuit of Section~\ref{sec:geometry} supplies the necessary controls in slot order and removes their copies afterward.
\begin{table}[t]
\centering
\small
\renewcommand{\arraystretch}{1.18}
\begin{tabular}{@{}>{\raggedright\arraybackslash}p{.18\textwidth}>{\raggedright\arraybackslash}p{.35\textwidth}>{\raggedright\arraybackslash}p{.40\textwidth}@{}}
\toprule
Quantity in slot $u$ & Meaning or distribution & Preparation or use \\
\midrule
$X_u$ & Insertion/background label; probabilities $a,1-a$ & One-qubit preparation in Eq.~\eqref{eq:prepared}. \\
$\lambda_u$ & Address of $(c_\lambda,\omega_\lambda,P_\lambda)$; uniform on $A$ entries & Hadamard gates on the address register; local decoding in Lemma~\ref{lem:local}. \\
$J_u$ & Offset $q_u-p$; probability $2^{-J_u}/Z_0$ & Independent bit states Eq.~\eqref{eq:Jprep}. \\
$j_u,e_u,\widetilde b_u$ & Background count, execution bit, cumulative denominator modulo $L_b$ & Reversible prefix computation in Section~\ref{sec:source}; arithmetic in Appendix~\ref{app:gates}. \\
$f(J_u),\gamma(\widetilde b_u)$ & Factorial and reciprocal-denominator factors & Phase gates on source phase qubits $\reg S_u^F,\reg S_u^b$; Eq.~\eqref{eq:phaseaverage}. \\
$\chi_{\lambda_u},\ (z_{\lambda_u,\beta})_\beta$ & Table coefficient and binary frequency factors & Phase gates on $\reg S_u^\chi,\reg S_{u,\beta}^\omega$ near the selected entry; Lemma~\ref{lem:local}. \\
$G_u(z)$ & Pauli insertion, background $C(h)$, or identity & Fields Eq.~\eqref{eq:geofields} are distributed by Section~\ref{sec:geometry} and processed by Lemma~\ref{lem:local}. \\
\bottomrule
\end{tabular}
\caption{Slot labels, derived values, and the operations they control. Equation~\eqref{eq:signals} specifies their quantum registers and the independent phase qubits; phase averages realize the numerical coefficient factors without changing the label distribution.}
\label{tab:slotconstruction}
\end{table}

\Needspace{15\baselineskip}
\begin{lemma}[Low-degree matching]\label{lem:matching}
The terms in Eq.~\eqref{eq:implemented} with insertion degree $d(z)\le Q$ sum to the truncated path expansion
\begin{equation}
 A_Q:=\sum_{|\Gamma|\le Q}d_\Gamma W_\Gamma.
 \label{eq:hardcutoff}
\end{equation}
\end{lemma}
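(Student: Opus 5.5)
We construct an explicit bijection between label assignments $z$ with $d(z)\le Q$, grouped by the information that actually affects $W_z$ and the coefficient, and paths $\Gamma$ with $|\Gamma|\le Q$. Then we check that the weights agree. The argument has three steps: a combinatorial encoding, a coefficient computation, and a sum over the irrelevant fields.

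\textbf{Step 1: the encoding map.} Given $z$, read the slots in order while $e_u=1$. Each active insertion slot appends $(\lambda_u,q_u=p+J_u)$ to the current microstep. Each active background closes that microstep. This produces a path $\Gamma(z)$, and comparing Eqs.~\eqref{eq:recordword} and~\eqref{eq:pathword} gives $W_z=W_{\Gamma(z)}$, provided all $r$ backgrounds occur among the $N$ slots.

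We must show that this proviso holds whenever $d(z)\le Q$. The active insertions number at most $d(z)/(p+1)\le K_0$. If fewer than $r$ backgrounds occurred, then all $N=r+K_0$ slots would be active, so more than $K_0$ of them would be insertions, a contradiction. Also $d(z)=|\Gamma(z)|$. Conversely, each $\Gamma$ with $|\Gamma|\le Q$ has $J$ offsets $q-p\le Q-p-1<B_0$, so it is realized.

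The preimage of $\Gamma$ is specified by the active prefix of length $u^*=r+\sum_j m_j$, whose last slot is the $r$th background. The remaining $N-u^*$ slots are free, as are the inactive fields. The $J$ and $\lambda$ fields of active background slots are also free; these do not affect $\zeta_u$ or $G_u$.

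\textbf{Step 2: the weight of a fixed active prefix.} Sum $\pi(z)\prod_u\zeta_u(z)$ over the free fields. Free fields contribute total probability $1$, since $\zeta_u=1$ for inactive slots and for background slots. The fixed fields contribute the following:
\begin{itemize}
\item each active background gives $(1-a)=2^{-1/r}$, so $r$ of them give $1/2$, which cancels the prefactor $2$ in $\widehat A_Q$;
\item each active insertion gives $a\cdot A^{-1}\cdot 2^{-J}/Z_0$ times $\zeta_u$.
\end{itemize}
Because the degree is at most $Q$, the register $\widetilde b_u$ has not wrapped and is reset exactly at backgrounds. It therefore equals $b_{j,i}$ with $1\le b_{j,i}\le Q$, so $\gamma(\widetilde b_u)=1/b_{j,i}$.

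Expanding $\chi_\lambda$, $f(J)$ and $\prod_\beta z_{\lambda,\beta}^{J_\beta}=(2i\omega_\lambda h)^J$ gives
\[
\frac{a2^{-J}}{AZ_0}\cdot\frac{p!}{(p+J)!}\cdot\frac{1}{b}\cdot\frac{AZ_0c_*h(\omega_*h)^p}{a\,p!}\cdot\frac{-ic_\lambda}{c_*}\Bigl(\frac{i\omega_\lambda}{\omega_*}\Bigr)^p(2i\omega_\lambda h)^J .
\]
Everything cancels to $-ic_\lambda(i\omega_\lambda)^{q}h^{q+1}/(q!\,b)$ with $q=p+J$. This is precisely the factor in Eq.~\eqref{eq:pathcoeff}. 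Multiplying over insertions gives $d_\Gamma$.

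\textbf{Step 3: conclusion and main obstacle.} Summing over $\Gamma$ completes the proof. The step needing the most care is the bookkeeping in Step 1: showing that assignments exhausting the slots without completing $r$ backgrounds cannot have degree at most $Q$, and that the modular denominator and the $\gamma$ cutoff agree with the true $b_{j,i}$ on all low-degree assignments. The coefficient cancellation in Step 2 is routine algebra.
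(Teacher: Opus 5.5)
Your proposal is correct and follows essentially the same route as the paper's proof: the same slot-counting argument that an assignment of degree at most $Q$ must complete all $r$ backgrounds, the same checks that offsets are representable and that the modular denominators equal the exact $b_{j,i}$, and the same coefficient cancellation, with $(1-a)^r=1/2$ cancelling the prefactor $2$ and the unused fields summing to one. Your explicit preimage description is just a more detailed version of the paper's ``summing unused labels''; the one step you leave implicit in the converse direction, which the paper states, is that a path with $|\Gamma|\le Q$ has at most $K_0$ insertions and therefore fits in the $N=r+K_0$ slots.
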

\begin{proof}
First, low degree forces a label assignment to complete every background.
Indeed, fewer than $r$ zeros in an $N$-bit operation string leave at least $N-r+1=K_0+1$ ones, all active, giving degree at least $(K_0+1)(p+1)>Q$.
Conversely, any path with $|\Gamma|\le Q$ has at most $K_0$ insertions and fits in $N=r+K_0$ slots.
For each insertion, $p\le q\le Q-1$ implies $J=q-p\le Q-p-1<B_0$, so the offset is representable.
Moreover, every $b_{j,i}\le Q<L_b$, and the modular recurrence therefore gives the exact path denominators through the last active background.

It remains to check the coefficient of the represented path.
For an active insertion with labels $(\lambda,J)$, binary digits $J_\beta$, and denominator $b$, let $\Pr(1,\lambda,J)$ denote its marginal preparation probability.
Multiplying this probability by its selected factor gives
\begin{align}
 &\Pr(1,\lambda,J)\,
 f(J)\gamma(b)\chi_\lambda
 \prod_{\beta=0}^{m-1}z_{\lambda,\beta}^{J_\beta}
 \nonumber\\
 &\quad=\frac{a2^{-J}}{AZ_0}
 \frac{p!}{(p+J)!}\frac1b
 \frac{AZ_0c_*h(\omega_*h)^p}{a\,p!}
 \frac{-ic_\lambda}{c_*}
 \left(\frac{i\omega_\lambda}{\omega_*}\right)^p
 (2i\omega_\lambda h)^J
 \nonumber\\
 &\quad=\frac{-ic_\lambda(i\omega_\lambda)^{p+J}h^{p+J+1}}
 {(p+J)!\,b}.
 \label{eq:matchingidentity}
\end{align}
The cancellation leaves exactly the insertion coefficient in Eq.~\eqref{eq:pathcoeff}, with $q=p+J$.
The product over all active insertions is therefore $d_\Gamma$.
The $r$ background probabilities contribute $(1-a)^r=1/2$, which cancels the leading $2$ in Eq.~\eqref{eq:implemented}; their unused fields, and all fields after the last active background, sum to probability one.
Finally, the active operations occur in the order defining $W_\Gamma$.
After summing unused labels, each low-degree path thus contributes precisely $d_\Gamma W_\Gamma$.
\end{proof}

\subsection{Approximation error}\label{sec:tails}
Low-degree matching identifies the same partial sum $A_Q$ in the exact evolution $U$ and in the implemented operator $\widehat A_Q$.
We use this common part only in the error analysis: subtracting it leaves the exact tail and the additional terms in the finite representation, respectively.
The triangle inequality then bounds the total error by
\[
 \norm{\widehat A_Q-U}
 \le \norm{U-A_Q}+\norm{\widehat A_Q-A_Q}.
\]
Since every data word is unitary, its tail norm is bounded by the sum of absolute coefficients, as in standard LCU analyses~\cite{Taylor,Dyson}.
Weighting a degree-$d$ coefficient by $\xi^d$ will make both sums manageable.
The exact and finite sums require different arguments, however: only the latter contains incomplete label sequences.

\begin{lemma}[Approximation error and coefficient bounds]\label{lem:tails}
For $\xi>1$ with $t_\omega\xi<1$, define
\begin{equation}
 \calA(\xi):=\frac{c_*Ah\,t_\omega^p\xi^{p+1}}
 {p!\,(1-t_\omega\xi)}.
 \label{eq:Acal}
\end{equation}
If $r\calA(\xi)\le1/8$, the approximation error satisfies
\begin{equation}
 \norm{\widehat A_Q-U}\le(e^{1/8}+2)\xi^{-Q}.
 \label{eq:twotails}
\end{equation}
If also $t_\omega\xi\le1/2$, the coefficient factors satisfy
\begin{equation}
 |f(J)|\le1,\qquad |\gamma(b)|\le1,\qquad
 \xi^{p+1}|\chi_\lambda|<\tfrac12,\qquad
 \xi^{2^\beta}|z_{\lambda,\beta}|\le1.
 \label{eq:weightedcontractions}
\end{equation}
In particular, all unweighted factors used by the circuit have modulus at most one.
\end{lemma}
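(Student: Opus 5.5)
The plan is to bound the two pieces of $\norm{\widehat A_Q-U}\le\norm{U-A_Q}+\norm{\widehat A_Q-A_Q}$ separately. Both use the same device: every data word ($W_\Gamma$ or $W_z$) is unitary, so each tail is at most its absolute coefficient sum. Inserting $\xi^{d-Q}\ge1$ on degree-$d$ terms then bounds that sum by $\xi^{-Q}$ times a weighted generating function. The whole argument rests on one per-insertion weight. Summing over $\lambda$ and $q\ge p$, and using $h\,\omega_*\xi=t_\omega\xi$,
\[
 \sum_{\lambda}\sum_{q\ge p}\frac{|c_\lambda||\omega_\lambda|^qh^{q+1}\xi^{q+1}}{q!}
 \le Ac_*h\xi\sum_{q\ge p}\frac{(t_\omega\xi)^q}{q!}
 \le\frac{Ac_*h\,t_\omega^p\xi^{p+1}}{p!}\sum_{j\ge0}(t_\omega\xi)^j=\calA(\xi),
\]
where the last step uses $(p+j)!\ge p!j!$ and $j!\ge1$ with $t_\omega\xi<1$.

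\emph{Exact tail.} I would weight the paths of Eq.~\eqref{eq:fullpath} by $\xi^{|\Gamma|}$ and factor the sum over the $r$ microsteps. In a microstep with $k$ insertions, every denominator satisfies $b_{j,i}\ge i$, so their product is at least $k!$. The weighted sum for one microstep is therefore at most $\sum_k\calA^k/k!=e^{\calA}$, and over all microsteps at most $e^{r\calA}\le e^{1/8}$. This gives $\norm{U-A_Q}\le e^{1/8}\xi^{-Q}$.

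\emph{Finite-representation tail.} By Lemma~\ref{lem:matching}, $\widehat A_Q-A_Q$ is exactly the sum over assignments with $d(z)>Q$. I would bound $2\sum_z\pi(z)\prod_u|\zeta_u(z)|\,\xi^{d(z)}$ by scanning the slots in order. Three kinds of slot contribute as follows.
\begin{itemize}
\item An inactive slot contributes total marginal weight $1$.
\item An active background contributes $1-a$.
\item An active insertion contributes $a\sum_{\lambda,J}\frac{2^{-J}}{AZ_0}|\zeta|\xi^{p+J+1}$. Using $|\gamma|\le1$ and the cancellation identity~\eqref{eq:matchingidentity} without the $1/b$ factor, this is at most $\sum_\lambda\sum_{q\ge p}|c_\lambda||\omega_\lambda|^qh^{q+1}\xi^{q+1}/q!\le\calA$. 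The sum over $J$ is truncated at $B_0$, which only helps.
\end{itemize}
Next I split by whether all $r$ backgrounds occur. For complete assignments, $k$ active insertions are interleaved among $r$ backgrounds, giving at most $\binom{r+k}{k}$ orderings. The total is then at most $2(1-a)^r\sum_k\binom{r+k}{k}\calA^k=(1-\calA)^{-(r+1)}$, which is at most about $e^{1/4+\calA}$ when $r\calA\le1/8$. For incomplete assignments, all $N$ slots are active with $j<r$ backgrounds, giving at most $2\sum_{j<r}\binom{N}{j}(1-a)^j\calA^{N-j}\le2(1-a+\calA)^N$. Since $a\ge\ln2\cdot 2^{-1/r}/r\ge\ln 2/(2r)$ while $\calA\le1/(8r)$, we have $1-a+\calA<1$, so this part is small. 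Adding the two parts should fit under $2$. The main obstacle is making these constants close: I would first check the complete-assignment sum numerically against $(1-1/(8r))^{-(r+1)}$, and then absorb the incomplete part into the remaining slack. If the constant overshoots, I would tighten the estimate using the $1/b$ denominators, just as the $k!$ was used in the exact tail.

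\emph{Coefficient factors.} The bounds $|f|\le1$ and $|\gamma|\le1$ are immediate from the definitions. For $\chi_\lambda$, I would write $\xi^{p+1}|\chi_\lambda|\le\kappa\xi^{p+1}=\frac{Z_0}{a}(1-t_\omega\xi)\calA$. Then $Z_0<2$, $1-t_\omega\xi\le1$, $a\ge\ln2/(2r)$ and $r\calA\le1/8$ give at most $\frac{2}{8\ln 2}\cdot(1-t_\omega\xi)\approx0.72(1-t_\omega\xi)$. This alone is not enough, so I would need a sharper lower bound on $a$. My first choice is $a=1-2^{-1/r}\ge\ln2/(r+\ln 2)$, possibly combined with the slack in $Z_0$, to push the bound below $1/2$; this constant is delicate. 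Finally, $\xi^{2^\beta}|z_{\lambda,\beta}|=(2|\omega_\lambda|h\xi)^{2^\beta}\le(2t_\omega\xi)^{2^\beta}\le1$ whenever $t_\omega\xi\le1/2$. Since $\xi>1$, the unweighted factors are then also at most one in modulus.
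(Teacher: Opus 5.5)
Your per-insertion mass bound, the exact-tail estimate via $b_{j,i}\ge i$, and the bounds on $f$, $\gamma$ and $z_{\lambda,\beta}$ are correct and match the paper. The two steps you flag as delicate, however, are genuine gaps, and the estimates you propose do not close them.

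For the finite representation, the split into complete and incomplete assignments is too lossy. Your incomplete-part bound $2(1-a+\calA)^N$ is not small. For $K_0=1$, $N=r+1$ and $\calA=1/(8r)$, it tends to $2e^{1/8-\log2}=e^{1/8}$ as $r\to\infty$, and the complete part tends to $e^{1/8}$ as well. So your estimates total about $2e^{1/8}\approx2.27>2$. At $r=1$, $N=2$ they give $(8/7)^2+2(5/8)^2\approx2.09$. The fallback through the $1/b$ factors is not available either. The denominator $\widetilde b_u$ is reduced modulo $L_b$, and $\gamma(\widetilde b_u)=1$ outside $[1,Q]$, so the high-degree assignments you must bound carry no reliable $1/k!$.

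The missing observation is that no split is needed. Conditioned on any prefix, an inactive slot has weighted mass $1$. An active slot has mass at most $(1-a)+\eta$, with $\eta\le\calA\le1/(8r)<a$; you already noted $1-a+\calA<1$. Now sum the slots from the last one backward, as in the paper's recurrence $M_j(n)\le(1-a)M_{j-1}(n-1)+\eta M_j(n-1)$ with boundary values one. This gives $\sum_z\pi(z)\prod_u|\zeta_u(z)|\xi^{d(z)}\le1$ uniformly, with incomplete assignments included, and hence $\norm{\widehat A_Q-A_Q}\le2\xi^{-Q}$. The factor $(1-a)^r=\tfrac12$ you were trying to exploit is simply discarded.

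For $\chi_\lambda$, the lower bound on $a$ you need is Bernoulli's inequality: $\tfrac12=(1-a)^r\ge1-ra$, so $ra\ge\tfrac12$, with equality at $r=1$. It gives
\[
\xi^{p+1}\kappa=\frac{Z_0(1-t_\omega\xi)\calA}{a}\le\frac{Z_0}{8ra}\le\frac{Z_0}{4}<\frac12.
\]
Both of your bounds fall short at $r=1$. From $a\ge\log2/(2r)$ you get only $1/(2\log2)\approx0.72$. From $a\ge\log2/(r+\log2)$ you get $Z_0(1+\log2)/(8\log2)$, which exceeds $\tfrac12$ once $Z_0>1.64$. That happens for $m\ge2$, because $Z_0=2-2^{1-B_0}\to2$, so the slack in $Z_0$ cannot rescue this case. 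For $r\ge2$ your sharper bound does suffice, so only $r=1$ (where $a=\tfrac12$ exactly) is actually missing.
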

\begin{proof}
For nonnegative coefficients $w_\alpha$ of degrees $d_\alpha$, we use
\begin{equation}
 \sum_{d_\alpha>Q}w_\alpha
 \le\xi^{-Q}\sum_\alpha w_\alpha\xi^{d_\alpha}.
 \label{eq:weightedtailprinciple}
\end{equation}
Thus a bound on the full weighted sum immediately bounds the part above degree $Q$.
We will obtain bounds $e^{1/8}$ for the exact expansion and $2$ for the finite representation.

\paragraph{The exact expansion.}
In the exact expansion, omitting the cumulative denominator allows us to bound the scalar sum for one insertion by a geometric series:
\begin{align}
 \sum_{\lambda=0}^{A-1}\sum_{q=p}^{\infty}
 \frac{|c_\lambda|\,|\omega_\lambda|^qh^{q+1}\xi^{q+1}}{q!}
 &\le c_*Ah\sum_{q=p}^{\infty}\frac{t_\omega^q\xi^{q+1}}{q!}\notag\\
 &\le\frac{c_*Ah\,t_\omega^p\xi^{p+1}}{p!}
       \sum_{j=0}^{\infty}(t_\omega\xi)^j
 =\calA(\xi).
 \label{eq:insertionmass}
\end{align}
The last inequality uses $(p+j)!\ge p!$.
Restoring the cumulative denominators improves the bound, because $b_i\ge i$ for $k$ ordered insertions supplies a factor at most $1/k!$.
Consequently, their weighted sum is at most $\calA(\xi)^k/k!$, and summing insertion numbers in all $r$ microsteps yields
\begin{equation}
 \sum_\Gamma|d_\Gamma|\xi^{|\Gamma|}
 \le\left(\sum_{k=0}^{\infty}\frac{\calA(\xi)^k}{k!}\right)^r
 =e^{r\calA(\xi)}\le e^{1/8}.
 \label{eq:truemass}
\end{equation}
Equation~\eqref{eq:weightedtailprinciple} and unitarity of $W_\Gamma$ therefore bound the exact tail by $e^{1/8}\xi^{-Q}$.

\paragraph{The finite representation.}
For the finite sum, we instead condition on the labels before an active slot.
Regardless of that prefix, its insertion choices have weighted absolute sum at most
\begin{align}
 &\sum_{\lambda=0}^{A-1}\sum_{J=0}^{B_0-1}
 \frac{a2^{-J}}{AZ_0}
 \left|f(J)\gamma(\widetilde b)\chi_\lambda
             \prod_\beta z_{\lambda,\beta}^{J_\beta}\right|
 \xi^{p+J+1}\notag\\
 &\hspace{8mm}\le
 \frac{a\kappa\xi^{p+1}}{Z_0}
 \sum_{J=0}^{B_0-1}f(J)(t_\omega\xi)^J
 =: \eta(\xi)\le\calA(\xi).
 \label{eq:eta}
\end{align}
The first inequality uses $|\gamma|\le1$, $|\chi_\lambda|\le\kappa$, and
$\prod_\beta|z_{\lambda,\beta}|^{J_\beta}\le(2t_\omega)^J$.
Substituting $\kappa$, using $f(J)\le1$, and extending the geometric sum gives $\eta(\xi)\le\calA(\xi)$.
The resulting bound is independent of the preceding labels and their modular denominator.
It can therefore be iterated uniformly over the remaining slots.

A scalar recurrence gives a bound $M_j(n)$ that is uniform over all prefixes with $j$ backgrounds still required and $n$ slots remaining.
Once $j=0$, all later fields are inactive and sum to probability one.
When $n=0$, the contribution is also one even if $j>0$, because incomplete assignments are retained in the implemented sum.
These two boundary conditions, together with the uniform insertion bound, give
\begin{equation}
 \begin{aligned}
 M_0(n)&:=1,\qquad M_j(0):=1,\\
 M_j(n)&:=(1-a)M_{j-1}(n-1)+\eta(\xi)M_j(n-1)
 \quad(j,n\ge1).
 \end{aligned}
 \label{eq:massrec}
\end{equation}
Bernoulli's inequality gives $1/2=(1-a)^r\ge1-ra$, hence $ra\ge1/2$.
Since $\eta(\xi)\le1/(8r)\le a/4$, the two recurrence coefficients sum to at most $1-a+a/4\le1$.
Starting from boundary values equal to one, induction therefore gives $M_j(n)\le1$.
Restoring the leading factor of two in $\widehat A_Q$, we obtain
\begin{equation}
 2\sum_z\pi(z)\left|\prod_u\zeta_u(z)\right|\xi^{d(z)}
 \le2M_r(N)\le2.
 \label{eq:implementedmass}
\end{equation}
Every $W_z$ is a product of unitary gates, including for assignments that end before the $r$th background.
Consequently, the weighted coefficient bound also covers these incomplete assignments and gives $\norm{\widehat A_Q-A_Q}\le2\xi^{-Q}$.
Adding the exact-tail estimate proves Eq.~\eqref{eq:twotails}.

\paragraph{The circuit factors.}
The same parameter condition also ensures that the selected factors can be implemented by phase averages.
Indeed, $Z_0<2$ and $ra\ge1/2$ imply
\begin{equation}
 \xi^{p+1}\kappa
 =\frac{\calA(\xi)Z_0(1-t_\omega\xi)}a
 \le\frac{Z_0(1-t_\omega\xi)}{8ra}
 \le\frac{Z_0}4<\frac12.
 \label{eq:kappacontraction}
\end{equation}
This proves $\xi^{p+1}|\chi_\lambda|<1/2$. The additional hypothesis $t_\omega\xi\le1/2$ gives
$\xi^{2^\beta}|z_{\lambda,\beta}|=(2|\omega_\lambda|h\xi)^{2^\beta}\le1$,
while the bounds on $f$ and $\gamma$ follow directly from their definitions.
Since $\xi>1$, all unweighted factors are contractions as well.
\end{proof}

\FloatBarrier
\section{A nearest-neighbor LCU circuit}\label{sec:local}
Section~\ref{sec:expansion} has identified the finite sum to be encoded and the scalar factors that produce its coefficients.
It remains to implement these factors and their ordered data operations within the geometric constraints of Section~\ref{sec:architecture}.
We prepare the independent labels, apply the factors through phase averages, and execute $W_z$ from Eq.~\eqref{eq:recordword}; Lemma~\ref{lem:geometry} then supplies the distribution for $D\ge2$, with the one-dimensional schedule in Appendix~\ref{app:1d}.
Only the label and phase registers participate in the final interference.
Every other register must return to zero for each basis value of those signals, not merely for the prepared superposition.
For the phase gates to be defined, we assume
\begin{equation}
 \kappa\le1,\qquad 2t_\omega\le1.
 \label{eq:localcoefficientconditions}
\end{equation}
The parameter choice in Section~\ref{sec:error} will satisfy these conditions, as well as the stronger bounds needed for the approximation error.

The word width $w$ is large enough to hold the fields in Eq.~\eqref{eq:geofields} and the count, denominator, and arithmetic work specified below; Section~\ref{sec:parameters} gives an explicit choice.
For the gates, $\CX(c;t)$ denotes CNOT, $\CCX(c_1,c_2;t)$ denotes Toffoli, and $\CSWAP(c;a,b)$ exchanges $a,b$ under control $c$.
Arguments before a semicolon are controls and those after it are targets; $\SWAP(a,b)$ is an uncontrolled exchange.
On equal-width registers, $\CX$ and $\SWAP$ mean their bitwise versions.
Appendix~\ref{app:gates} reduces these operations to the elementary gate set.

\subsection{PREPARE: independent labels and phase qubits}\label{sec:prepare}
General state preparation may require conditional probabilities~\cite{GroverRudolph} or uniformly controlled rotations~\cite{Mottonen}.
Here the bitwise factorization proved in Section~\ref{sec:distribution} removes that requirement: one parallel layer of single-qubit gates suffices.
Only the local control fields in Eq.~\eqref{eq:geofields} are distributed beyond the source.
For each slot $u\in\{0,\ldots,N-1\}$, we allocate the signal registers
\begin{equation}
 \reg X_u,\quad\reg\Lambda_u,\quad\reg J_u,\quad
 \reg S_u^F,\quad\reg S_u^b,\quad\reg S_u^\chi,\quad
 (\reg S_{u,\beta}^{\omega})_{0\le\beta<m}.
 \label{eq:signals}
\end{equation}
The first three registers store $X_u$, $\lambda_u$, and $J_u$ in $1$, $\log_2A$, and $m$ qubits.
The other $m+3$ qubits supply the factors of $\zeta_u$: $\reg S_u^F$ for $f(J_u)$, $\reg S_u^b$ for $\gamma(\widetilde b_u)$, $\reg S_u^\chi$ for $\chi_{\lambda_u}$, and $\reg S_{u,\beta}^\omega$ for $z_{\lambda_u,\beta}$ when $J_{u,\beta}=1$.
Their selected phases are applied only when $e_uX_u=1$.
The complete signal register $\reg S$ consists of these registers over all slots.
A rotation $R_y(2\arcsin\sqrt a)$ prepares $X_u$, where $R_y(\theta):=e^{-i\theta Y/2}$; Hadamards prepare the address and phase qubits, and the single-qubit rotations of Eq.~\eqref{eq:Jprep} prepare the offset.
Their tensor product consequently satisfies
\begin{equation}
 \PREPARE\ket0_{\reg S}
 =\sum_{z,s}\sqrt{\frac{\pi(z)}{2^{n_s}}}\ket{z,s}_{\reg S},
 \qquad n_s:=N(m+3),
 \label{eq:prepared}
\end{equation}
where $s$ is the string of phase-qubit bits. Thus the preparation of all original labels and phase qubits has
\begin{equation}
 \begin{aligned}
 \text{depth }\dcnt_{\rm prep}&=O(1),\\
 \text{gate count }\gcnt_{\rm prep}&=O(Nw),\\
 \text{signal ancillas }\acnt_{\rm prep}&=O(Nw).
 \end{aligned}
 \label{eq:preparationresources}
\end{equation}

\subsection{SELECT: source arithmetic and coefficient phases}\label{sec:source}
The stopping rule and cumulative denominators depend on earlier labels, so the source is the natural place to compute them.
Together with the original degree labels, these prefix values determine the factorial and reciprocal phase gates.
By contrast, a table-dependent factor can be applied beside its assigned interaction using an angle already specified classically.
This separation avoids transporting the full prefix calculation and keeps each record logarithmic in width.
Whenever source registers need alignment, we use the fixed path permutations of Lemma~\ref{lem:path-permutation}.

Consecutive slots occupy $O(w)$ positions each on the auxiliary path.
Besides the signal fields, slot $u$ has a denominator register $\reg B_u$, an execution bit $\reg E_u$, and $O(w)$ arithmetic work qubits; count registers $\reg Z_u$ are allocated for $0\le u\le N$.
All start in zero.
The denominator requires $\log_2L_b$ bits, while $\lceil\log_2(N+1)\rceil$ count bits represent every value from zero to $N$.

Processing slots in increasing order, the prefix circuit starts slot $u$ with $j_u=\sum_{a<u}(1-X_a)$ in $\reg Z_u$.
A clean comparison first writes $e_u=\ind{j_u<r}$ into $\reg E_u$.
Under control $X_u$, the preceding denominator is copied into the fresh $\reg B_u$ and incremented by $J_u+p+1$ modulo $L_b$; for $u=0$, the copy is omitted because $\widetilde b_{-1}=0$.
This realizes Eq.~\eqref{eq:moddenom} while retaining the earlier denominator for reversal.
The count is then incremented by $1-X_u$ and swapped into the zero register $\reg Z_{u+1}$.
Each arithmetic operation clears its carries before the next begins.
The recurrences in Appendix~\ref{app:gates} cost depth $O(w)$ per operation, and fixed alignment between neighboring slots adds depth $O(w)$ and $O(w^2)$ gates.
Thus the complete circuit $\calT$ computes all execution bits and denominators with
\begin{equation}
 \begin{aligned}
 \text{depth }\dcnt_{\rm pre}&=O(Nw),\\
 \text{gate count }\gcnt_{\rm pre}&=O(Nw^2),\\
 \text{work ancillas }\acnt_{\rm pre}&=O(Nw).
 \end{aligned}
 \label{eq:prefixresources}
\end{equation}
Retaining the inputs makes the computation reversible.
In particular, the inverse swap and count update recover $j_u$ before the comparison is reversed to clear $\reg E_u$.

\paragraph{Scalar coefficients.}
The prefix values are now available, so the source-dependent factors can be applied through $\Dph$ from Eq.~\eqref{eq:phaseaverage}.
Using separate phase qubits ensures that their joint matrix element multiplies the factors rather than adding them.

For each classical value $j_0\in\{0,\ldots,B_0-1\}$, a reversible comparison computes $e_uX_u\ind{J_u=j_0}$, which controls $\Dph(f(j_0))$ on $\reg S_u^F$; the comparison is then reversed.
Enumerating $b_0\in\{1,\ldots,Q\}$ similarly applies $\Dph(\gamma(b_0))$ on $\reg S_u^b$ under $e_uX_u\ind{\widetilde b_u=b_0}$.
Other denominator encodings have $\gamma=1$ and need no gate.
These are fixed lists of controlled rotations with classically specified angles, not coherent numerical evaluations of factorials or reciprocals.
Each comparison and its inverse cost $O(w)$ depth and gates.
Since $B_0=O(Q)$ and different slots run in parallel, the two source tables together have
\begin{equation}
 \begin{aligned}
 \text{depth }\dcnt_{0}&=O(Qw),\\
 \text{gate count }\gcnt_{0}&=O(NQw),\\
 \text{work ancillas }\acnt_{0}&=O(Nw).
 \end{aligned}
 \label{eq:sourcecoefficientresources}
\end{equation}
Each comparison is reversed after its phase gate, allowing the slot workspace to be reused.
We denote the resulting diagonal circuit by $\mathcal D_0$.
Its depth $O(Qw)$ is a separate contribution from the $O(w)$ local processing of a transported record.
The remaining factors, $\chi_{\lambda_u}$ and $\prod_{\beta:J_{u,\beta}=1}z_{\lambda_u,\beta}$, depend on the selected residual entry.
The distribution circuit brings their controls to that entry, where Lemma~\ref{lem:local} implements the specified phases.

The fields needed at the data sites are
\begin{equation}
 \reg O_u:=
 (\reg E_u,\reg X_u,\reg\Lambda_u,\reg J_u,
 \reg S_u^\chi,\reg S_{u,*}^{\omega}).
 \label{eq:geofields}
\end{equation}
The field $\reg S_{u,*}^{\omega}$ consists of the $m$ frequency-phase qubits.
Thus $\reg O_u$ is a subregister of the original signals and prefix outputs, padded to width $w$ for transport.
The factorial and denominator phase qubits, together with the other prefix values, stay at their original slots until $\calT^\dagger$ clears the computation.

\subsection{SELECT: shared local controls}\label{sec:local-use}
At the data, a subblock from Section~\ref{sec:architecture} has volume $v_0=b^D=\Theta(w)$ for $D\ge2$, or $v_0=w$ in one dimension.
The padding in Appendices~\ref{app:geometry} and~\ref{app:1d} makes the subblock counts powers of two.
A fixed binary address format is obtained by assigning every site the same power-of-two number $q_{\rm rec}$ of residual entries and padding missing entries with zero coefficients.
Each entry gives classical data $(c_\lambda,\omega_\lambda,P_\lambda)$ for its local gates.
In the snake ordering of Section~\ref{sec:geometry}, there are $T_0:=q_{\rm rec}v_0$ entries per subblock, also a power of two.
The address therefore separates into subblock, site, and entry fields:
\begin{equation}
 \lambda=gT_0+i,
 \qquad T_0=q_{\rm rec}v_0,
 \qquad i:=q_{\rm rec}j+\mu.
 \label{eq:address}
\end{equation}
Here $g$ identifies the subblock, $j\in\{0,\ldots,v_0-1\}$ identifies a site along its snake, and $\mu\in\{0,\ldots,q_{\rm rec}-1\}$ identifies the entry at that site.
Equivalently, $i=(j,\mu)$ is the local address and $\lambda=(g,i)$ separates global and local decoding.
A padded entry retains a definite assigned site $(g,j)$, so the same decoder works on the entire binary address range.

The constant-width data register at site $j$ of $\Omega_g$ is denoted by $\reg D_{g,j}$; together, these registers form $\reg D$.
The arriving copy of $\reg O_u$ occupies the shared word $\reg R_g$.
Near each entry $i$, a selector $\reg a_{g,i}$ determines whether its local operation is active; these selectors and $O(w)$ further clean qubits form $\reg W_g$.
Distributing $\reg R_g$ and $\reg W_g$ along the snake uses only constantly many qubits per data site.
The classical table specifies angles and supports and does not occupy a quantum memory register.
At this point several distinct qubits may share a coarse coordinate, but their number is bounded independently of $M$.
Appendix~\ref{app:geometry} separates them by a fixed-scale refinement and implements every prescribed operation with physical nearest-neighbor gates.
Figure~\ref{fig:layout} shows the arrangement.

\begin{figure}[tbp]
\centering
\includegraphics[width=.90\linewidth]{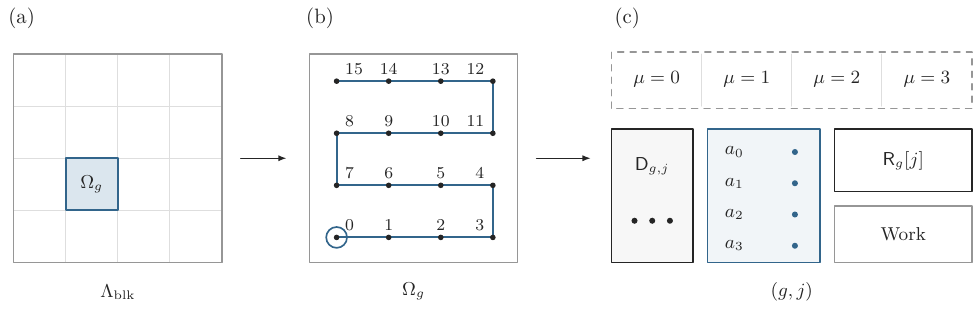}
\caption{Addresses and local registers. (a) Each subblock $\Omega_g$ contains $v_0=\Theta(w)$ data sites. (b) A nearest-neighbor snake orders the sites by $j$; the circled site has $j=0$. (c) The dashed table indexes classical triples $(c_\lambda,\omega_\lambda,P_\lambda)$, whereas the dots and solid boxes denote quantum registers. Here $q_{\rm rec}=4$, $a_\mu$ abbreviates $a_{g,q_{\rm rec}j+\mu}$, and $\mathsf R_g[j]$ is the part of the shared word near data register $\mathsf D_{g,j}$. The selectors and additional work bits together form $\mathsf W_g$. The address is $\lambda=gT_0+i$, with $i=q_{\rm rec}j+\mu$ and $T_0=q_{\rm rec}v_0$. Equation~\eqref{eq:microcell} places the distinct qubits at distinct physical sites.}
\label{fig:layout}
\end{figure}

A basis value $y$ of $\reg R_g$ has fields $e,X,(g_*,i_*),J,S^\chi,(S^\omega_\beta)_\beta$ as in Eq.~\eqref{eq:geofields}.
The local action separates into the insertion and the individual product-formula stages:
\begin{equation}
 \begin{aligned}
 D_{g,0}(y)&:=
 \begin{cases}P_{gT_0+i_*},&eX=1\text{ and }g_*=g,\\ I,&\text{otherwise},\end{cases}\\
 D_{g,k}(y)&:=
 \begin{cases}
 \displaystyle\prod_{\substack{a:\,x_a\in\Omega_g\\\operatorname{col}(a)=\kappa_k}}
 e^{-i\alpha_kh h_aP_a},&e(1-X)=1,\\
 I,&\text{otherwise},
 \end{cases}\qquad 1\le k\le L.
 \end{aligned}
 \label{eq:local-stage-operators}
\end{equation}
The assignment of $x_a$ was fixed after Eq.~\eqref{eq:color}.
Within $D_{g,k}$, the terms have disjoint declared supports and hence commute, although a term may act across the boundary of $\Omega_g$.
These operators are therefore understood on the full data space.
The data operator of a complete local invocation is
\[
 W_g(y):=D_{g,L}(y)\cdots D_{g,0}(y).
\]
The phase $\phi_g(y)$ is the sum of the branch phases of the $\Dph$ gates implementing $\chi_\lambda$ and the factors $z_{\lambda,\beta}$ with $J_\beta=1$ in Eq.~\eqref{eq:factors}, at the selected entry in $\Omega_g$.
It vanishes unless the insertion is active and its address selects this subblock, namely $eX=1$ and $g_*=g$.

With the same record at every subblock, the staged product $\mathscr D(y)$ is the selected insertion, $C(h)$, or $I$, according to its branch.
All coefficient phases occur in the insertion stage, so Eq.~\eqref{eq:distribution-local-interface} uses $\phi_{g,0}(y):=\phi_g(y)$ and $\phi_{g,k}(y):=0$ for $k\ge1$.
For an insertion, only the addressed group can act nontrivially on a fixed basis branch; for a formula stage, all groups act on terms of the same disjoint-support color.
Thus operators within a stage commute across groups, which is the hypothesis required by Lemma~\ref{lem:geometry}.
The staged product still retains the order between different, potentially noncommuting stages.

\FloatBarrier
\Needspace{17\baselineskip}
\begin{lemma}[Selection within a subblock]\label{lem:local}
Suppose $\reg R_g$ contains the fields in Eq.~\eqref{eq:geofields}, $T_0=\Theta(w)$, and $\reg W_g$ is zero. The address-dependent phases and data operations assigned to $\Omega_g$ have a nearest-neighbor implementation with
\begin{equation}
 \begin{aligned}
 \text{depth }\dcnt_{\rm loc}&=O(w),\\
 \text{gate count }\gcnt_{\rm loc}&=O(w^2),\\
 \text{ancillas }\acnt_{\rm loc}&=O(w).
 \end{aligned}
 \label{eq:localresources}
\end{equation}
The circuit restores $\reg R_g$ to its original basis value and physical positions and clears $\reg W_g$. The bounds include the insertion and all $L$ product-formula stages. Each stage can also be invoked separately with the same bounds and with its work cleared; coefficient phases occur only in the insertion stage. Across subblocks, the resulting operator has the stage order of Eq.~\eqref{eq:distribution-ordered-stages}.
\end{lemma}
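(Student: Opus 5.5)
The circuit is built as compute--use--uncompute around a reversible address decoder, with every record field used only as a computational-basis control. First, in depth $O(w)$, a fixed path permutation (Lemma~\ref{lem:path-permutation}) moves the bits of $\reg R_g$ along the snake so that the fields $e,X,g_*,i_*$ sit near the positions where they are needed. A comparison of the subblock field $g_*$ with the classical constant $g$ is a CNOT/Toffoli cascade whose pattern of negations is fixed by $g$. Combined with $e$ and $X$, it writes the bit $\sigma=eX\ind{g_*=g}$ into a clean work qubit. To keep depth $O(w)$ rather than $O(w^2)$, the $\log_2 w$ comparison bits are combined by a linear chain of Toffolis with clean carries, using Appendix~\ref{app:gates}.

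Next, a standard one-hot demultiplexer writes $\sigma$ into the selector $\reg a_{g,i}$ with $i=i_*$. It has $\log_2 T_0=O(\log w)$ levels of controlled swaps. At level $t$ the current one-hot pattern of length $2^t$ is split by the $t$-th address bit using $\CSWAP$s, doubling its length. Laid along the snake, level $t$ requires communication over distance $O(T_0/2^t)$ and $O(2^t)$ controlled swaps in parallel. The address bit has to be fanned out to $2^t$ positions; this is done by a CNOT chain, which is itself routed along the snake. Summing over levels gives depth $O(T_0)=O(w)$ and $O(w\log w)\subseteq O(w^2)$ gates, apart from the fan-out copies, which are cleared afterwards. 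I expect this to be the main obstacle: the nearest-neighbor routing must keep each level within depth proportional to its span. That makes the geometric sum $\sum_t T_0/2^t$ converge, and the constant qubits per site must still accommodate the fan-out copies.

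With selectors in place, each entry $i$ applies its operations controlled by $\reg a_{g,i}$. These are the Pauli $P_{gT_0+i}$, as a controlled Pauli on bounded support near $x_\lambda$, together with the diagonal gates $\Dph(\chi_\lambda)$ on $\reg S^\chi$ and $\Dph(z_{\lambda,\beta})$ on $\reg S^\omega_\beta$ controlled by $J_\beta$. For these, the $m=O(\log w)$ bits of $\reg J$ and the $m+1$ phase qubits are fanned out along the snake by a CNOT chain of depth $O(w)$, so each entry sees local copies. These gates are legal by Eq.~\eqref{eq:localcoefficientconditions}. Entries with overlapping supports are scheduled by a constant coloring, so this stage has depth $O(1)$ after fan-out. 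On a basis branch exactly one selector is set, so the data operator equals $D_{g,0}(y)$ and the accumulated phase equals $\phi_g(y)$. The phases are applied to copies that agree in basis value with the originals, by the correlated-copy principle of Eq.~\eqref{eq:overviewcopy}. Then the fan-out, decoder, comparison and permutation are reversed in opposite order. Because none of the data or diagonal gates change any control's basis value, this reversal returns $\reg W_g$ to zero and restores $\reg R_g$ to its original positions. For $y=0$ we have $e=0$, so nothing acts.

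The formula stages $k\ge1$ need no decoding. The bit $e(1-X)$ is fanned out along the snake in depth $O(w)$. Each term $a$ with $x_a\in\Omega_g$ and $\operatorname{col}(a)=\kappa_k$ then applies the controlled rotation $e^{-i\alpha_khh_aP_a}$, with constant depth by the disjointness of supports within a color, and the fan-out is undone. This realizes $D_{g,k}$, costs $O(w)$ depth and gates with cleared work, and can be invoked separately. Resources total $O(w)$ depth, $O(w^2)$ gates and $O(w)$ ancillas over the $L+1=O(1)$ stages. The stage order across subblocks follows from the commutation within stages established before the lemma, together with Lemma~\ref{lem:geometry}'s stage-by-stage schedule.
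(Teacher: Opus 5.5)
Your overall structure matches the paper: you compute $eX\ind{g_*=g}$, decode it to a one-hot selector, apply the controlled phases and controlled Pauli, uncompute, and fan out $e(1-X)$ for each formula stage. Applying the diagonal gates $\Dph$ to correlated CNOT copies of the phase qubits is also legitimate. There are, however, two gaps in the resource accounting.

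\textbf{The decoder's depth.} You process $i_*$ most-significant bit first, so at level $t$ the $2^t$ swap pairs are spread across the whole subblock at spacing $T_0/2^t$. The swaps have geometrically shrinking span, but the copies of $i_t$ must reach pairs at distance $\Theta(T_0)$ from wherever $i_t$ sits. Any nearest-neighbor fan-out at that level therefore has depth $\Omega(T_0)$. Your levels run sequentially, so the total is $\Theta(w\log w)$, not $O(w)$. This is exactly the obstacle you flagged, and the proposal does not overcome it.

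The paper decodes least-significant bit first. After bit $b$ the selector lies among the first $2^{b+1}$ positions. A fixed interleaving (Lemma~\ref{lem:path-permutation}) makes each pair $(a_{g,j},a_{g,j+2^b})$ adjacent, and the CNOT chain for $i_b$ has length $O(2^b)$. The $O(\log w)$ within-subblock address bits are aligned at the start of the selector path. Each level then costs $O(2^b+\log w)$, which sums to $O(w)$. Your scheme could instead be repaired by distributing all address bits in one pipelined sweep before the swap levels, but that needs its own placement and density argument.

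\textbf{The coefficient-phase stage.} The offset width is $m=\lceil\log_2(Q-p)\rceil=\Theta(\log M)=\Theta(w)$, not $O(\log w)$. Likewise the subblock field $g_*$ has $\Theta(w)$ bits, though your comparison still costs $O(w)$. Giving every one of the $T_0=\Theta(w)$ entries simultaneous local copies of the $2m+1$ fields $J_\beta$, $S^\chi$, $S^\omega_\beta$ therefore needs $\Theta(w^2)$ ancillas, i.e.\ $\Theta(w)$ per site. That violates both $\acnt_{\rm loc}=O(w)$ and bounded density. Fanning out one field at a time instead costs depth $\Theta(w^2)$.

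The fields must be moved past the selectors rather than copied to all of them. The paper does this with a fixed pair-swap network. The constant-width pairs $(S^\chi,0)$ and $(S^\omega_\beta,J_\beta)$ cross all selector pairs $(a_{g,i},0)$, and encounter $(i,j)$ occurs in round $T_0-1-i+j$. Each encounter applies a Toffoli into a clean bit, the controlled $\Dph$, and a second Toffoli. All $T_0(m+1)$ controlled phases thus fit in $T_0+m=O(w)$ rounds, using $O(w)$ qubits and $O(w^2)$ gates. Reversing the exchanges then restores $\reg R_g$.
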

\begin{proof}
The insertion and background branches require different local operations, but both must end with the shared word restored and the work clean.
For the insertion, we decode the address, apply its phases and Pauli string, and reverse the decoder and permutations, following reversible compute--use--uncompute~\cite{Bennett}.
For a background stage, distributing its branch bit suffices to control all assigned rotations; clearing those copies then restores the work.
We verify the action and cost of each construction below.

\paragraph{Address decoding.}
The binary address is useful for transport, but the data operation needs a control beside its assigned site.
To convert between these representations, we first compute
\begin{equation}
 h_g:=eX\ind{g_*=g}
 \label{eq:blockhit}
\end{equation}
into a zero condition bit and copy it to $\reg a_{g,0}$.
The bits in the binary expansion $i_*=\sum_{b=0}^{\log_2T_0-1}2^bi_b$, processed in increasing order of $b$, determine the successive positions of this selector.
At level $b$, the decoder applies
\begin{equation}
 \prod_{j=0}^{2^b-1}\CSWAP(i_b;\reg a_{g,j},\reg a_{g,j+2^b}).
 \label{eq:decoder}
\end{equation}
Initially, the only possibly nonzero selector is $a_{g,0}=h_g$.
After processing bits through $b$, this value is at $i_*\bmod2^{b+1}$.
Indeed, the next controlled swap either leaves it there or adds $2^{b+1}$, according to $i_{b+1}$, while all other selectors remain zero.
Induction therefore gives
\begin{equation}
 a_{g,i}=eX\ind{g_*=g}\ind{i_*=i}.
 \label{eq:onehot}
\end{equation}

The controlled swaps in Eq.~\eqref{eq:decoder} need not initially join adjacent selectors.
A fixed interleaving of the two halves makes their targets adjacent, so Lemma~\ref{lem:path-permutation} supplies the required routing without consulting the quantum address.

We first align the $O(\log w)$ low address bits near the start of the selector path. At level $b$, we interleave the first $2^{b+1}$ selectors in the order
\[
 a_{g,0},a_{g,2^b},a_{g,1},a_{g,2^b+1},\ldots,
 a_{g,2^b-1},a_{g,2^{b+1}-1}.
\]
After interleaving, every target pair in Eq.~\eqref{eq:decoder} is adjacent.
A CNOT chain distributes $i_b$ to neighboring zero work bits~\cite{MooreNilsson}, placing one control beside each pair so that the controlled swaps run in parallel.
Because these swaps preserve the copied controls, their CNOT chain can then be reversed before undoing the interleaving.
This order clears the controls while restoring the selector positions for the next level.

Routing at level $b$ costs depth $O(2^b+\log w)$ and $O(2^{2b}+\log w)$ gates, including alignment of the low address bits.
As $T_0=\Theta(w)$, summing over the levels gives
\[
 \sum_{b=0}^{\log_2T_0-1}(2^b+\log w)=O(w),\qquad
 \sum_{b=0}^{\log_2T_0-1}(2^{2b}+\log w)=O(w^2).
\]
The initial alignment and the reversible comparison in Eq.~\eqref{eq:blockhit} fit the same bounds; all temporary controls use $O(w)$ clean qubits.

\paragraph{Coefficient phases.}
Once the selectors are available, each must meet the phase fields for its table entry.
A fixed pair-swap network arranges all these meetings without repeatedly routing an entire word.
We pair selector $a_{g,i}$ with a clean bit to form $\mathsf A_i:=(a_{g,i},0)$; the phase-field pairs are $\mathsf B_0:=(S_g^\chi,0)$ and $\mathsf B_{\beta+1}:=(S_{g,\beta}^\omega,J_{g,\beta})$ for $0\le\beta<m$.
Thus each frequency pair contains both its phase qubit and the offset bit that controls whether its factor is present.
The $\mathsf A$ pairs belong to $\reg W_g$, while the non-work fields in the $\mathsf B$ pairs belong to $\reg R_g$.
The corresponding coefficients are abbreviated as $\chi_{g,i}:=\chi_{gT_0+i}$ and $z_{g,i,\beta}:=z_{gT_0+i,\beta}$.
A fixed permutation places these existing registers in the order
\begin{equation}
 \mathsf{A}_0\cdots \mathsf{A}_{T_0-1}\mathsf{B}_0\cdots \mathsf{B}_m.
 \label{eq:pairorder}
\end{equation}
All pairs use existing input or work qubits, including $O(w)$ empty positions used for alignment.

Moving each $\mathsf B_j$ to the left of every $\mathsf A_i$ creates the required encounters.
The pair $(i,j)$ meets in round $T_0-1-i+j$ at left position $i+j$, so the encounters within a round are disjoint.
Immediately before the exchange with $\mathsf B_0$, the selector controls $\Dph(\chi_{g,i})$ on $S_g^\chi$.
At an encounter with $\mathsf B_{\beta+1}$, a Toffoli computes $a_{g,i}J_{g,\beta}$ into the clean bit, that bit controls $\Dph(z_{g,i,\beta})$, and a second Toffoli clears it.
Each encounter has constant cost on adjacent constant-width pairs.
The $T_0+m$ rounds contain $T_0(m+1)$ encounters in total.
After these phase operations, reversing only the pair exchanges and the initial permutation restores every field without removing the phases.
Including this restoration, the depth is $O(w)$ and the gate count is $O(w^2)$.
Figure~\ref{fig:crossing} depicts this fixed swap network~\cite{OGorman}.

\paragraph{Data operations.}
The same selector controls each nonidentity factor of $P_\lambda$.
Writing $S_{\rm ph}:=\operatorname{diag}(1,i)$, controlled $X$ is CNOT, controlled $Z$ is CNOT conjugated by target Hadamards, and controlled $Y$ is CNOT preceded by $S_{\rm ph}^\dagger$ and followed by $S_{\rm ph}$.
These identities preserve the exact phases.
Because each string has constantly many factors within a fixed radius of its assigned site, every controlled gate can be routed locally and its swaps reversed, even when the string crosses a subblock boundary.
A constant coloring resolves conflicts, giving insertion depth $O(1)$ and $O(w)$ gates per subblock.
The selectors are no longer needed after this use, so reversing the decoder and the initial comparison clears them.

A background stage needs only the branch bit $b_g:=e(1-X)$, which is distributed along the subblock by a CNOT chain.
For $F_a=h_aP_a$, the required controlled operation is $e^{-i\alpha_kh_ahP_a}$.
Changing the Pauli factors to $Z$ and accumulating their parity in one data qubit reduces this to controlled $R_z(2\alpha_kh_ah)$, where $R_z(\theta):=e^{-i\theta Z/2}$; the parity and basis changes are then reversed.
Thus the zero-control branch is exactly the identity.
Controlled $R_z$ uses two CNOTs and two single-qubit rotations, whereas a scalar term $P_a=I$ contributes $\operatorname{diag}(1,e^{-i\alpha_kh_ah})$ on the control itself.
These are the standard controlled-rotation and controlled-phase constructions~\cite[Lemmas~5.2 and~5.4]{Barenco}, detailed in Appendix~\ref{app:gates}.

Each rotation has constant depth and gate count.
Disjoint term supports within a color allow parallel execution, with only a constant further coloring for the fixed routing neighborhoods.
For $D\ge2$, stage $V_k$ is completed in every subblock before $V_{k+1}$ begins, retaining the product-formula order.
For $D=1$, the staggered schedule in Appendix~\ref{app:1d} preserves the order of operations with overlapping supports and implements the same ordered product.
There are $O(w)$ assigned terms and only $L=O(1)$ stages, so the rotations cost $O(w)$ gates.
Distributing and clearing $b_g$ adds depth and gate count $O(w)$.
A separately invoked stage performs this cleanup itself; in a complete invocation the bit may instead be retained through all stages and cleared at the end.
Since $L$ is fixed and work is reused only after uncomputation, both choices satisfy the same bounds.

The stages of one local use have the following costs; the allocated $O(w)$ work qubits are reused after each stage.
\begin{center}
\begin{tabular}{lcc}
\toprule
Operation & Depth & Gate count\\
\midrule
Selector preparation and cleanup & $O(w)$ & $O(w^2)$\\
Coefficient phases and restoration of pairs & $O(w)$ & $O(w^2)$\\
Selected Pauli string $P_\lambda$ & $O(1)$ & $O(w)$\\
Product-formula stages and control cleanup & $O(w)$ & $O(w)$\\
\bottomrule
\end{tabular}
\end{center}
Adding these contributions gives Eq.~\eqref{eq:localresources}.
Throughout the construction, every control keeps its basis value, every permutation is reversed, and all temporary registers are cleared.
Thus each local call has both the action and the resource bounds required by the distribution circuit.
\end{proof}

\begin{figure}[tbp]
\centering
\includegraphics[width=.90\linewidth]{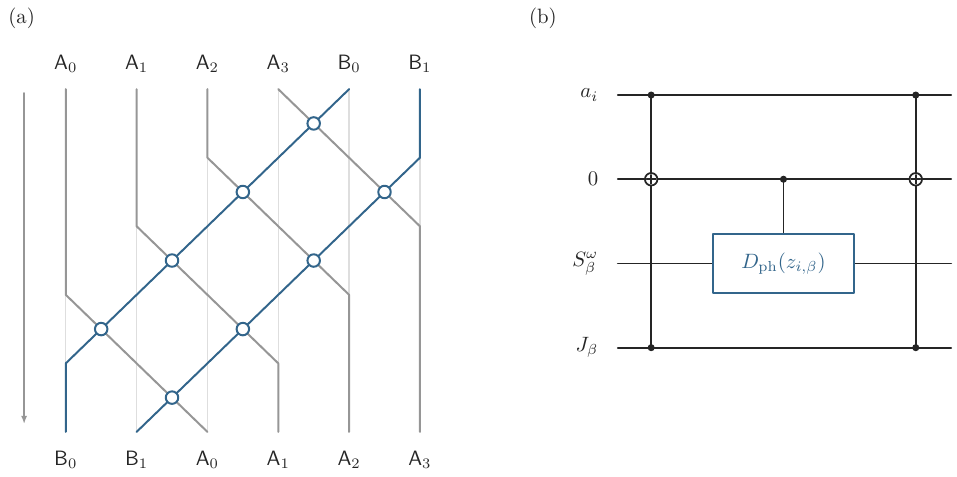}
\caption{Local phase operations in a fixed pair-swap network; the subblock index $g$ is suppressed. (a) Four selector pairs $\mathsf A_i=(a_i,0)$ cross the two phase-field pairs $\mathsf B_0=(S^\chi,0)$ and $\mathsf B_1=(S^\omega_0,J_0)$. Columns denote fixed pair positions and lines track their contents through five rounds. Each circle marks a phase operation immediately before a pair exchange. (b) At a frequency encounter, a Toffoli computes $a_iJ_\beta$ into the clean bit; that bit controls $D_{\rm ph}(z_{i,\beta})$, and a second Toffoli clears it. Reversing the pair exchanges restores every field while retaining the phases.}
\label{fig:crossing}
\end{figure}

For the complete local circuit $\mathcal L_g:=\mathcal L_{g,L}\cdots\mathcal L_{g,0}$, we therefore have
\begin{equation}
 \mathcal L_g\ket y_{\reg R_g}\ket0_{\reg W_g}\ket\psi_{\reg D}
 =e^{i\phi_g(y)}\ket y_{\reg R_g}\ket0_{\reg W_g}W_g(y)\ket\psi_{\reg D}.
 \label{eq:localsemantics}
\end{equation}
A single-stage call has the same identity with $\mathcal L_{g,k}$ and $D_{g,k}$ in place of $\mathcal L_g$ and $W_g$.
Only the insertion stage $k=0$ contributes $\phi_g$; the other stages contribute no such phase.
Because these identities hold on every basis word, they also hold when the word is entangled with the data or a reference.

\paragraph{Block encoding.}
We can now assemble the block encoding.
The source circuits prepare the signals, compute the execution bits and denominators, and apply the source phases.
For $D\ge2$, the local circuits just proved satisfy Lemma~\ref{lem:geometry}, whose implementation is given in Appendix~\ref{app:geometry}; for $D=1$, Appendix~\ref{app:1d} supplies the corresponding ordered distribution circuit.
In either case, $\mathcal N$ implements the address-dependent phases and the ordered data word $W_z$, then clears all distributed copies.
Reversing the prefix circuit therefore produces the $\SELECT$ and $B$ of Eq.~\eqref{eq:Bdefinition}.
On each basis signal $(z,s)$, SELECT preserves the signal, applies its specified phase and $W_z$, and returns all work to zero.
Taking the prepared matrix element and averaging the independent phases consequently yields
\begin{equation}
 \iota_0^\dagger B\iota_0
 =\sum_z\pi(z)\left(\prod_u\zeta_u(z)\right)W_z
 =\widehat A_Q/2.
 \label{eq:blockencoding}
\end{equation}
This basis-wise cleanup will also make the zero-work subspace invariant under the inverse calls used in amplification.

\subsection{Selection of ordered local unitaries}\label{sec:compiler-interface}
The distribution proof uses bounded support, the prescribed stage order, and preservation of the record with exact work cleanup.
Since these conditions also hold for general bounded-support unitaries, the same circuit selects an ordered product from any explicit local unitary table with the stated support and density bounds.
The following corollary records the resulting SELECT bound, with the one-dimensional ordering condition checked separately.

\begin{corollary}[Selection of ordered local products]\label{cor:local-product-select}
Fix the dimension and local gate model. Let a cube have side $\ell$ and $v=\ell^D$ sites, and let $(U_\lambda)_{0\le\lambda<A}$ be an explicit table of unitaries, each supported within a fixed radius of an assigned site, with a bounded number assigned to any site. The table is padded with identities to the binary address format of Eq.~\eqref{eq:address}. The $N\ge1$ address records occupy the source path of Section~\ref{sec:layout-overview} in their specified order, with width at most $w$ per record. Here $w$ is a power of two large enough for the address and a constant number of control bits, and
\[
 w\ge\log_2\ell,\qquad Nw+w\log\ell=O(\ell).
\]
For all sufficiently large $\ell$, there is a nearest-neighbor circuit $\SELECT_{\rm prod}$ satisfying
\begin{equation}
 \SELECT_{\rm prod}\ket0_{\reg W}\ket{\boldsymbol\lambda}_{\reg S}\ket\psi_{\reg D}
 =
 \ket0_{\reg W}\ket{\boldsymbol\lambda}_{\reg S}
 U_{\lambda_{N-1}}\cdots U_{\lambda_0}\ket\psi_{\reg D},
 \label{eq:ordered-local-select}
\end{equation}
where $\boldsymbol\lambda:=(\lambda_0,\ldots,\lambda_{N-1})$. Its depth is $O(\ell)$, its gate count is $O(v\ell)$, and its control and work space is $O(v)$ at bounded density in a constant-factor enlargement of the cube. It restores the positions of all registers and clears $\reg W$ for every input, including a superposition of addresses.
\end{corollary}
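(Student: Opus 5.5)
The plan is to reduce the corollary to Lemma~\ref{lem:geometry} for $D\ge2$, and to the one-dimensional schedule of Appendix~\ref{app:1d} for $D=1$. Both are applied with a single stage ($L=0$) whose local action is the selected table unitary. For a basis record $y=\lambda=gT_0+i_*$ (padded to width $w$ with a constant number of control bits), I set $D_{g,0}(y):=U_{\lambda}$ if $g_*=g$ and $D_{g,0}(y):=I$ otherwise, and take all phases zero.

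The all-zero record must induce the identity, while address $0$ is a legitimate table entry. To handle this I will reserve one control bit as a validity flag $e$, set to one in every genuine record, and define $D_{g,0}(y):=I$ whenever $e=0$. Equivalently, I can shift addresses by one and pad entry $0$ with the identity. Either choice changes $w$ only by a constant.

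Within a stage, at most one group acts nontrivially on a fixed basis record. The operators $D_{g,0}(y)$ therefore commute across groups trivially, which is exactly the hypothesis of Lemma~\ref{lem:geometry}. The lemma then yields the product $\mathscr D(y_{N-1})\cdots\mathscr D(y_0)=U_{\lambda_{N-1}}\cdots U_{\lambda_0}$. It also gives depth $O(\ell)$, gates $O(\ell^{D+1})=O(v\ell)$, ancillas $O(\ell^D)=O(v)$ at bounded density, and exact restoration of records, copies and positions.

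It remains to construct the local circuit $\mathcal L_{g,0}$ meeting Eq.~\eqref{eq:distribution-local-interface} within depth $O(w)$, $O(w^2)$ gates and $O(w)$ work. I will reuse the address-decoding part of the proof of Lemma~\ref{lem:local} verbatim. First compute $h_g=e\,\ind{g_*=g}$ by a clean comparison, then run the controlled-swap decoder Eq.~\eqref{eq:decoder} with the interleavings of Lemma~\ref{lem:path-permutation}. This produces one-hot selectors $a_{g,i}=h_g\ind{i_*=i}$ beside each entry's assigned site.

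The one genuinely new step, and the one I expect to be the main obstacle, is applying a controlled $U_\lambda$ for an arbitrary bounded-support unitary rather than a Pauli string. Since the support is bounded and the model allows exact single-qubit gates and CNOTs, each controlled-$U_\lambda$ is a fixed-size unitary on $O(1)$ qubits. Standard exact decomposition into CNOTs and single-qubit gates~\cite{Barenco} gives it constant depth and gate count, with phases kept exact by including the controlled global phase. Its routing stays within a fixed radius of the assigned site, and the swaps can be reversed afterwards.

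Different selectors act with disjoint controls but possibly overlapping data supports. On each basis branch, however, only one selector is nonzero, so the operator applied is exactly $U_{\lambda}$ regardless of the order of the controlled gates. A constant coloring of the bounded routing neighborhoods then schedules all $T_0=\Theta(w)$ controlled gates in depth $O(1)$ with $O(w)$ gates. Reversing the decoder and comparison clears $\reg W_g$ and restores the record, because every control kept its basis value.

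For $D=1$, I will check the ordering condition of Appendix~\ref{app:1d} directly. Operations of successive records that overlap in support are applied in record order, because a record reaches a subblock only after the previous record has finished there and in its neighbors, and the bounded radius limits overlap to adjacent subblocks. Linearity then extends the action to superpositions of addresses, and ``sufficiently large $\ell$'' guarantees $b\le\ell$ and the padding conditions.
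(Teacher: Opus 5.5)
Your proposal is correct and follows the paper's proof essentially step for step. You adjoin a validity bit (the paper adjoins $e=X=1$, created by $X$ gates at the source and cleared after the return) so that zero padding words stay inactive even if $U_0\ne I$; reuse the decoder of Lemma~\ref{lem:local}, with an exactly decomposed controlled $U_\lambda$ in place of the controlled Pauli; use the fact that only one subblock acts on each basis address to meet the commutation hypothesis of Lemma~\ref{lem:geometry} for $D\ge2$; and take one event per address in Appendix~\ref{app:1d} for $D=1$.

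Keep the flag rather than your ``shift addresses by one'' alternative, which is not equivalent as stated. Relabeling $\lambda\mapsto\lambda+1$ would store, for example, the last entry of subblock $g$ at the first site of subblock $g+1$, generally far from its support. The shifted table would then violate the site-locality required by the address format of Eq.~\eqref{eq:address}, and the shift would also need arithmetic on the input records.
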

\begin{proof}
The decoder can be reused by adjoining execution and insertion bits $e=X=1$ to each actual address record at the source.
The circuit creates these bits with $X$ gates and clears them with the same gates after return; padding records remain zero, so they are inactive even if $U_0\ne I$.
Replacing the controlled Pauli operation by controlled $U_\lambda$ changes the local cost by only a constant: a unitary on constantly many qubits, including its control, has an exact constant-size decomposition into one-qubit gates and CNOTs~\cite{Barenco,Shende06}, and its neighborhood has fixed radius.
With $O(w)$ entries per subblock, decoding, selection, and uncomputation consequently cost depth $O(w)$, $O(w^2)$ gates, and $O(w)$ work.
A constant coloring handles neighboring routing conflicts.

For each basis address only one subblock selects a nonidentity operation, so the group operators commute within the single data stage.
Lemma~\ref{lem:geometry} therefore applies for $D\ge2$, and its subblock-size condition holds for all sufficiently large $\ell$ under the stated assumptions.
For $D=1$, one event per address in Appendix~\ref{app:1d} suffices; bounded support supplies its overlap-order condition.
In either case, preserved address values allow the return circuit to clear all copies exactly, and the original $Nw=O(\ell)$ control qubits are included in the space bound.
\end{proof}

For an LCU of such products, the corollary supplies the selection stage, including distribution and removal of its coherent controls~\cite{ChildsWiebe,QSVT}.
An application must still provide efficient coefficient preparation and a suitable normalization.

\FloatBarrier
\section{Error bounds and simulation on a large lattice}\label{sec:error}
The construction so far encodes $\widehat A_Q/2$ and clears its temporary work for every signal input.
This stronger form of work restoration is what allows amplitude amplification to control the full output, rather than only the component with zero signals.
Once that estimate is established, we choose the expansion and layout parameters together to obtain error $e^{-M}$ in depth $O(M)$.
The same estimate then permits forward and inverse block calls to be composed in the HHKL layers without resetting the ancillas.

\subsection{Amplification with a full-output error bound}\label{sec:amplification}
For an exactly unitary encoded operator, one step of oblivious amplitude amplification removes the normalization factor two~\cite{BCCKS,Taylor}.
Our encoded operator is only approximately unitary, so a residual ancillary component remains; its norm must also be bounded before the ancillas can be reused.
We therefore apply the construction of Berry et al.~\cite[Lemma~4.6]{BCCKS}, as in robust truncated-Taylor simulation~\cite[Eqs.~(13)--(16)]{Taylor}, but retain the entire output in the calculation below.
The exact zero-work invariance proved for SELECT supplies the hypothesis needed to use reflections on the signal registers alone.

We continue to use the signal, work, and data registers $\reg S,\reg W,\reg D$ from Section~\ref{sec:lcu-overview}, with the reflection workspace included in $\reg W$. The distinction between zero work and zero signals is expressed by the embeddings
\begin{equation}
 \iota_{\rm sig}\ket\varphi_{\reg S\reg D}
 :=\ket0_{\reg W}\ket\varphi_{\reg S\reg D},\qquad
 \iota_0\ket\psi:=\iota_{\rm sig}(\ket0_{\reg S}\ket\psi_{\reg D}).
 \label{eq:signalembedding}
\end{equation}
The first allows arbitrary signal states, whereas the second also initializes the signals to zero.
The zero-work subspace is therefore $\mathcal K:=\operatorname{ran}\iota_{\rm sig}$.

\begin{lemma}[Full-output bound for one-step oblivious amplitude amplification]\label{lem:amplify}
Let $B$ be a unitary preserving $\mathcal K$ and satisfying $\iota_0^\dagger B\iota_0=\widehat A_Q/2$. The signal reflection and amplified circuit are
\begin{equation}
 R_{\rm sig}:=I_{\reg W}\otimes(2\ket0\!\bra0_{\reg S}-I_{\reg S})\otimes I_{\reg D},
 \qquad \calU:=-BR_{\rm sig}B^\dagger R_{\rm sig}B.
 \label{eq:amplification}
\end{equation}
The full output satisfies
\begin{equation}
 \calU\iota_0=\iota_0\widehat A_Q+B\iota_0(I-\widehat A_Q^\dagger\widehat A_Q).
 \label{eq:fullcolumn}
\end{equation}
For every data unitary $U$ with $\delta:=\norm{\widehat A_Q-U}\le1$, the error obeys
\begin{equation}
 \norm{\calU\iota_0-\iota_0U}\le3\delta+\delta^2\le4\delta.
 \label{eq:amplificationnorm}
\end{equation}
\end{lemma}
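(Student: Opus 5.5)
The plan is to compute the output of $\calU\iota_0$ exactly by pushing $\iota_0$ through the five factors of Eq.~\eqref{eq:amplification} one at a time, and then to bound the two resulting terms separately.

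\textbf{Reduce the reflection to $\mathcal K$.} On the zero-work subspace $\mathcal K=\operatorname{ran}\iota_{\rm sig}$ we have $R_{\rm sig}\iota_{\rm sig}=\iota_{\rm sig}(2\ket0\!\bra0_{\reg S}-I)$. Hence every vector in $\mathcal K$ satisfies
\[
R_{\rm sig}\ket\Phi=2\iota_0\iota_0^\dagger\ket\Phi-\ket\Phi .
\]
The hypothesis gives $B\mathcal K\subseteq\mathcal K$. Since $B$ is unitary on a finite-dimensional space, $B\mathcal K=\mathcal K$, and therefore $B^\dagger\mathcal K=\mathcal K$ as well. Every intermediate vector in the calculation will lie in $\mathcal K$, so the reflection may always be replaced by $2\iota_0\iota_0^\dagger-I$. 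This step is exactly where the basis-wise cleanup of $\SELECT$ is used.

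\textbf{Exact expansion.} Write $\widehat A:=\widehat A_Q$ and use $\iota_0^\dagger B\iota_0=\widehat A/2$ together with its adjoint $\iota_0^\dagger B^\dagger\iota_0=\widehat A^\dagger/2$. The steps are as follows.
\begin{itemize}
\item The first reflection gives $R_{\rm sig}B\iota_0=\iota_0\widehat A-B\iota_0$.
\item Applying $B^\dagger$ gives $B^\dagger\iota_0\widehat A-\iota_0$.
\item The second reflection fixes $\iota_0$, and it maps $B^\dagger\iota_0\widehat A$ (which lies in $\mathcal K$) to $\iota_0\widehat A^\dagger\widehat A-B^\dagger\iota_0\widehat A$.
\item Applying $-B$ yields $-B\iota_0\widehat A^\dagger\widehat A+\iota_0\widehat A+B\iota_0$.
\end{itemize}
The result of the last step is Eq.~\eqref{eq:fullcolumn}. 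Apart from the invariance argument, this is routine bookkeeping.

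\textbf{Norm bound.} Subtracting $\iota_0U$ from Eq.~\eqref{eq:fullcolumn} gives
\[
\calU\iota_0-\iota_0U=\iota_0(\widehat A-U)+B\iota_0(I-\widehat A^\dagger\widehat A).
\]
Both $\iota_0$ and $B\iota_0$ are isometries, so the first term has norm $\delta$. For the second term, write $\widehat A=U+E$ with $\norm E=\delta$. Unitarity of $U$ gives
\[
\widehat A^\dagger\widehat A-I=U^\dagger E+E^\dagger U+E^\dagger E,
\]
whose norm is at most $2\delta+\delta^2$. Adding the two contributions gives $3\delta+\delta^2$, and the assumption $\delta\le1$ then gives $3\delta+\delta^2\le4\delta$.

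\textbf{Main obstacle.} The only delicate point is justifying the replacement of $R_{\rm sig}$ by $2\iota_0\iota_0^\dagger-I$ after $B^\dagger$ has been applied. This needs $B^\dagger$, and not only $B$, to keep the work register at zero, which follows from the unitarity argument above. The remaining steps are direct calculation.
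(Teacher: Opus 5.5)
Your proposal is correct and follows the paper's proof essentially line for line. You deduce invariance of $\mathcal K$ under $B^\dagger$ from unitarity and finite dimensionality, replace $R_{\rm sig}$ by $2\iota_0\iota_0^\dagger-I$ on $\mathcal K$, and expand the two reflections to obtain Eq.~\eqref{eq:fullcolumn}. You then bound the unitarity defect by $2\delta+\delta^2$, exactly as the paper does; the paper simply carries $\iota_0^\dagger B\iota_0=\widehat A_Q/2$ as a named operator, whereas you carry $\widehat A_Q$ directly.
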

\begin{proof}
Because $B$ is unitary, its invariant finite-dimensional subspace $\mathcal K$ is also invariant under $B^\dagger$.
Every intermediate state therefore stays in $\mathcal K$, where $R_{\rm sig}=2\Pi-I$ for $\Pi:=\iota_0\iota_0^\dagger$.
With $A_{\rm enc}:=\iota_0^\dagger B\iota_0=\widehat A_Q/2$, expanding the two reflections gives
\begin{align}
 \calU\iota_0
 &=-B(2\Pi-I)(2B^\dagger\iota_0A_{\rm enc}-\iota_0)\notag\\
 &=-4B\iota_0A_{\rm enc}^\dagger A_{\rm enc}+B\iota_0+2\iota_0A_{\rm enc}\notag\\
 &=\iota_0\widehat A_Q+B\iota_0(I-\widehat A_Q^\dagger\widehat A_Q).
 \label{eq:amplificationaction}
\end{align}
Unlike the projected block $\Pi\calU\iota_0$, this identity retains the component with nonzero signals.
Its size is controlled by the failure of $\widehat A_Q$ to be unitary.
Indeed, writing $\Delta:=\widehat A_Q-U$ and using $U^\dagger U=I$ yields
\[
 I-\widehat A_Q^\dagger\widehat A_Q
 =-U^\dagger\Delta-\Delta^\dagger U-\Delta^\dagger\Delta,
 \qquad \norm{I-\widehat A_Q^\dagger\widehat A_Q}\le2\delta+\delta^2.
\]
Since both $\iota_0$ and $B\iota_0$ are isometries, Eq.~\eqref{eq:fullcolumn} now bounds the full error by the approximation error $\delta$ plus this unitarity defect.
Thus $\norm{\calU\iota_0-\iota_0U}\le\delta+(2\delta+\delta^2)$, proving Eq.~\eqref{eq:amplificationnorm}.
\end{proof}

\paragraph{Application to the block circuit.}
For our circuit, the required invariance follows from the branchwise SELECT action in Eq.~\eqref{eq:selectinterface}:
\begin{equation}
 \SELECT\iota_{\rm sig}=\iota_{\rm sig}V_{\rm sig},\qquad
 V_{\rm sig}:=\sum_{z,s}e^{i\Phi(z,s)}\ket{z,s}\!\bra{z,s}\otimes W_z.
 \label{eq:globalclean}
\end{equation}
Each $W_z$ is unitary, and the orthogonal signal labels make $V_{\rm sig}$ unitary as well.
Consequently, SELECT maps all of $\mathcal K$ onto itself, not merely inputs with zero signals.
PREPARE acts only on $\reg S$, so $B$ and $B^\dagger$ inherit this invariance.
Together with Eq.~\eqref{eq:blockencoding}, this verifies the hypotheses of Lemma~\ref{lem:amplify}, including when signals and data are entangled.

It remains to implement the signal reflection within the same geometry.
The standard reversible-predicate construction~\cite{BHMT,BCCKS,Barenco} reduces it to testing whether any signal bit is nonzero.
The signals occur along their $N$ consecutive slot cells in the order $s_0,\ldots,s_{q_S-1}$, with $q_S=O(Nw)$.
Starting with $r_0:=0$, a reversible circuit computes their Boolean OR into fresh targets according to
\[
 r_{i+1}:=r_i\lor s_i=r_i\oplus s_i\oplus r_is_i.
\]
Here $\oplus$ denotes addition modulo two, so each update uses two CNOTs and one Toffoli.
Across a gap in the layout, the accumulator is carried to the next signal position.
A $Z$ gate on the final bit then gives phase $+1$ only when all signals are zero; reversing the chain clears its work without changing that phase.
The resulting reflection has depth, gate count, and clean-work requirement $O(Nw)$.
Finally, a one-qubit $-I$ provides the overall minus sign in Eq.~\eqref{eq:amplification}.

This physical implementation need only agree with $R_{\rm sig}$ on zero-work inputs.
Equation~\eqref{eq:globalclean} guarantees precisely such an input at every reflection, even if the signals are nonzero.
Likewise, an occurrence of $B^\dagger$ means the inverse of the complete gate list, including all register permutations; reversing only the selected data gates would not give the required adjoint.

\subsection{Parameter choice and the short-time block theorem}\label{sec:parameters}
The parameter choice has two requirements: the weighted expansion must have a small tail, and the resulting control records must fit the local layout.
To avoid a circular dependence between address width and padded table size, we first bound the latter using only the original block volume.
With $v$ denoting that volume, we enclose the rectangle in a cube whose side $\ell$ is its largest original side length.
The fixed aspect-ratio bounds give
\begin{equation}
 v\le v_{\rm cube}=\ell^D\le C_{\rm rect}v.
 \label{eq:blockpaddingvolumes}
\end{equation}
The added wires are initially treated as data with zero interaction, so the block estimate applies to arbitrary states on them.
When the circuit is used on the original rectangle, these fresh wires are initialized to zero and counted as ancillas; any other physical wires used for routing are restored to their original states and positions.
For $D\ge2$, Appendix~\ref{app:geometry} further pads the side to $\ell_p:=b2^K<2\ell$ whenever $b\le\ell$, giving $v_p:=\ell_p^D<2^Dv_{\rm cube}$.
The one-dimensional padding in Appendix~\ref{app:1d} also enlarges the interval by only a fixed factor.
Because each site carries only a constant number of residual entries, including padded ones, these bounds give a computable estimate
\[
 A\le\overline A:=C_{\rm tab}v.
\]
The constant $C_{\rm tab}$ includes both padding factors and is independent of $w$ whenever $b\le\ell$.
We will verify this size condition for all sufficiently large $M$.
Thus $\overline A$ can be fixed before the address width, and all later resource counts in the padded volume remain $O(v)$.

Our choice of degree weight exploits the strict inequality $p>D$.
For sufficiently large $M$, we use the parameters
\begin{equation}
 \vartheta:=\frac{p-D}{2(p+1)},\qquad
 k:=\max\{1,\lfloor\vartheta\log_2M\rfloor\},\qquad
 \xi:=2^k,\qquad
 Q:=\max\left\{p+1,\left\lceil\frac{4M}{\log\xi}\right\rceil\right\}.
 \label{eq:parameters}
\end{equation}
The weight $\xi$ in Lemma~\ref{lem:tails} grows as a small power of $M$, while the choice of $Q$ ensures $\xi^{-Q}\le e^{-4M}$.
The hypotheses of that lemma are met by choosing the number of microsteps as
\begin{equation}
 r:=\left\lceil\max\left\{
 Q,\ 2\omega_*\tau\xi,\
 \left(\frac{16c_*\overline A\,\omega_*^p\tau^{p+1}\xi^{p+1}}{p!}\right)^{1/p}
 \right\}\right\rceil.
 \label{eq:microstepchoice}
\end{equation}
The first entry sets $Q$ as a convenient lower limit for our microstep choice; the other two enforce the hypotheses of the tail estimate.
Specifically, the second gives $t_\omega\xi=\omega_*\tau\xi/r\le1/2$, and the third controls the weighted insertion mass.
Substituting $h=\tau/r$ into Eq.~\eqref{eq:Acal} verifies the latter:
\[
 r\calA(\xi)
 =\frac{c_*A\omega_*^p\tau^{p+1}\xi^{p+1}}
 {p!\,r^p(1-\omega_*\tau\xi/r)}\le\frac18.
\]
Both hypotheses of Lemma~\ref{lem:tails} therefore hold for every actual table size $A\le\overline A$.
The lemma supplies not only the approximation bound but also the contraction conditions required by the coefficient gates.

With $r$ and $Q$ fixed, Eqs.~\eqref{eq:finiteparameters}--\eqref{eq:productdist} determine the remaining label parameters.
The register widths can now be chosen from the same prior volume bound:
\begin{equation}
 b_{\rm addr}:=\lceil\log_2\overline A\rceil,\qquad
 b_{\rm den}:=\lceil\log_2(Q+1)\rceil,\qquad
 b_{\rm count}:=\lceil\log_2(N+1)\rceil.
 \label{eq:registerwidths}
\end{equation}
Since $m=\lceil\log_2(Q-p)\rceil\le b_{\rm den}$, these widths also bound the offset fields.
In particular, $\reg O_u$ contains $e_u,X_u$, the address, the $m$ offset bits, the phase qubit $S_u^\chi$, and the $m$ frequency-phase qubits.
Consequently, its number of qubits is
\begin{equation}
 |\reg O_u|=\log_2A+2m+3
 \le b_{\rm addr}+2b_{\rm den}+3.
 \label{eq:transportwidth}
\end{equation}
Each denominator register $\reg B_u$ has $b_{\rm den}$ bits because $L_b=2^{b_{\rm den}}$, and each count register $\reg Z_u$ has $b_{\rm count}$ bits.
The subblock and within-subblock fields only partition the existing address bits; they do not require another address register.
Moreover, the arithmetic in Appendix~\ref{app:gates} uses work linear in the operand width.
A sufficiently large fixed $C_w$ therefore allows all fields and their work to fit in the power-of-two width
\begin{equation}
 w:=2^{\left\lceil\log_2\bigl(C_w(1+b_{\rm addr}+b_{\rm den}+b_{\rm count})\bigr)\right\rceil}
 =\Theta\bigl(1+\log\overline A+\log(Q+1)+\log(N+1)\bigr).
 \label{eq:width}
\end{equation}
This width accommodates the transported fields as well as the retained denominator, count, and work registers.
Since Eq.~\eqref{eq:blockpaddingvolumes} also gives $\log\overline A=D\log\ell+O(1)$, increasing $C_w$ ensures the geometric requirement $w\ge\log_2\ell$.

With these widths fixed, we choose the subblock side $b$ from $w$ and construct the padded table of actual size $A$.
That table determines $\kappa$, $\chi_\lambda$, and $z_{\lambda,\beta}$ in Eq.~\eqref{eq:factors}, and hence the gate angles.
All earlier choices used $\overline A$, which bounds every permitted padding, so none of them depends on an address register whose size is still unknown.

\begin{proof}[Proof of Theorem~\ref{thm:block}]
The parameter choice makes Lemma~\ref{lem:tails} applicable to the operator encoded in Eq.~\eqref{eq:blockencoding}.
Applying the full-output estimate of Lemma~\ref{lem:amplify} then yields
\begin{equation}
 \norm{\calU \iota_0-\iota_0e^{-iH_{\rm blk}\tau}}
 \le4(e^{1/8}+2)\xi^{-Q}
 \le4(e^{1/8}+2)e^{-4M}<e^{-M}
 \label{eq:finalblockbound}
\end{equation}
for $M\ge1$.
In particular, the unamplified error is below one, as required for that estimate.
It remains to show that the same parameters fit the geometry and give the claimed resources; increasing the fixed threshold $M_0$ will enforce all size conditions.

Because $p>D$, the exponent $\vartheta$ lies strictly between zero and one.
Thus $\xi=\Theta(M^\vartheta)$, $Q=\Theta(M/\log M)$, and $v=\Theta(M^D)$.
The main point is that, for large $M$, neither of the other two entries in Eq.~\eqref{eq:microstepchoice} exceeds $Q$.
To compare the third entry with $Q$, we evaluate the following ratio at $r=Q$. Uniformly over $0<\tau\le1$, the $p$th power of that entry divided by $Q^p$ is at most a constant times
\begin{equation}
 \frac{\overline A\,\xi^{p+1}}{r^p}
 =O\bigl(M^{D-p+\vartheta(p+1)}(\log M)^p\bigr)
 =O\bigl(M^{-(p-D)/2}(\log M)^p\bigr)=o(1).
 \label{eq:balance}
\end{equation}
The second entry is also $o(Q)$ because $\xi/Q=o(1)$.
Hence the maximum in Eq.~\eqref{eq:microstepchoice} equals $Q$ for all sufficiently large $M$, and the selected parameters obey
\begin{equation}
 r,Q,N=\Theta(M/\log M),\qquad
 w=\Theta(\log M),\qquad Nw,Qw=\Theta(M).
 \label{eq:scales}
\end{equation}
These scales imply all geometric size conditions.
For $D\ge2$, $b^D=\Theta(w)$ and $\ell=\Theta(M)$ give $b\le\ell$ and
\[
 Nw+w\log\ell=O\bigl(M+(\log M)^2\bigr)=O(\ell).
\]
Together with $w\ge\log_2\ell$, this verifies the hypotheses of Lemma~\ref{lem:geometry} and the padding bound used to define $\overline A$.
For $D=1$, the event count is $F=(L+1)N$ with fixed $L$, so $Fw=O(\ell)$ verifies the corresponding hypothesis in Appendix~\ref{app:1d}.

We can therefore use the local compilation bounds without additional size assumptions.
PREPARE has depth $O(1)$ and uses $O(Nw)$ gates; the prefix circuit $\calT$ and its inverse have depth $O(Nw)$ and use $O(Nw^2)$ gates.
The source tables in $\mathcal D_0$ contribute depth $O(Qw)$ and $O(NQw)$ gates.
Lemma~\ref{lem:geometry}, or Appendix~\ref{app:1d}, bounds distribution, address-dependent phases, ordered data operations, and return alignment together by depth $O(M)$, $O(vM)$ gates, and $O(v)$ ancillas.
Finally, Section~\ref{sec:amplification} gives depth, gate count, and work $O(Nw)$ for each signal reflection.

Equation~\eqref{eq:scales} makes every depth contribution $O(M)$.
It also places all gate counts below $O(vM)$: the two source terms are $Nw^2=O(M\log M)$ and $NQw=O(M^2/\log M)$, whereas $vM=\Theta(M^{D+1})$ for $D\ge1$.
The source signals, prefix outputs, scalar-table work, and reflection work occupy $O(Nw)=O(M)\le O(v)$ qubits on the reserved source path.
Adding them to the geometric work preserves bounded density.
Since amplification uses only constantly many calls to $B$, $B^\dagger$, and the signal reflection, these are also the depth, gate, and ancillary bounds for $\calU$.
\end{proof}

\subsection{Composition on the full lattice}\label{sec:global}
The block theorem can now be applied to every signed factor in the HHKL decomposition.
There are two remaining points: an inverse factor must have the same full-output error, and errors must add even when successive calls share ancillas.
Both follow from unitarity, as we show before counting the parallel layers.

\begin{proof}[Proof of Theorem~\ref{thm:global}]
When $T=0$, the identity circuit suffices.
Otherwise Eq.~\eqref{eq:normalizedrestriction} writes the target as $\bar U_\Lambda(\tau)^{\mathcal T}$, and we use the same HHKL layers in each normalized interval.
With $\ell_{\rm buf}=\lceil c_{\rm buf}M\rceil$, Eq.~\eqref{eq:hhkl-error} bounds their spatial error by $Cn e^{-M}$ per interval.
For each signed factor $\bar U_Y(\sigma_{a,Y}\tau)$, Eq.~\eqref{eq:hhkl-blocknotation} identifies $H_{\rm blk}=\bar H_Y$.
Theorem~\ref{thm:block} implements its forward evolution to error $e^{-M}$; a negative sign calls for the inverse gate list.

Taking the adjoint preserves this error in the required input subspace, even though the estimate is not a full-space comparison of the two unitaries.
More precisely, the hypothesis $\norm{V\iota_0-\iota_0U}\le\eta$ and unitarity give
\begin{equation}
 V^\dagger \iota_0-\iota_0U^\dagger
 =V^\dagger(\iota_0U-V\iota_0)U^\dagger,
 \qquad \norm{V^\dagger \iota_0-\iota_0U^\dagger}\le\eta.
 \label{eq:inverseerror}
\end{equation}

For composition, we use one common physical register space and extend each gate list by the identity on unused registers.
The actual circuits in execution order are denoted by $V_1,\ldots,V_s$, with adjoints where required, and $U_j$ denotes the corresponding restricted data unitary.
The embedding $\iota_0$ now initializes all ancillas, so every call satisfies $\norm{V_j\iota_0-\iota_0U_j}\le\eta_j$ with $\eta_j=e^{-M}$.
The appropriate comparison is the exact telescoping identity
\begin{equation}
 V_s\cdots V_1\iota_0-\iota_0U_s\cdots U_1
 =\sum_{j=1}^s V_s\cdots V_{j+1}
       (V_j\iota_0-\iota_0U_j)U_{j-1}\cdots U_1.
 \label{eq:telescoping}
\end{equation}
In its $j$th summand, the later actual circuits multiply on the left and the earlier ideal evolutions on the right.
Both factors are unitary, so no assumption about the actual ancillary state before call $j$ is needed.
The triangle inequality consequently gives
\begin{equation}
 \norm{V_s\cdots V_1\iota_0-\iota_0U_s\cdots U_1}
 \le\sum_{j=1}^s\norm{V_j\iota_0-\iota_0U_j}
 \le\sum_{j=1}^s\eta_j.
 \label{eq:composederror}
\end{equation}
The zero-ancilla embeddings in Eq.~\eqref{eq:telescoping} belong to the ideal comparison, not to additional physical operations.
The implemented circuit is the uninterrupted product $V_s\cdots V_1$ on the same ancillary registers, with no intermediate reset or measurement.
In particular, later calls act unitarily on the ancillary error components left by earlier calls, and those components are already included in Eq.~\eqref{eq:composederror}.

There are at most $C_1n\mathcal T$ calls, since the block volumes sum to $O(n)$ in each interval and every block has positive volume.
Adding their errors to the spatial decomposition error therefore gives
\[
 C_1n\mathcal T e^{-M}+C_2n\mathcal T e^{-\mu\ell_{\rm buf}}
 \le (C_1+C_2)n\mathcal T e^{-M}.
\]
Choosing the fixed constant $c_0$ in Eq.~\eqref{eq:globalM} sufficiently large makes this sum at most $\epsilon$ and also ensures $M\ge M_0$ for the block-side comparison constants.

Within each of the constantly many processing classes, the enlarged block regions are disjoint, so their depth-$O(M)$ circuits run in parallel.
The volumes sum to $O(n)$ over all classes in an interval, giving $O(nM)$ gates and $O(n)$ simultaneous ancillas.
A fixed number of ancillary qubits per original site suffices: each class has disjoint enlarged rectangles, and there are only constantly many classes.
Reusing this placement and the same gate lists over $\mathcal T$ intervals gives depth $O(\mathcal TM)$ and $O(n\mathcal TM)$ gates, without increasing the $O(n)$ ancillary count.
Finally, tensoring the comparison maps with a reference identity preserves their norms, so the error estimate also holds for entangled inputs.
\end{proof}

The proof thus specifies both a finite physical placement and an ordered list of adjacent CNOTs and one-qubit gates.
The combinatorial compilation is polynomial in the explicit input description and output-list size.
Evaluating the gate parameters additionally uses the coefficient representation and elementary functions specified in the model; Appendix~\ref{app:gates} treats the corresponding numerical precision and finite-gate-set overhead.

\FloatBarrier
\section{Geometric depth lower bounds}\label{sec:lower}
The upper-bound construction uses arbitrary input coefficients, but a depth obstruction already appears in the uniform XXZ family of Theorem~\ref{thm:depth-lower-bound}.
A shallow local circuit cannot distinguish two inputs at receivers outside the causal cones of their differences~\cite{BHV,Rosenbaum}.
Our idea is to make the true dynamics distinguish them through many small propagation probabilities, using a single input with separated excitations.
One-excitation quantum walks provide explicit reference signals~\cite{FarhiGutmann98,Bose03,Konno05}; a state-dependent form of the evolution comparison~\cite[Lemma~4]{HHKL} transfers their combined signal to the interacting system.
The resulting obstruction is geometric even when all Hamiltonian coefficients are known, unlike query bounds that constrain access to those coefficients~\cite{BACS,BCK}.

The proof is organized around this transfer.
A causal distinguishability test first identifies the signal that the circuit cannot reproduce, and a general probe lemma states how independent reference events can supply it.
For XXZ dynamics, we then construct those events and make their boundary error small relative to their arrival probabilities.
This gives the size--precision term, while a separate ballistic estimate gives the linear time term.
Combining the two and restoring the physical time scale proves the theorem.

Throughout this argument, $\Lambda=\{1,\ldots,\ell_{\rm sys}\}^D$ and $n=\ell_{\rm sys}^D$, with $H_{\rm XXZ}$ as in Eq.~\eqref{eq:lb-hamiltonian}.
We express propagation in units of the hopping strength by setting
\begin{equation}
 \kappa_{\rm XXZ}:=2D(1+\Delta_0),\qquad
 \theta_{\rm hop}:=J_{\rm hop}|T|=J|T|/\kappa_{\rm XXZ}.
 \label{eq:lb-normalization}
\end{equation}
It suffices initially to work in the range $1\le\theta_{\rm hop}\le c\ell_{\rm sys}$ for a sufficiently small constant $c>0$.
Unless stated otherwise, constants depend only on $D$, $\Delta_0$, and the geometric gate model, not on $\ell_{\rm sys}$, $T$, or $\epsilon$.
The estimates are uniform in the signs of $T$ and $\Delta_{\rm XXZ}$: for the real generators and computational-basis sources used here, reversing time complex-conjugates the state and preserves all arrival probabilities.

\subsection{A causal distinguishability test}\label{sec:lb-causal}
An input can affect an output region only along a time-ordered sequence of gates.
The input qubits reachable in this way form its backward causal cone.
In the original lattice metric, one layer extends that cone by at most a fixed radius $\rho_{\rm gate}\ge1$, including the fixed-scale refinement.
A depth-$\dcnt$ circuit therefore has causal radius at most
\begin{equation}
 a_{\rm cone}:=\rho_{\rm gate}\dcnt.
 \label{eq:lb-cone-radius}
\end{equation}
Ancillary paths are included in this bound.
Because it depends only on gate geometry, it applies without assuming that the simulator preserves either excitation number or the vacuum.

We apply this input-sensitivity argument~\cite[proof of Theorem~7.8]{Rosenbaum} to a Hamiltonian whose computational vacuum $\ket\Omega:=\ket0^{\otimes n}$ is an eigenstate.
A basis value $1$ is called an excitation, and $\ket S:=\prod_{x\in S}X_x\ket\Omega$ places excitations at a source set $S\subseteq\Lambda$.
For a receiver set $B$, let $E_B$ be the event that a computational-basis measurement detects at least one excitation there.
This event distinguishes the true source evolution from the stationary vacuum but cannot distinguish their circuit outputs if $S$ lies outside the receiver cone.

\begin{samepage}
\begin{lemma}[Causal distinguishability]\label{lem:lb-causal}
Suppose $H\ket\Omega=E_\Omega\ket\Omega$ and a depth-$\dcnt$ circuit satisfies Eq.~\eqref{eq:embedding} for $H$. If $\dist(S,B)>a_{\rm cone}$, then the true evolution obeys
\begin{equation}
 \Pr_H(E_B\mid S)\le2\epsilon.
 \label{eq:lb-causal-test}
\end{equation}
Here the probability is evaluated after evolution for time $T$ from $\ket S$.
\end{lemma}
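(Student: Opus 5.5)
The plan is to compare the circuit outputs on the two inputs $\ket S$ and $\ket\Omega$, which by hypothesis agree closely with the true evolutions. First, since $H\ket\Omega=E_\Omega\ket\Omega$, the true evolution of the vacuum is $e^{-iE_\Omega T}\ket\Omega$, so $\Pr_H(E_B\mid\Omega)=0$. Write $\ket{\psi_S}:=\iota_0e^{-iHT}\ket S$ and $\ket{\psi_\Omega}:=\iota_0e^{-iHT}\ket\Omega$ for the ideal outputs, and $\ket{\phi_S}:=\calU\iota_0\ket S$, $\ket{\phi_\Omega}:=\calU\iota_0\ket\Omega$ for the circuit outputs. Eq.~\eqref{eq:embedding} gives $\norm{\ket{\phi_X}-\ket{\psi_X}}\le\epsilon$ for $X\in\{S,\Omega\}$. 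Let $P_B$ be the projector onto computational-basis states of the data qubits in $B$ containing at least one excitation; it acts trivially on the ancillas and on the other data. Then $\Pr_H(E_B\mid S)=\norm{P_B\ket{\psi_S}}^2$, and the corresponding circuit probabilities are $\norm{P_B\ket{\phi_X}}^2$.

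Next comes the causal step. Let $\mathcal C_B$ be the backward causal cone of $B$ through the gates of $\calU$. Writing $\Pr_{\rm circ}(E_B\mid X)=\bra{\phi_X}P_B\ket{\phi_X}=\bra{X}\iota_0^\dagger\calU^\dagger P_B\calU\iota_0\ket X$, the Heisenberg-evolved observable $\calU^\dagger P_B\calU$ is supported in $\mathcal C_B$. This holds because each gate layer outside the cone cancels against its adjoint, by the standard lightcone argument peeling off layers from the outside. By Eq.~\eqref{eq:lb-cone-radius}, the data sites in $\mathcal C_B$ lie within distance $a_{\rm cone}$ of $B$, and ancillas start in $\ket0$ in both inputs. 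Since $\dist(S,B)>a_{\rm cone}$, the inputs $\iota_0\ket S$ and $\iota_0\ket\Omega$ have identical reduced states on $\mathcal C_B$: both are $\ket0$ on every qubit there. Both are product states, so the reduction is immediate. Hence $\Pr_{\rm circ}(E_B\mid S)=\Pr_{\rm circ}(E_B\mid\Omega)$.

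Finally, I would bound amplitudes using the triangle inequality on the norms $\norm{P_B\,\cdot\,}$, which is the route that yields exactly $2\epsilon$:
\[
 \sqrt{\Pr_H(E_B\mid S)}\le\norm{P_B\ket{\phi_S}}+\epsilon
 =\norm{P_B\ket{\phi_\Omega}}+\epsilon
 \le\norm{P_B\ket{\psi_\Omega}}+2\epsilon=2\epsilon,
\]
which gives $\Pr_H(E_B\mid S)\le4\epsilon^2\le2\epsilon$ because $\epsilon<1/2$. A probability-level bound using total variation would give $2\epsilon$ directly as well, but the amplitude version is cleaner.

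The only point needing care is the cone statement, namely that the cone radius accounts for ancillary paths and refinement qubits. I would handle this by measuring distance in the original lattice metric, where each layer moves influence by at most $\rho_{\rm gate}$, as already asserted before the lemma. I would also check that the permutation of data positions is irrelevant, since the data occupy fixed sites at input and output. No assumption on excitation conservation by $\calU$ is used.
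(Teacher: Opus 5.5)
Your argument is correct and matches the paper's proof: the vacuum eigenstate gives zero true arrival probability, the backward cone makes the two circuit probabilities equal, and the simulation error is charged once on each input. The only difference is that you compare projected amplitudes, whereas the paper compares probabilities through $D_{\rm tr}\le\|\ket a-\ket b\|$. Your version gives the slightly sharper bound $\Pr_H(E_B\mid S)\le4\epsilon^2$, which still implies $2\epsilon$ because $\epsilon<1/2$.
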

\end{samepage}
\begin{proof}
Outside $S$, the two inputs and their initialized ancillas agree.
The separation condition places $S$ outside the backward cone of $B$, so the circuit produces identical reduced states on $B$ for these inputs.

For normalized pure states $\ket a,\ket b$, the trace distance satisfies~\cite{FuchsVanDeGraaf99}
\begin{equation}
 D_{\rm tr}(a,b):=\tfrac12\bigl\|\ket a\!\bra a-\ket b\!\bra b\bigr\|_1
 =\sqrt{1-|\braket{a}{b}|^2}\le\|\ket a-\ket b\|.
 \label{eq:lb-trace-distance}
\end{equation}
Here $\|A\|_1:=\operatorname{Tr}\sqrt{A^\dagger A}$ is the trace norm.
Trace distance bounds the change in any event probability, including one measured only on $B$.
Thus Eq.~\eqref{eq:embedding} changes the probability of $E_B$ by at most $\epsilon$ on each input.
The true vacuum probability is zero by the eigenstate assumption, and the two circuit probabilities agree.
Comparing through those identical probabilities gives the bound $2\epsilon$.
\end{proof}

A single receiver may have an extremely small arrival probability, so the useful signal comes from many separated probes.
Independence is available only in a reference state, however, and its comparison error also grows with the number of probes.
We therefore use no more than the reciprocal of the guaranteed single-probe probability.
Below that threshold, the union probability grows proportionally to the number of probes, and a small relative error per probe remains tolerable.
Making this choice requires a reference state for every selected subset, not just for the full collection; the next lemma states that requirement explicitly.

\begin{lemma}[Separated propagation probes]\label{lem:separated-probes}
Suppose $H\ket\Omega=E_\Omega\ket\Omega$. We are given $K\ge1$ distinct source sites $x_1,\ldots,x_K$ and disjoint receiver regions $B_1,\ldots,B_K$ with $\dist(x_i,B_j)>a_{\rm cone}$ for all $i,j$. Fix $0<p_0\le1/4$ and $0\le\eta\le p_0/100$. For each nonempty subset $\mathcal I\subseteq\{1,\ldots,K\}$, write $S_{\mathcal I}:=\{x_i:i\in\mathcal I\}$ and $B_{\mathcal I}:=\bigcup_{i\in\mathcal I}B_i$. Suppose there is a normalized reference state $\ket{\phi_{\mathcal I}}$ for which the events $E_{B_i}$, $i\in\mathcal I$, are independent, each has probability at least $p_0$, and
\begin{equation}
 \left\|e^{-iT(H-E_\Omega I)}\ket{S_{\mathcal I}}-\ket{\phi_{\mathcal I}}\right\|
 \le |\mathcal I|\eta.
 \label{eq:probe-comparison-assumption}
\end{equation}
Every depth-$\dcnt$ simulation of $H$ with error $\epsilon$ then satisfies
\begin{equation}
 \epsilon\ge\frac18\min\{1,Kp_0\}.
 \label{eq:separated-probe-bound}
\end{equation}
\end{lemma}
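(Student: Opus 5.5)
We choose a subset $\mathcal I$ of size $k:=\min\{K,\lceil 1/p_0\rceil\}$, or more conservatively $k:=\min\{K,\lfloor 1/(2p_0)\rfloor\}$ (note $1/(2p_0)\ge2$), and compare the true source evolution with the reference state. Because the sources and receivers are separated by more than the cone radius, Lemma~\ref{lem:lb-causal} applies with $S=S_{\mathcal I}$ and $B=B_{\mathcal I}$. Since $\dist(x_i,B_j)>a_{\rm cone}$ for all pairs, we have $\dist(S_{\mathcal I},B_{\mathcal I})>a_{\rm cone}$. The lemma then gives $\Pr_H(E_{B_{\mathcal I}}\mid S_{\mathcal I})\le2\epsilon$. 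The phase $e^{iTE_\Omega}$ in Eq.~\eqref{eq:probe-comparison-assumption} does not affect any measurement probability, so that assumption compares the actual evolved state with $\ket{\phi_{\mathcal I}}$.

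Next we lower-bound the union event under the reference. The event $E_{B_{\mathcal I}}$ is the union of the $E_{B_i}$, because the receivers are disjoint and detecting an excitation anywhere in the union means detecting one in some $B_i$. By independence,
\[
\Pr_\phi(E_{B_{\mathcal I}})=1-\prod_{i\in\mathcal I}(1-\Pr_\phi E_{B_i})\ge1-(1-p_0)^k\ge1-e^{-kp_0}.
\]
With $kp_0\le1$ this is at least $(1-e^{-1})kp_0\ge0.63\,kp_0$. The trace-distance bound Eq.~\eqref{eq:lb-trace-distance} then gives
\[
\Pr_H(E_{B_{\mathcal I}})\ge\Pr_\phi(E_{B_{\mathcal I}})-k\eta\ge0.63\,kp_0-0.01\,kp_0\ge\tfrac12kp_0.
\]
The relative smallness $\eta\le p_0/100$ matters here: the comparison error grows linearly in $k$, just as the signal does, so it is absorbed.

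Combining the two bounds gives $2\epsilon\ge\tfrac12kp_0$, that is, $\epsilon\ge kp_0/4$. It remains to relate $kp_0$ to $\min\{1,Kp_0\}$.
\begin{itemize}
\item If $K\le 1/(2p_0)$, then $k=K$ and $\epsilon\ge Kp_0/4$.
\item Otherwise $k=\lfloor1/(2p_0)\rfloor\ge 1/(2p_0)-1\ge1/(4p_0)$, using $p_0\le1/4$. Then $kp_0\ge1/4$ and $\epsilon\ge1/16$.
\end{itemize}
This last case yields only $1/16$ rather than $1/8$, so the constants need tuning. I would instead allow $k$ up to $\lfloor1/p_0\rfloor\ge 1/(2p_0)$, with $kp_0\le1$ still holding, and use $1-e^{-x}\ge(1-e^{-1})x$ on $[0,1]$. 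This gives $\Pr_H\ge(0.63-0.01)kp_0\ge0.62\,kp_0$. In the first case $\epsilon\ge0.31\,Kp_0$. In the second case $kp_0\ge1/2$ (since $\lfloor 1/p_0\rfloor\ge 1/p_0-1\ge 1/(2p_0)$ when $p_0\le1/2$), so $\epsilon\ge0.155\ge1/8$. Both cases then give Eq.~\eqref{eq:separated-probe-bound}.

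The only delicate point is this constant bookkeeping, together with checking that the union event on disjoint receivers coincides with $E_{B_{\mathcal I}}$ for a measurement confined to $B_{\mathcal I}$. Beyond that, the argument is a direct combination of Lemma~\ref{lem:lb-causal} and the trace-distance inequality, applied to a truncated subset of probes.
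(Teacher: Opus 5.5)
Your final version, with $k=\min\{K,\lfloor 1/p_0\rfloor\}$ probes, is correct and is essentially the paper's proof. You restrict to a subset with $kp_0\le1$, lower-bound the reference union probability by independence, subtract the comparison error $k\eta\le kp_0/100$, and apply Lemma~\ref{lem:lb-causal} to $S_{\mathcal I},B_{\mathcal I}$. The only differences are in the constants (you use $1-e^{-x}\ge(1-e^{-1})x$ and $kp_0\ge1/2$, where the paper uses $1-e^{-x}\ge x/2$ and $kp_0\ge1-p_0\ge3/4$), and the abandoned earlier choices of $k$ should simply be dropped from the write-up.
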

\begin{proof}
We use $k:=\min\{K,\lfloor p_0^{-1}\rfloor\}$ probes, so their total guaranteed probability is $u:=kp_0$.
If all probes are used then $u=Kp_0$; otherwise $u\ge1-p_0\ge3/4$.
Hence $\tfrac34\min\{1,Kp_0\}\le u\le1$.
Independence gives a reference union probability at least $1-(1-p_0)^k\ge1-e^{-u}\ge u/2$.
The vector comparison and Eq.~\eqref{eq:lb-trace-distance} change this probability by at most $k\eta\le u/100$, leaving true probability at least $49u/100$.
Lemma~\ref{lem:lb-causal} bounds the same probability by $2\epsilon$, so
\[
 \epsilon\ge\frac{49u}{200}
 \ge\frac{147}{800}\min\{1,Kp_0\}
 \ge\frac18\min\{1,Kp_0\}.
\]
The separation assumption applies to every selected subset, so the causal lemma is valid for the choice just made.
Subtracting $E_\Omega$ changes only a global phase and leaves these event probabilities unchanged.
\end{proof}

\subsection{Single-excitation propagation and artificial boundaries}\label{sec:lb-propagation}
To apply the probe lemma, we need reference events whose comparison error is small relative to their probabilities.
Removing interactions along probe boundaries makes the reference dynamics independent.
Applying the triangle inequality to the generator difference in Eq.~\eqref{eq:comparison-norm} charges the sum of the norms of all removed interactions, which is far too large for a small arrival signal.
The state-dependent estimate in Eq.~\eqref{eq:comparison-state} instead bounds the action of each cut interaction on the reference state, whose boundary occupation is small.
The interaction-picture identity is the comparison of HHKL~\cite[Lemma~4]{HHKL}. Retaining the reference state in the corresponding integral identity yields the vector bound below, alongside the usual operator-norm bound.

\begin{lemma}[Evolution comparison]\label{lem:evolution-comparison}
Let $A(t)$ and $B(t)$ be continuous Hermitian generators on a finite-dimensional Hilbert space, with propagators $U_A(t)$ and $U_B(t)$ starting at the identity. For $t\ge0$, define $U_A(t,u):=U_A(t)U_A(u)^\dagger$. The interaction-picture operator $E(t):=U_B(t)^\dagger U_A(t)$ satisfies
\begin{equation}
 iE'(t)=U_B(t)^\dagger(A(t)-B(t))U_B(t)E(t),\qquad E(0)=I.
 \label{eq:comparison-generator}
\end{equation}
The two propagators also satisfy
\begin{equation}
 U_A(t)-U_B(t)=-i\int_0^t U_A(t,u)(A(u)-B(u))U_B(u)\,du.
 \label{eq:comparison-duhamel}
\end{equation}
For any unit vector $\ket\psi$, these identities give the estimates
\begin{align}
 \norm{(U_A(t)-U_B(t))\ket\psi}
 &\le\int_0^t\norm{(A(u)-B(u))U_B(u)\ket\psi}\,du,\label{eq:comparison-state}\\
 \norm{U_A(t)-U_B(t)}
 &\le\int_0^t\norm{A(u)-B(u)}\,du.\label{eq:comparison-norm}
\end{align}
\end{lemma}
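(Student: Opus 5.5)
The plan is to differentiate directly and then integrate. Throughout, I use the standard propagator equations $iU_A'(t)=A(t)U_A(t)$ and $iU_B'(t)=B(t)U_B(t)$ with $U_A(0)=U_B(0)=I$. These have unique unitary solutions because the generators are continuous Hermitian matrices on a finite-dimensional space.

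\emph{Generator identity.} Differentiating the unitarity relation $U_B^\dagger U_B=I$ gives $i(U_B^\dagger)'=-U_B^\dagger B$. The product rule applied to $E=U_B^\dagger U_A$ then yields
\[
 iE'=-U_B^\dagger BU_A+U_B^\dagger AU_A=U_B^\dagger(A-B)U_A .
\]
Writing $U_A=U_BE$ turns the right-hand side into $U_B^\dagger(A-B)U_BE$, which is Eq.~\eqref{eq:comparison-generator}. The initial condition $E(0)=I$ holds because both propagators start at the identity.

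\emph{Duhamel identity.} Integrating the last display from $0$ to $t$ gives
\[
 E(t)-I=-i\int_0^tU_B(u)^\dagger(A(u)-B(u))U_A(u)\,du .
\]
This already has the right form, but with the ideal propagator on the wrong side for the state estimate. To put $U_B(u)$ on the right, I will instead differentiate $F(u):=U_A(u)^\dagger U_B(u)$ in the same way, obtaining
\[
 iF'(u)=-U_A(u)^\dagger(A(u)-B(u))U_B(u).
\]
Integrating from $0$ to $t$ and multiplying on the left by $U_A(t)$ gives
\[
 U_B(t)-U_A(t)=i\int_0^tU_A(t)U_A(u)^\dagger(A(u)-B(u))U_B(u)\,du .
\]
Since $U_A(t)U_A(u)^\dagger=U_A(t,u)$, this is Eq.~\eqref{eq:comparison-duhamel} after a sign change. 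Continuity of the integrand justifies the fundamental theorem of calculus entrywise.

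\emph{Estimates.} I apply Eq.~\eqref{eq:comparison-duhamel} to $\ket\psi$ and move the norm inside the integral, using the triangle inequality for Bochner (here Riemann) integrals. The factor $U_A(t,u)$ is unitary, so it drops out and the integrand reduces to $\norm{(A(u)-B(u))U_B(u)\ket\psi}$; this gives Eq.~\eqref{eq:comparison-state}. For the operator-norm bound, I note that $U_B(u)\ket\psi$ is a unit vector, so the integrand is at most $\norm{A(u)-B(u)}$. Taking the supremum over unit $\ket\psi$ then gives Eq.~\eqref{eq:comparison-norm}.

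The only real subtlety is choosing the interaction frame $F$ rather than $E$, so that the reference propagator $U_B$ acts on $\ket\psi$ first. I expect no substantive obstacle beyond this bookkeeping.
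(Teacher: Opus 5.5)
Your proof is correct and matches the paper's argument: the paper differentiates $U_A(t,u)U_B(u)=U_A(t)F(u)$ in $u$, which is your computation with $F$ followed by left multiplication by $U_A(t)$. Both then derive the two estimates by applying the identity to $\ket\psi$ before taking norms and using unitarity of $U_A(t,u)$.
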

\begin{proof}
Differentiating $U_B(t)^\dagger U_A(t)$ gives Eq.~\eqref{eq:comparison-generator}.
For the additive comparison, differentiation of $U_A(t,u)U_B(u)$ gives $iU_A(t,u)(A(u)-B(u))U_B(u)$; integrating its endpoints yields Eq.~\eqref{eq:comparison-duhamel}.
Applying this identity to $\ket\psi$ before taking the norm retains the reference-state factor in Eq.~\eqref{eq:comparison-state}.
Taking the operator norm instead gives Eq.~\eqref{eq:comparison-norm}, since the remaining propagators are unitary.
\end{proof}

For a negative final time, the comparison is applied on $[0,|t|]$ to generators $-A(-u)$ and $-B(-u)$.
For time-independent generators, this simply negates $A$ and $B$.
We will use the state-dependent estimate with an explicitly known reference evolution on this nonnegative interval.

In the one-excitation sector, spin transport is a continuous-time walk~\cite{FarhiGutmann98,Bose03}.
Nonuniform engineered couplings can give perfect transfer~\cite{Christandl04}, but uniform couplings are more suitable here because the hard Hamiltonian must remain spatially uniform.
We only need a lower bound on their finite-distance arrival probability.
To compare the resulting isolated walks with XXZ dynamics, the one-body terms at artificial boundaries must first be identified correctly.

The occupation at site $x$ is $\nu_x:=\ket1\!\bra1_x$, the total excitation number is $N_{\rm exc}:=\sum_{x\in\Lambda}\nu_x$, and exchange across an edge is represented by $Q_{xy}:=(X_xX_y+Y_xY_y)/2$.
We denote the nearest-neighbor edge set by $\mathcal E_\Lambda$ and the degree of $x$ by $\deg_\Lambda(x)$.
Substituting $Z_x=I-2\nu_x$ into the XXZ interaction gives
\begin{equation}
 \begin{aligned}
 H_{\rm XXZ}-E_{\rm vac}I
 &=\sum_{\langle x,y\rangle\in\mathcal E_\Lambda}v_{xy}
       -4D\Delta_{\rm XXZ}J_{\rm hop}N_{\rm exc}+B_{\partial},\\
 v_{xy}&:=J_{\rm hop}(Q_{xy}+4\Delta_{\rm XXZ}\nu_x\nu_y),\qquad
 E_{\rm vac}:=J_{\rm hop}\Delta_{\rm XXZ}|\mathcal E_\Lambda|,\\
 B_{\partial}&:=2\Delta_{\rm XXZ}J_{\rm hop}
       \sum_x(2D-\deg_\Lambda(x))\nu_x.
 \end{aligned}
 \label{eq:lb-rewrite}
\end{equation}
Only the physical boundary supports $B_{\partial}$.
Our reference removes the crossing terms $v_{xy}$ but retains both one-body terms in Eq.~\eqref{eq:lb-rewrite}.
This choice matters: removing the original XXZ edge terms would also change the potential along each artificial boundary.
By removing $v_{xy}$ instead, an interior probe cube retains a uniform potential, and its one-excitation evolution is a hopping walk up to a scalar phase.

For an integer $b\ge1$ and a real $x\ge0$, we define the exponential-series tail by
\begin{equation}
 F_{\rm tail}(b,x):=\sum_{k\ge b}\frac{x^k}{k!}.
 \label{eq:lb-tail-definition}
\end{equation}
If $b\ge2x$, consecutive terms beyond $b$ have ratio at most $1/2$. Together with the Stirling bound $b!\ge(b/e)^b$~\cite{Robbins55}, this gives
\begin{equation}
 F_{\rm tail}(b,x)\le2(ex/b)^b.
 \label{eq:lb-factorial-tail}
\end{equation}
The same estimate holds at $x=0$.
For a local walk, short powers of the generator cannot reach a distant boundary, so the difference between two evolutions begins only in this factorial tail.
The next lemma records the two vector forms of this graph-distance principle~\cite[Section~3.5]{BenziRazouk07} needed below.

\begin{lemma}[Propagation bounds from an exponential series]\label{lem:series-locality}
Let $b\ge1$ be an integer, $t\in\mathbb R$, and $K,\widetilde K$ bounded self-adjoint operators on a common Hilbert space, with norms at most $\Lambda_*$. Let $\ket\psi$ be a unit vector. If $K^j\ket\psi=\widetilde K^j\ket\psi$ for $0\le j<b$, the two evolutions satisfy
\begin{equation}
 \norm{(e^{-itK}-e^{-it\widetilde K})\ket\psi}
 \le2F_{\rm tail}(b,\Lambda_*|t|).
 \label{eq:series-boundary}
\end{equation}
If $P$ is an orthogonal projector and $PK^j\ket\psi=0$ for $0\le j<b$, the projected evolution satisfies
\begin{equation}
 \norm{Pe^{-itK}\ket\psi}\le F_{\rm tail}(b,\Lambda_*|t|).
 \label{eq:series-projection}
\end{equation}
\end{lemma}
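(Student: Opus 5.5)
Both bounds come from expanding the two exponentials in power series applied to $\ket\psi$ and cancelling the first $b$ terms, which agree by hypothesis.

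\emph{Boundary bound.} Since $K^j\ket\psi=\widetilde K^j\ket\psi$ for $j<b$, the series expansion gives
\[
(e^{-itK}-e^{-it\widetilde K})\ket\psi=\sum_{k\ge b}\frac{(-it)^k}{k!}\bigl(K^k-\widetilde K^k\bigr)\ket\psi .
\]
Both series converge absolutely in norm because the operators are bounded. Each difference has norm at most $\norm{K^k\ket\psi}+\norm{\widetilde K^k\ket\psi}\le2\Lambda_*^k$. Summing termwise gives $2\sum_{k\ge b}(\Lambda_*|t|)^k/k!=2F_{\rm tail}(b,\Lambda_*|t|)$, which is Eq.~\eqref{eq:series-boundary}.

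\emph{Projection bound.} Apply $P$ to the series for $e^{-itK}\ket\psi$. By hypothesis the terms with $j<b$ vanish. For the remaining terms, $\norm{PK^k\ket\psi}\le\Lambda_*^k$ because $\norm P\le1$. Summing gives $F_{\rm tail}(b,\Lambda_*|t|)$, which is Eq.~\eqref{eq:series-projection}.

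Neither step has a real obstacle. The two points to check are that termwise manipulation is justified, which holds by norm convergence of the exponential series for bounded operators, and that the hypotheses are used only as vector identities on $\ket\psi$. No commutation between $K$ and $\widetilde K$ is needed, and self-adjointness is not used beyond boundedness. Negative $t$ requires no separate treatment, since only $|t|$ enters the bounds.
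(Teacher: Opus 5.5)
Your proposal is correct and follows the paper's proof: expand both exponentials on $\ket\psi$, discard the terms of degree below $b$ using the hypotheses, bound the surviving terms by $2\Lambda_*^k|t|^k/k!$ (difference) or $\Lambda_*^k|t|^k/k!$ (projection), and sum. Your justification of termwise manipulation by norm convergence for bounded operators is the same point the paper makes, including its applicability on the infinite lattice.
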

\begin{proof}
In either exponential series, the hypotheses remove all terms below degree $b$.
The remaining terms are bounded by $2\Lambda_*^j|t|^j/j!$ for the difference and by $\Lambda_*^j|t|^j/j!$ for the projection.
Summing gives the two inequalities.
Boundedness of the operators guarantees norm convergence even on an infinite lattice, so the same argument applies to the reference walks.
\end{proof}

A nearest-neighbor generator moves support by at most one site per multiplication.
Consequently, agreement of the generators near the source verifies the first low-power hypothesis, while separation from the projector support verifies the second.

\begin{samepage}
\begin{lemma}[Arrival probability in a finite interval]\label{lem:lb-arrival}
There is an absolute integer $A_1$ with the following property. For $\theta_{\rm hop}\ge1$, integer $\ell_{\rm pr}\ge2\theta_{\rm hop}$, and integer $b_{\rm pr}\ge A_1\ell_{\rm pr}$, let $A_I$ be the adjacency matrix of the nearest-neighbor interval $I:=\{-b_{\rm pr},\ldots,b_{\rm pr}\}$. The arrival probability at displacement $\ell_{\rm pr}$ satisfies
\begin{equation}
 p_{\rm pr}:=\left|\bra{\ell_{\rm pr}}
       e^{-i\theta_{\rm hop}A_I}\ket0\right|^2
 \ge p_*:=\frac14
       \left(\frac{\theta_{\rm hop}}{2\ell_{\rm pr}}\right)^{2\ell_{\rm pr}}.
 \label{eq:lb-single-signal}
\end{equation}
Let $A_{I^D}$ be the nearest-neighbor adjacency matrix of the Cartesian cube $I^D$. Starting at its center $\mathbf 0$, the evolution $e^{-i\theta_{\rm hop}A_{I^D}}$ has the same probability $p_{\rm pr}$ of reaching the receiver slice $\{x\in I^D:x_1=\ell_{\rm pr}\}$.
\end{lemma}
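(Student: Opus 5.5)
The plan is to reduce to the infinite line, where the amplitude is an explicit Bessel function. I will lower-bound that Bessel function by a monotonicity argument and then treat the finite interval as a small perturbation using Lemma~\ref{lem:series-locality}. The $D$-dimensional statement will follow by factorization.

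\emph{Infinite line.} Let $A_{\mathbb Z}$ be the adjacency operator of $\mathbb Z$. Fourier transform gives $\bra{\ell}e^{-i\theta A_{\mathbb Z}}\ket0=(-i)^{\ell}J_{\ell}(2\theta)$, where we write $\ell:=\ell_{\rm pr}$ and $\theta:=\theta_{\rm hop}$. The hypothesis $\ell\ge2\theta$ places $x:=2\theta$ in $(0,\ell]$. Since $\frac{d}{dx}\bigl(x^{-\ell}J_\ell(x)\bigr)=-x^{-\ell}J_{\ell+1}(x)$ and $J_{\ell+1}>0$ below its first zero $j_{\ell+1,1}>\ell+1$, the ratio $x^{-\ell}J_\ell(x)$ is decreasing on $(0,\ell]$. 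Hence
\[
 J_\ell(2\theta)\ge\Bigl(\frac{2\theta}{\ell}\Bigr)^{\ell}J_\ell(\ell).
\]
It then suffices to prove a crude bound such as $J_\ell(\ell)\ge 4^{-\ell}$ for all $\ell\ge1$. This gives $|J_\ell(2\theta)|\ge(\theta/(2\ell))^{\ell}$, which is twice the amplitude $\sqrt{p_*}$ we need. For the bound on $J_\ell(\ell)$ I would first try the known estimate $J_\ell(\ell)\ge c\,\ell^{-1/3}$ (Watson; Landau's bounds). Alternatively, I would use the integral representation $J_\ell(\ell)=\pi^{-1}\int_0^\pi\cos(\ell(\phi-\sin\phi))\,d\phi$ together with the fact that the phase $\ell(\phi-\sin\phi)$ stays below $\pi/2$ on an interval of length of order $\ell^{-1/3}$, after checking the oscillatory remainder. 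Only an exponentially weak bound is needed, so this step has a large margin. I nevertheless expect it to be the main technical point, since it must hold uniformly for small $\ell$ as well; small $\ell$ can be checked directly.

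\emph{Finite interval.} For $j\le b_{\rm pr}$, the vectors $A_I^j\ket0$ and $A_{\mathbb Z}^j\ket0$ agree, because each multiplication moves support by one site. Both operators have norm at most $2$. The first part of Lemma~\ref{lem:series-locality}, with $b=b_{\rm pr}+1$, together with Eq.~\eqref{eq:lb-factorial-tail}, bounds the difference of amplitudes by
\[
 4\Bigl(\frac{2e\theta}{b_{\rm pr}}\Bigr)^{b_{\rm pr}}
 \le4\Bigl(\frac{2e}{A_1}\cdot\frac{\theta}{\ell}\Bigr)^{A_1\ell}.
\]
Because $\theta/\ell\le1/2$ and the exponent $A_1\ell\ge\ell$, this is at most $4\bigl((4e/A_1)\,\theta/(2\ell)\bigr)^{\ell}(4e/A_1)^{(A_1-1)\ell}$. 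For an absolute $A_1$ large enough, this is at most $\tfrac12(\theta/(2\ell))^{\ell}$. It is important here to keep the factor $\theta/\ell$ instead of bounding $\theta$ by $\ell$, since otherwise the comparison fails when $\theta=1$ and $\ell$ is large. The finite amplitude is then at least $\tfrac12(\theta/(2\ell))^{\ell}$, and squaring gives Eq.~\eqref{eq:lb-single-signal}.

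\emph{Cube.} The adjacency matrix of the Cartesian product decomposes as $A_{I^D}=\sum_{k=1}^D I\otimes\cdots\otimes A_I\otimes\cdots\otimes I$, and these summands commute. Hence $e^{-i\theta A_{I^D}}\ket{\mathbf0}=\bigotimes_k e^{-i\theta A_I}\ket0$. Summing $|{\rm amplitude}|^2$ over the other $D-1$ coordinates gives one for each factor, so the probability of the slice $x_1=\ell_{\rm pr}$ equals $p_{\rm pr}$.
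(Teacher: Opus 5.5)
Your finite-interval comparison and cube factorization are essentially the paper's. You use the marginally sharper threshold $b_{\rm pr}+1$ in Lemma~\ref{lem:series-locality}, and you correctly keep the factor $\theta_{\rm hop}/\ell_{\rm pr}$ so that the boundary error beats the signal uniformly.

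The real difference is the Bessel lower bound. You use monotonicity of $x^{-\ell}\mathrm{J}_\ell(x)$ on $(0,\ell]$, which follows from $(x^{-\ell}\mathrm{J}_\ell)'=-x^{-\ell}\mathrm{J}_{\ell+1}$ and $j_{\ell+1,1}>\ell+1$, to reduce everything to a lower bound on $\mathrm{J}_\ell(\ell)$. This reduction is correct. It is the kind of classical positive-argument comparison the paper mentions only in passing (DLMF Eq.~10.14.7). The paper never needs $\mathrm{J}_\ell(\ell)$. It sets $f(z)=j!(2/z)^j\mathrm{J}_j(z)$, and a Riccati comparison shows $-f'/f<z/j$ on $(0,j]$. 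Integrating gives $\mathrm{J}_j(z)\ge\frac{(z/2)^j}{j!}e^{-z^2/(2j)}$ on $[0,j]$. At $z=2\theta_{\rm hop}$ this yields $(\theta_{\rm hop}/(2j))^j$, using only $j!\le j^j$ and $2\theta_{\rm hop}\le j$.

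The one step you have not established is $\mathrm{J}_\ell(\ell)\ge4^{-\ell}$ for every $\ell\ge1$; as written it is a pointer or a sketch. The stationary-phase sketch is not a formality. If you split Bessel's integral where $\ell(\phi-\sin\phi)=\pi/2$, the second-mean-value bound on the oscillatory remainder is of the same order $\ell^{-1/3}$ as the main part. So it works only after explicit constants are checked, with small $\ell$ handled separately.

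You can close the step in either of two ways:
\begin{itemize}
\item Cite a precise uniform lower bound on $\mathrm{J}_\nu(\nu)$ valid for all $\nu\ge1$. Numerically $\ell^{1/3}\mathrm{J}_\ell(\ell)$ stays near $0.44$, so the exponential target has ample slack.
\item Use the paper's Riccati bound at $z=\ell$, which gives $\mathrm{J}_\ell(\ell)\ge2^{-\ell}e^{-\ell/2}\ge4^{-\ell}$. However, that same bound evaluated directly at $z=2\theta_{\rm hop}$ makes your monotonicity step superfluous.
\end{itemize}

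Given a sharp estimate on $\mathrm{J}_\ell(\ell)$, your route yields a stronger signal, about $(2\theta_{\rm hop}/\ell_{\rm pr})^{\ell_{\rm pr}}\ell_{\rm pr}^{-1/3}$ rather than $(\theta_{\rm hop}/(2\ell_{\rm pr}))^{\ell_{\rm pr}}$. The paper's route instead gives a proof that is self-contained from the differential equation alone.
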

\end{samepage}
\begin{proof}
The finite-interval bound follows by comparing with the explicit infinite-chain amplitude.
Because that amplitude can be exponentially small, the boundary correction must be bounded relative to it, not merely by a small constant.
Let $\mathrm J_j(z)$ be the integer-order Bessel function of the first kind, with power series and differential equation as in~\cite[Eqs.~10.2.1 and~10.2.2]{DLMF}.
For the infinite-chain adjacency matrix $A_{\mathbb Z}$, Fourier transformation gives
\begin{equation}
 \bra{\ell_{\rm pr}}e^{-i\theta_{\rm hop}A_{\mathbb Z}}\ket0
 =(-i)^{\ell_{\rm pr}}\mathrm{J}_{\ell_{\rm pr}}(2\theta_{\rm hop}).
 \label{eq:lb-walk-bessel}
\end{equation}
This identity uses the integral representation in~\cite[Eq.~10.9.2]{DLMF}.
Positive-argument comparisons for Bessel functions are classical~\cite{Paris84}; see also~\cite[Eq.~10.14.7]{DLMF}.
The coarse inequality sufficient here follows directly from their differential equation:
\begin{equation}
 \mathrm{J}_{\ell_{\rm pr}}(2\theta_{\rm hop})
 \ge\left(\frac{\theta_{\rm hop}}{2\ell_{\rm pr}}\right)^{\ell_{\rm pr}}
 \qquad(\ell_{\rm pr}\ge2\theta_{\rm hop}).
 \label{eq:lb-bessel}
\end{equation}
We set $j:=\ell_{\rm pr}$ and define
$f(z):=j!(2/z)^j\mathrm{J}_j(z)$, extended by $f(0):=1$.
The differential equation and the expansion at zero give
\[
 f''+\frac{2j+1}{z}f'+f=0,\qquad
 q(z):=-\frac{f'(z)}{f(z)}=\frac{z}{2j+2}+O(z^3).
\]
Before the first positive zero of $f$, its logarithmic derivative satisfies $q'=q^2-(2j+1)q/z+1$.
The expansion at zero places $q$ below $z/j$.
If it first crossed upward in $0<z\le j$, the derivative of $q-z/j$ at the crossing would be $z^2/j^2-1-2/j<0$, contradicting that crossing.
Thus $q(z)<z/j$ while $f$ stays positive.
Integrating $-(\log f)'=q$ gives $f(z)\ge e^{-z^2/(2j)}$, which in turn excludes a first zero in $[0,j]$.
The inequality therefore holds throughout that interval, and at $z=2\theta_{\rm hop}$ yields
\begin{align}
 \mathrm{J}_j(2\theta_{\rm hop})
 &\ge \frac{\theta_{\rm hop}^j}{j!}e^{-2\theta_{\rm hop}^2/j}
 \ge \left(\frac{\theta_{\rm hop}}{j}\right)^j e^{-j/2}
 \ge \left(\frac{\theta_{\rm hop}}{2j}\right)^j.
 \label{eq:lb-bessel-comparison}
\end{align}
The second step uses $j!\le j^j$ and $2\theta_{\rm hop}\le j$; the last uses $e^{-1/2}>1/2$.

To compare this amplitude with the finite walk, we extend $A_I$ by zero outside $I$ on $\ell_2(\mathbb Z)$.
Both adjacency operators have norm at most two, and their powers on $\ket0$ agree below $b_{\rm pr}$ because such walks cannot reach the boundary.
Lemma~\ref{lem:series-locality} consequently gives
\begin{equation}
 \left\|\bigl(e^{-i\theta_{\rm hop}A_I}
             -e^{-i\theta_{\rm hop}A_{\mathbb Z}}\bigr)\ket0\right\|
 \le2F_{\rm tail}(b_{\rm pr},2\theta_{\rm hop}).
 \label{eq:lb-finite-interval}
\end{equation}
A fixed, sufficiently large buffer ratio makes this error less than half the infinite-chain lower bound.
Indeed, when $b_{\rm pr}=A\ell_{\rm pr}$ for an integer $A\ge A_1$, Eq.~\eqref{eq:lb-factorial-tail} bounds the ratio of vector error to the amplitude in Eq.~\eqref{eq:lb-bessel} by
\[
 4\left[2\left(\frac{2e}{A}\right)^A
       \left(\frac{\theta_{\rm hop}}{\ell_{\rm pr}}\right)^{A-1}
       \right]^{\ell_{\rm pr}}.
\]
Since $\theta_{\rm hop}/\ell_{\rm pr}\le1/2$ and $\ell_{\rm pr}\ge2$, this ratio is uniformly below $1/2$ for a sufficiently large fixed $A$.
For a general $b_{\rm pr}\ge A_1\ell_{\rm pr}$, the same estimate follows with $A=\lfloor b_{\rm pr}/\ell_{\rm pr}\rfloor$ by monotonicity of the tail in its threshold.
The finite amplitude is then at least half the lower bound in Eq.~\eqref{eq:lb-bessel}; squaring proves Eq.~\eqref{eq:lb-single-signal}.

The higher-dimensional statement needs no new propagation estimate.
The cube adjacency is a sum of commuting one-coordinate adjacencies, so the propagator factorizes on the central source.
Summing over all transverse coordinates therefore gives
\[
 \begin{gathered}
 \sum_{x_2,\ldots,x_D\in I}
 \left|\bra{(\ell_{\rm pr},x_2,\ldots,x_D)}
 e^{-i\theta_{\rm hop}A_{I^D}}\ket{\mathbf 0}\right|^2\\
 =p_{\rm pr}\prod_{r=2}^{D}
 \sum_{x_r\in I}\left|\bra{x_r}e^{-i\theta_{\rm hop}A_I}\ket0\right|^2
 =p_{\rm pr}.
 \end{gathered}
\]
Each factor in the product equals one by unitarity; when $D=1$, the product is empty and equals one.
\end{proof}

\subsection{Combining separated probes}\label{sec:lb-packing}
A single walk supplies the arrival probability; packing many walks supplies the dependence on system size.
The reference Hamiltonian disconnects the probe cubes, making their arrival events independent.
To transfer this gain to the true Hamiltonian, we bound the action of every cut interaction using the small reference occupation near its endpoints.

\begin{samepage}
\begin{proposition}[Error bound from separated probes]\label{prop:lb-certificate}
There are constants $\alpha,c_{\rm sig}>0$ and $\ell_0$ such that, for every integer propagation distance satisfying
\begin{equation}
 2\theta_{\rm hop}\le\ell_{\rm pr}\le\alpha\ell_{\rm sys},
 \qquad \theta_{\rm hop}\ge1,\qquad \ell_{\rm sys}\ge\ell_0,
 \label{eq:lb-probe-range}
\end{equation}
a simulator with $a_{\rm cone}<\ell_{\rm pr}$ must have error
\begin{equation}
 \epsilon\ge c_{\rm sig}\min\!\left\{1,
 \frac{n}{\ell_{\rm pr}^D}
 \left(\frac{\theta_{\rm hop}}{2\ell_{\rm pr}}\right)^{2\ell_{\rm pr}}
 \right\}.
 \label{eq:lb-error-certificate}
\end{equation}
\end{proposition}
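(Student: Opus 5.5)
The plan is to apply Lemma~\ref{lem:separated-probes} to a packing of disjoint probe cubes. Each cube runs an independent single-excitation walk from Lemma~\ref{lem:lb-arrival}. The constant $A_1$ is replaced by a larger fixed buffer ratio $A\ge A_1$, chosen below so that boundary leakage is small relative to the arrival signal. Set $b_{\rm pr}:=A\ell_{\rm pr}$ and use cubes $C_i:=x_i+I^D$ with $I=\{-b_{\rm pr},\ldots,b_{\rm pr}\}$. The cubes are pairwise disjoint and lie at distance at least one from the physical boundary of $\Lambda$, so that $B_\partial$ vanishes on them. The source is the center $x_i$, and the receiver is the slice $B_i:=\{x\in C_i:(x-x_i)_1=\ell_{\rm pr}\}$. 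A cubic grid of spacing $2b_{\rm pr}+2$ gives $K\ge c\,n/\ell_{\rm pr}^D$ cubes, with $K\ge1$ once $\ell_{\rm pr}\le\alpha\ell_{\rm sys}$ for small $\alpha$ and $\ell_{\rm sys}\ge\ell_0$. Every receiver is at distance at least $\ell_{\rm pr}$ from its own source and much farther from the others, so $\dist(x_i,B_j)\ge\ell_{\rm pr}>a_{\rm cone}$. The vacuum is an eigenstate of $H_{\rm XXZ}$, as Lemma~\ref{lem:separated-probes} requires.

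For the reference dynamics, use the rewriting in Eq.~\eqref{eq:lb-rewrite}. Let $H_{\rm ref}$ be obtained from $H_{\rm XXZ}-E_{\rm vac}I$ by deleting every $v_{xy}$ whose edge has exactly one endpoint in some $C_i$, while keeping the one-body terms. Then $H_{\rm ref}$ is a sum of commuting pieces supported on the cubes and on their complement, and it conserves excitation number. For a subset $\mathcal I$, set $\ket{\phi_{\mathcal I}}:=e^{-iTH_{\rm ref}}\ket{S_{\mathcal I}}$. The complement stays in the vacuum, which the remaining terms annihilate. Each cube $C_i$ with $i\in\mathcal I$ carries exactly one excitation, on which $\nu_x\nu_y$ vanishes and $N_{\rm exc}$ is the scalar one. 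Hence, up to a global phase, the cube evolves as the hopping walk $e^{-i\theta_{\rm hop}\operatorname{sgn}(T)A_{I^D}}$. The state is then a product over cubes, so the events $E_{B_i}$ are independent. By Lemma~\ref{lem:lb-arrival}, and by complex conjugation for negative time, each has probability at least $p_0:=p_*$, and $p_*\le1/4$.

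The main step is the relative comparison error in Eq.~\eqref{eq:probe-comparison-assumption}. Apply Eq.~\eqref{eq:comparison-state} with $A=H_{\rm XXZ}-E_{\rm vac}I$ and $B=H_{\rm ref}$. The difference is the sum of the cut terms $v_{xy}$. Each cut term annihilates any basis state with no excitation at $x$ or $y$, and at a cut edge only the cube endpoint can be occupied under the reference evolution. Therefore $\norm{v_{xy}\ket{\phi(u)}}\le J_{\rm hop}(1+4\Delta_0)\norm{\nu_x\ket{\phi(u)}}$, using $\nu_x\nu_y\ket{\phi(u)}=0$ and that $Q_{xy}$ is nonzero only on the occupied endpoint. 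By the product structure, $\norm{\nu_x\ket{\phi(u)}}$ equals the single-walk amplitude at the boundary site $x$ of its own cube. Lemma~\ref{lem:series-locality}, in the projection form, bounds this by $F_{\rm tail}(b_{\rm pr},2\theta_{\rm hop})$, since the boundary lies at distance $b_{\rm pr}$ from the center and fewer powers of the adjacency operator cannot reach it.

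Summing over the $O(b_{\rm pr}^{D-1})$ cut edges of each cube and integrating over time $|T|$ gives the bound $|\mathcal I|\eta$ with
\[
 \eta\le C\,\theta_{\rm hop}\,b_{\rm pr}^{D-1}\cdot2\Bigl(\frac{2e\theta_{\rm hop}}{A\ell_{\rm pr}}\Bigr)^{A\ell_{\rm pr}}
\]
by Eq.~\eqref{eq:lb-factorial-tail}. Compared with $p_*=\tfrac14(\theta_{\rm hop}/2\ell_{\rm pr})^{2\ell_{\rm pr}}$, and using $\theta_{\rm hop}/\ell_{\rm pr}\le1/2$, $\theta_{\rm hop}\ge1$ and $\ell_{\rm pr}\ge2$, the ratio $\eta/p_*$ is at most a polynomial in $\ell_{\rm pr}$ times $\bigl[c'(e/A)^A(\theta_{\rm hop}/\ell_{\rm pr})^{A-2}\bigr]^{\ell_{\rm pr}}$. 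For a sufficiently large fixed $A$ this is below $1/100$ uniformly in the parameters; this choice of $A$ is where I expect the bookkeeping to be delicate.

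Lemma~\ref{lem:separated-probes} then gives $\epsilon\ge\tfrac18\min\{1,Kp_*\}$. Substituting $K\ge c\,n/\ell_{\rm pr}^D$, together with the factor $1/4$ in $p_*$, yields Eq.~\eqref{eq:lb-error-certificate} with $c_{\rm sig}=c/32$.
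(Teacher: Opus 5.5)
Your proposal is correct and follows the paper's proof essentially step by step: the same packing of cubes with a fixed buffer ratio $A\ge A_1$, the same reference Hamiltonian that deletes only the cut $v_{xy}$ and keeps the one-body terms, the state-dependent Duhamel bound evaluated on the reference state, and a buffer ratio large enough that $\eta/p_*\le1/100$ before you invoke Lemma~\ref{lem:separated-probes}. One small correction: applying the projection bound of Lemma~\ref{lem:series-locality} directly to $A_{I^D}$ requires $\Lambda_*=2D$, which gives $F_{\rm tail}(b_{\rm pr},2D\theta_{\rm hop})$ as in the paper; your $2\theta_{\rm hop}$ holds only after factorizing the walk over coordinates and bounding the single coordinate that reaches $\pm b_{\rm pr}$, and either version is absorbed by choosing $A$ large depending on $D$.
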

\end{samepage}
\begin{proof}
We verify the probe lemma for every subset of a packed family.
The buffer ratio will be fixed independently of the instance so that the error contributed by each selected cube is a small fraction of its arrival probability.

\paragraph{Reference probes.}
We take an integer buffer ratio $A_{\rm buf}\ge A_1$ and set $b_{\rm pr}:=A_{\rm buf}\ell_{\rm pr}$. The error comparison below will fix the value of $A_{\rm buf}$.
A regular grid in the middle half of the lattice gives disjoint cubes of side $2b_{\rm pr}+1$, separated by at least one unused layer.
For $\alpha$ sufficiently small relative to $A_{\rm buf}$, their number satisfies
\begin{equation}
 N_{\rm pr}^{\max}\ge c_{\rm pack}\frac{n}{\ell_{\rm pr}^D},
 \label{eq:lb-packing}
\end{equation}
where $c_{\rm pack}>0$ is constant. Rounding a grid of spacing $2b_{\rm pr}+2$ in a central box proves this estimate after increasing $\ell_0$.

Each selected cube has a source at its center and a receiver consisting of the transverse slice at first-coordinate displacement $\ell_{\rm pr}$.
The separation from its own source is $\ell_{\rm pr}$.
For another cube, the centers differ by at least $2b_{\rm pr}+2$ in some coordinate, while the receiver extends by at most $b_{\rm pr}$ transversely and only $\ell_{\rm pr}$ longitudinally.
Every source is therefore at distance at least $\ell_{\rm pr}$ from every receiver, as the causal test requires.
For any $N_{\rm pr}$ selected cubes, $S$ denotes their centers and $B$ their receiver union.
The argument below applies to every such subset, not only to the maximal packing.

Let $\mathcal E_{\rm cut}$ denote the edges crossing a selected cube boundary. We define the reference Hamiltonian by
\begin{equation}
 H_{\rm ref}:=H_{\rm XXZ}-E_{\rm vac}I
       -\sum_{\langle x,y\rangle\in\mathcal E_{\rm cut}}v_{xy}.
 \label{eq:lb-reference}
\end{equation}
The one-body terms of Eq.~\eqref{eq:lb-rewrite}, including $B_{\partial}$, are retained in this definition.
Under the reference dynamics, each selected cube contains exactly one excitation and the exterior stays in vacuum.
The density interactions $\nu_x\nu_y$ vanish on this invariant subspace.
Moreover, the cubes are strictly inside the physical boundary, so $B_{\partial}$ vanishes there and the uniform one-body term contributes only a phase.
Thus the reference state factorizes into one-excitation hopping states, with independent receiver events of probability $p_{\rm pr}\ge p_*$ by Lemma~\ref{lem:lb-arrival}.
Figure~\ref{fig:lower-probes} shows the probes and the removed edges.

\begin{figure}[tbp]
 \centering
 \includegraphics[width=.90\linewidth]{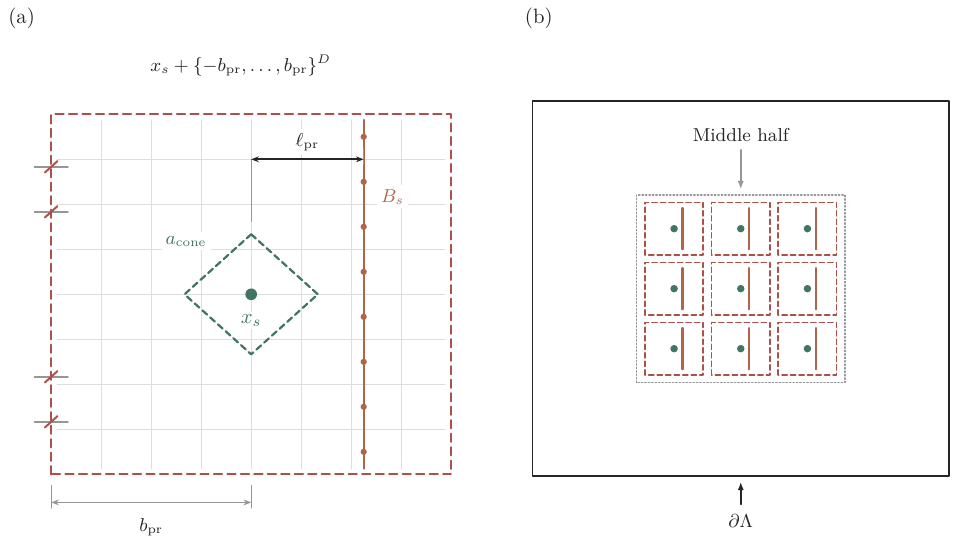}
 \caption{Propagation probes for the lower bound, illustrated in two dimensions. (a) A source $x_s$ is at the center of a cube of radius $b_{\rm pr}$. The receiver $B_s$ is the transverse slice at displacement $\ell_{\rm pr}$; the dashed diamond is the circuit causal ball of radius $a_{\rm cone}<\ell_{\rm pr}$. The reference Hamiltonian deletes $v_{xy}$ on every edge crossing a probe boundary and retains the original one-body terms; the marked cuts are examples. (b) Disjoint probes lie in the middle half of the lattice, away from the physical boundary. Their centers form $S$ and their receiver slices form $B$. The reference output factorizes across probes, and Eq.~\eqref{eq:lb-cut-error} bounds its difference from the interacting output. The buffer ratio $b_{\rm pr}/\ell_{\rm pr}=A_{\rm buf}$ is fixed by Eq.~\eqref{eq:lb-buffer-comparison}; the displayed distances are schematic.}
\label{fig:lower-probes}
\end{figure}

\paragraph{Error caused by the cuts.}
The cut error is determined by the reference states $\ket{\psi_{\rm ref}(u)}:=e^{-i\sigma uH_{\rm ref}}\ket S$, where $\sigma:=\operatorname{sgn}(T)$ and $0\le u\le|T|$.
An interior endpoint of a cut edge is at least $b_{\rm pr}$ hopping steps from the cube center.
Because the one-particle adjacency norm is at most $2D$, the projection bound of Lemma~\ref{lem:series-locality} makes its occupation at most $F_{\rm tail}(b_{\rm pr},2D J_{\rm hop}u)^2$.
The exterior endpoint has zero occupation.
The following two-qubit inequality converts these occupation bounds into a bound on the cut interaction:
\begin{equation}
 v_{xy}^2\le J_{\rm hop}^2\max\{1,16\Delta_0^2\}
                     (\nu_x+\nu_y).
 \label{eq:lb-edge-square}
\end{equation}
Taking the expectation of this inequality in the reference state and then its square root gives
\begin{equation}
 \begin{aligned}
 \|v_{xy}\ket{\psi_{\rm ref}(u)}\|
 &\le J_{\rm hop}\max\{1,4\Delta_0\}
 \sqrt{\bra{\psi_{\rm ref}(u)}(\nu_x+\nu_y)
                   \ket{\psi_{\rm ref}(u)}}\\
 &\le J_{\rm hop}\max\{1,4\Delta_0\}
             F_{\rm tail}(b_{\rm pr},2D J_{\rm hop}u).
 \end{aligned}
 \label{eq:lb-cut-action}
\end{equation}
There are $2D(2b_{\rm pr}+1)^{D-1}$ cut edges per cube.
Using $2b_{\rm pr}+1\le3b_{\rm pr}$ and integrating the monotone tail up to $|T|$ leads to the per-cube bound
\begin{equation}
 C_{\rm cut}:=2D\,3^{D-1}\max\{1,4\Delta_0\},\qquad
 \eta_{\rm cut}:=C_{\rm cut}\theta_{\rm hop}b_{\rm pr}^{D-1}
 F_{\rm tail}(b_{\rm pr},2D\theta_{\rm hop}).
 \label{eq:lb-cut-majorant}
\end{equation}
For the complete comparison, we denote the vacuum-shifted Hamiltonian by $H_\circ:=H_{\rm XXZ}-E_{\rm vac}I$ and the sum of cut interactions by $V_{\rm cut}:=\sum_{\langle x,y\rangle\in\mathcal E_{\rm cut}}v_{xy}$.
Since $H_\circ-H_{\rm ref}=V_{\rm cut}$, the state-dependent estimate for $\sigma H_\circ$ and $\sigma H_{\rm ref}$ gives
\begin{align}
 \delta_{\rm cut}
 &:=\left\|\left(e^{-iT(H_{\rm XXZ}-E_{\rm vac}I)}
                  -e^{-iTH_{\rm ref}}\right)\ket S\right\|\notag\\
 &\le\int_0^{|T|}\sum_{\langle x,y\rangle\in\mathcal E_{\rm cut}}
       \bigl\|v_{xy}e^{-i\sigma uH_{\rm ref}}\ket S\bigr\|\,du
 \le N_{\rm pr}\eta_{\rm cut}.
 \label{eq:lb-cut-error}
\end{align}
The interacting propagator drops out of the first inequality by unitarity.
Every remaining cut interaction is therefore evaluated on the reference state, whose boundary occupation is known.
Summing Eq.~\eqref{eq:lb-cut-action} over edges and using monotonicity of the tail gives the second inequality, with $J_{\rm hop}|T|=\theta_{\rm hop}$.
No independence assumption is made about the true evolution, nor is its later transport across cube boundaries neglected.

\paragraph{A uniform buffer and a detectable event.}
It remains to choose a single buffer ratio that works for all sizes, times, and propagation distances in the stated range.
The relevant decay exponent is $\phi_{\rm pr}:=\ell_{\rm pr}\log(2\ell_{\rm pr}/\theta_{\rm hop})$.
For $A_{\rm buf}\ge4eD$, Eq.~\eqref{eq:lb-factorial-tail} implies
\[
 F_{\rm tail}(b_{\rm pr},2D\theta_{\rm hop})
 \le2e^{-A_{\rm buf}\phi_{\rm pr}}.
\]
The signal has $p_*=e^{-2\phi_{\rm pr}}/4$, whereas the cut tail gains the larger exponent $A_{\rm buf}\phi_{\rm pr}$.
Using $\theta_{\rm hop}\le\ell_{\rm pr}/2$, their ratio is therefore bounded, for a constant $C$ depending only on $D,\Delta_0$, by
\begin{equation}
 \frac{\eta_{\rm cut}}{p_*}
 \le C A_{\rm buf}^{D-1}\ell_{\rm pr}^{D}
              e^{-(A_{\rm buf}-2)\phi_{\rm pr}}
 \le C A_{\rm buf}^{D-1}\ell_{\rm pr}^{D}
              4^{-(A_{\rm buf}-2)\ell_{\rm pr}}.
 \label{eq:lb-buffer-comparison}
\end{equation}
For $(A_{\rm buf}-2)\log4>D/2$, the logarithmic derivative of $\ell^D4^{-(A_{\rm buf}-2)\ell}$ is negative on $\ell\ge2$.
Its maximum is consequently at $\ell=2$, where the remaining bound is a polynomial in $A_{\rm buf}$ times $4^{-2(A_{\rm buf}-2)}$.
Choosing $A_{\rm buf}$ large enough makes the ratio at most $1/100$ uniformly in $\ell_{\rm pr}$.
We fix this value first, and then choose $\alpha$, $c_{\rm pack}$, and $\ell_0$ for the packing.
Because $A_{\rm buf}\ge A_1$, the finite-interval arrival bound remains valid.

All hypotheses of Lemma~\ref{lem:separated-probes} are now available with $K=N_{\rm pr}^{\max}$, $p_0=p_*$, and $\eta=\eta_{\rm cut}$.
Every subset has independent reference events, vector comparison error at most its size times $\eta_{\rm cut}$, and the required separation between every source and receiver.
The lemma therefore gives
\begin{equation}
 \epsilon\ge\frac18\min\{1,N_{\rm pr}^{\max}p_*\}.
 \label{eq:lb-probe-application}
\end{equation}
Substituting the packing count and $p_*$ proves Eq.~\eqref{eq:lb-error-certificate}, for example with $c_{\rm sig}:=\min\{1,c_{\rm pack}\}/100$.
The bound also covers packings larger than $p_*^{-1}$: the probe lemma uses only enough cubes to obtain a constant signal, keeping the comparison error controlled.
\end{proof}

\subsection{Ballistic propagation and the depth bound}\label{sec:lb-conclusion}
The packed-probe estimate applies beyond distance $2\theta_{\rm hop}$ and supplies the precision-dependent obstruction.
To use it just outside the circuit cone, we also need a lower bound on the cone's size.
The walk's second and fourth moments give that bound at constant accuracy; afterward, inverting the probe estimate will yield the joint result.

\begin{samepage}
\begin{lemma}[Depth required by ballistic propagation]\label{lem:lb-ballistic}
There are constants $c>0$ and $\ell_0$ such that, for $\ell_{\rm sys}\ge\ell_0$ and $1\le\theta_{\rm hop}\le c\ell_{\rm sys}$, every simulator of $H_{\rm XXZ}$ with error $\epsilon\le1/128$ satisfies
\begin{equation}
 a_{\rm cone}\ge\theta_{\rm hop}.
 \label{eq:lb-time-term}
\end{equation}
\end{lemma}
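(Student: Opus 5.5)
We will argue by contradiction. Assume $a_{\rm cone}<\theta_{\rm hop}$. We will exhibit a single probe whose true arrival probability beyond the cone is a constant, which Lemma~\ref{lem:lb-causal} forbids when $\epsilon\le1/128$. No packing is needed: one source suffices, because the ballistic walk carries constant probability to distance $\Theta(\theta_{\rm hop})$.

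\textbf{Reference walk and receiver.} Place one source $x_s$ at the center of the lattice. Use the same reference construction as in Proposition~\ref{prop:lb-certificate}, with a single cube of radius $b:=A\lceil\theta_{\rm hop}\rceil$ for a fixed large $A$. The condition $\theta_{\rm hop}\le c\ell_{\rm sys}$, with $c$ small, makes this cube interior. Take the receiver to be everything at distance greater than $a_{\rm cone}$ from $x_s$, say $B:=\{x:\dist(x,x_s)>a_{\rm cone}\}$. Since $a_{\rm cone}<\theta_{\rm hop}$, it suffices to show the reference walk lies at distance at least $\theta_{\rm hop}$ with probability at least some $c_1>0$.

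\textbf{Moment estimate.} Work first on the infinite lattice, where the walk factorizes over coordinates and $x_1$ has the Bessel law $|\mathrm J_k(2\theta_{\rm hop})|^2$. For this law the moments are explicit: $\mathbb E[x_1^2]=2\theta_{\rm hop}^2$, and $\mathbb E[x_1^4]=O(\theta_{\rm hop}^4)$. Both follow by computing $\bra0 e^{i\theta A}\hat x^{2m}e^{-i\theta A}\ket0$ through the nested commutators $[A,\hat x]$, which terminate because $[A,[A,\hat x]]$ is a function of momentum only. The Paley--Zygmund inequality applied to $x_1^2$ then gives $\Pr(x_1^2\ge\theta_{\rm hop}^2)\ge(1-\tfrac12)^2\,(2\theta^2)^2/\mathbb E[x_1^4]\ge c_2>0$. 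Hence the probability of $|x_1|\ge\theta_{\rm hop}$, and so of $\dist\ge\theta_{\rm hop}$, is at least $c_2$.

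\textbf{Transfer to the physical system.} Compare the infinite walk with the finite-cube walk using Eq.~\eqref{eq:series-boundary}. The error is $2F_{\rm tail}(b,2D\theta_{\rm hop})\le4(2eD/A)^{A\theta}$, which is at most $c_2/8$ once $A$ is large and $\theta\ge1$. Next compare the cube reference with the true XXZ evolution using the cut estimate Eq.~\eqref{eq:lb-cut-error} with $N_{\rm pr}=1$. This gives an error of order $\theta b^{D-1}F_{\rm tail}(b,2D\theta)$, again smaller than $c_2/8$ uniformly in $\theta\ge1$ for a fixed large $A$. By Eq.~\eqref{eq:lb-trace-distance}, the true probability of $E_B$ is therefore at least $c_2/2$.

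\textbf{Conclusion and main obstacle.} Lemma~\ref{lem:lb-causal} gives $c_2/2\le2\epsilon\le1/64$. This is a contradiction provided $c_2>1/32$. The main obstacle is exactly this: securing an explicit constant $c_2$ large enough to beat $1/128$. If the plain Paley--Zygmund bound is too weak, I would lower the threshold, taking $\Pr(x_1^2\ge\lambda\,\mathbb E x_1^2)$ with small $\lambda$ and choosing the receiver at distance $\sqrt{2\lambda}\,\theta$. This yields $a_{\rm cone}\ge c'\theta$, and I would absorb $c'$ into $\rho_{\rm gate}$ or into the definition of the constants. The clean fourth-moment computation for the Bessel law, $\mathbb E x_1^4=6\theta^4+2\theta^2$, gives ratio $4\theta^4/(6\theta^4+2\theta^2)\ge1/2$. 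With $\lambda=1/2$, Paley--Zygmund then gives $c_2\ge(1/4)(1/2)=1/8>1/32$. This suffices, at the cost of replacing the threshold $\theta$ by $\theta$ itself, since $\lambda\cdot2\theta^2=\theta^2$.
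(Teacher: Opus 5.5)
Your proof is correct, and its core matches the paper's:
\begin{itemize}
\item the moments $\mathbb E Y_{\rm pos}^2=2\theta_{\rm hop}^2$ and $\mathbb E Y_{\rm pos}^4=2\theta_{\rm hop}^2+6\theta_{\rm hop}^4$;
\item Paley--Zygmund at $\lambda=1/2$, giving $\Pr(|Y_{\rm pos}|\ge\theta_{\rm hop})\ge1/8$ for $\theta_{\rm hop}\ge1$;
\item a constant arrival probability beyond the cone, which contradicts Lemma~\ref{lem:lb-causal} since $2\epsilon\le1/64$.
\end{itemize}
The paper reads the moments off the characteristic function $\mathrm J_0(4\theta_{\rm hop}\sin(\varphi/2))$. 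Your Heisenberg-commutator computation also works, since $[A,[A,\hat x]]$ is in fact zero on $\mathbb Z$.

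The real difference is the transfer to the lattice. You pass through an artificial cube of radius $A\lceil\theta_{\rm hop}\rceil$ and make two comparisons. The first is Lemma~\ref{lem:series-locality} for the infinite walk against the cube walk. The second is the state-dependent cut estimate of Proposition~\ref{prop:lb-certificate} with $N_{\rm pr}=1$, comparing the cube reference with the true XXZ evolution. Both bounds hold as you state them, uniformly in $\theta_{\rm hop}\ge1$, once $A$ is fixed large and $c$ is small relative to $A$.

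The paper needs no cut. A single excitation evolves exactly as a one-particle walk under $H_{\rm XXZ}$ on all of $\Lambda$, because excitation number is conserved and $\nu_x\nu_y$ vanishes on that sector. Its generator is $K_{\rm 1p}$, including the physical boundary potential. One application of Lemma~\ref{lem:series-locality} against $A_{\mathbb Z^D}$, with threshold $\lfloor(\ell_{\rm sys}-1)/2\rfloor$, then suffices. Your route reaches the same conclusion but adds a comparison and reuses multi-probe machinery with no gain for a single probe.

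The fallback in your last paragraph is unnecessary, and it would not prove the lemma as stated. Lowering the threshold to $c'\theta_{\rm hop}$ gives only $a_{\rm cone}\ge c'\theta_{\rm hop}$. The factor $c'$ cannot be absorbed into $\rho_{\rm gate}$, because $a_{\rm cone}=\rho_{\rm gate}\dcnt$ is fixed by the gate geometry. Your own $\lambda=1/2$ computation already gives $c_2\ge1/8$, so you should drop the fallback.
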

\end{samepage}
\begin{proof}
On the infinite chain, let $Y_{\rm pos}$ be the displacement of one excitation after hopping time $\theta_{\rm hop}$.
The characteristic function of this continuous-time walk~\cite{Konno05} is
\begin{align}
 \mathbb E e^{i\varphi Y_{\rm pos}}
 &=\frac1{2\pi}\int_{-\pi}^{\pi}
 e^{2i\theta_{\rm hop}(\cos(q+\varphi)-\cos q)}\,dq\notag\\
 &=\mathrm{J}_0\bigl(4\theta_{\rm hop}\sin(\varphi/2)\bigr).
 \label{eq:lb-characteristic}
\end{align}
The Bessel integral representation gives the second equality~\cite[Eq.~10.9.1]{DLMF}.
Expanding at $\varphi=0$ then yields the moments
\begin{equation}
 \mathbb E Y_{\rm pos}^2=2\theta_{\rm hop}^2,\qquad
 \mathbb E Y_{\rm pos}^4=2\theta_{\rm hop}^2+6\theta_{\rm hop}^4.
 \label{eq:lb-moments}
\end{equation}
The second moment alone could be supported by a very rare large displacement.
The fourth moment excludes that possibility: applying Paley--Zygmund~\cite{PaleyZygmund32} to $Z:=Y_{\rm pos}^2$ gives
\begin{equation}
 \Pr(|Y_{\rm pos}|\ge\theta_{\rm hop})
 \ge\frac{(\mathbb EZ)^2}{4\mathbb EZ^2}
 =\frac{\theta_{\rm hop}^4}
        {2\theta_{\rm hop}^2+6\theta_{\rm hop}^4}
 \ge\frac18.
 \label{eq:lb-ballistic-probability}
\end{equation}
The inequality follows from
$\mathbb EZ\le\tfrac12\mathbb EZ+
(\mathbb EZ^2)^{1/2}\Pr(Z\ge\tfrac12\mathbb EZ)^{1/2}$,
by splitting the expectation and applying Cauchy--Schwarz. Coordinate factorization gives the same first-coordinate marginal on $\mathbb Z^D$.

To transfer this constant signal to the finite system, we place the source at a central site $o$.
On its one-excitation sector, subtracting the vacuum energy and the uniform one-body scalar gives the dimensionless Hamiltonian
\begin{equation}
 K_{\rm 1p}:=A_\Lambda+
 2\Delta_{\rm XXZ}\operatorname{diag}(2D-\deg_\Lambda(x)).
 \label{eq:lb-one-particle}
\end{equation}
Its norm and the norm of $A_{\mathbb Z^D}$ are at most
$C_{\rm 1p}:=2D+4D\Delta_0$.
We extend $K_{\rm 1p}$ by zero to $\ell_2(\mathbb Z^D)$. The powers of the two operators applied to $\ket o$ agree for every order below
$b_{\rm phys}:=\lfloor(\ell_{\rm sys}-1)/2\rfloor$,
because all shorter walks stay in the common interior. Lemma~\ref{lem:series-locality} therefore gives
\begin{equation}
 \left\|\bigl(e^{-i\theta_{\rm hop}K_{\rm 1p}}
             -e^{-i\theta_{\rm hop}A_{\mathbb Z^D}}\bigr)\ket o\right\|
 \le2F_{\rm tail}(b_{\rm phys},C_{\rm 1p}\theta_{\rm hop}).
 \label{eq:lb-physical-boundary}
\end{equation}
We choose $c\le1/(12eC_{\rm 1p})$ and $\ell_0\ge9$ to make the boundary error uniformly small.
These choices ensure $b_{\rm phys}\ge\ell_{\rm sys}/3$ and $C_{\rm 1p}\theta_{\rm hop}/b_{\rm phys}\le1/(4e)$ throughout the stated time range.
The factorial-tail estimate consequently makes the vector error at most $4\cdot4^{-b_{\rm phys}}\le1/32$.

Both states can now be tested with the same projector on $\ell_2(\mathbb Z^D)$, namely displacement $|x_1-o_1|\ge\theta_{\rm hop}$.
Since the finite state is supported on $\Lambda$, this projector is exactly its receiver event for
$B:=\{x\in\Lambda:|x_1-o_1|\ge\theta_{\rm hop}\}$.
Combining Eqs.~\eqref{eq:lb-ballistic-probability} and~\eqref{eq:lb-physical-boundary}, the finite-system arrival probability is at least $1/8-1/32\ge1/16$.
If $a_{\rm cone}<\theta_{\rm hop}$, however, Lemma~\ref{lem:lb-causal} would bound it by $2\epsilon\le1/64$.
This contradiction proves the ballistic depth constraint.
\end{proof}

\begin{proof}[Proof of Theorem~\ref{thm:depth-lower-bound}]
With the probe constants from Proposition~\ref{prop:lb-certificate} fixed, we choose the time constant and minimum side length so that Lemma~\ref{lem:lb-ballistic} also holds.
The error threshold is chosen to satisfy $\epsilon_{\rm lb}\le1/128$ and $\epsilon_{\rm lb}<c_{\rm sig}$. All of these choices depend only on the fixed model.
The ballistic lemma gives $a_{\rm cone}\ge\theta_{\rm hop}\ge1$.
We can therefore choose a probe distance just beyond the cone while remaining in the range of the probe estimate, unless the cone already has order the system size.

If $a_{\rm cone}\ge\alpha\ell_{\rm sys}/4$, then $\dcnt=\Omega(\ell_{\rm sys})$, which proves the theorem because its stated minimum is at most $\ell_{\rm sys}$.
In the remaining case, we use the integer distance
\begin{equation}
 \ell_{\rm pr}:=\left\lceil\max\{2\theta_{\rm hop},a_{\rm cone}+1\}\right\rceil.
 \label{eq:lb-distance-choice}
\end{equation}
The choice obeys $a_{\rm cone}<\ell_{\rm pr}\le4a_{\rm cone}<\alpha\ell_{\rm sys}$ and $\ell_{\rm pr}\ge2\theta_{\rm hop}$, so Proposition~\ref{prop:lb-certificate} applies.
Since $\epsilon<c_{\rm sig}$, its minimum cannot be saturated.
Taking logarithms and writing $m_{\rm lb}:=\log(n/\epsilon)$ therefore gives
\begin{align}
 m_{\rm lb}
 &\le2\ell_{\rm pr}\log(2\ell_{\rm pr}/\theta_{\rm hop})
       +D\log\ell_{\rm pr}+\log(1/c_{\rm sig})\notag\\
 &\le C'\ell_{\rm pr}\log(2\ell_{\rm pr}/\theta_{\rm hop}).
 \label{eq:lb-log-certificate}
\end{align}
Here $\ell_{\rm pr}\ge2$ and $2\ell_{\rm pr}/\theta_{\rm hop}\ge4$, so a fixed $C'$ suffices.

If $\ell_{\rm pr}\ge m_{\rm lb}$, then
$\ell_{\rm pr}\ge m_{\rm lb}/\log(e+m_{\rm lb}/\theta_{\rm hop})$.
Otherwise, the logarithm in Eq.~\eqref{eq:lb-log-certificate} is at most
$2\log(e+m_{\rm lb}/\theta_{\rm hop})$.
In either case, $\ell_{\rm pr}\le4a_{\rm cone}$ gives
\[
 a_{\rm cone}\ge c_{\rm prec}
 \frac{m_{\rm lb}}{\log(e+m_{\rm lb}/\theta_{\rm hop})},
 \qquad c_{\rm prec}:=\min\left\{\frac14,\frac1{8C'}\right\}>0.
\]
The ballistic estimate $a_{\rm cone}\ge\theta_{\rm hop}$ holds for the same circuit.
Since $c_{\rm prec}\le1$, taking the maximum of these two bounds gives
\begin{equation}
 a_{\rm cone}\ge\frac{c_{\rm prec}}2
 \left[\theta_{\rm hop}+
 \frac{m_{\rm lb}}{\log(e+m_{\rm lb}/\theta_{\rm hop})}\right].
 \label{eq:lb-max-to-sum}
\end{equation}
Dividing by $\rho_{\rm gate}$ and combining with the earlier case
$a_{\rm cone}\ge\alpha\ell_{\rm sys}/4$ yields
\begin{equation}
 \dcnt\ge c'\min\!\left\{\ell_{\rm sys},\quad
 \theta_{\rm hop}+\frac{m_{\rm lb}}
 {\log(e+m_{\rm lb}/\theta_{\rm hop})}\right\}.
 \label{eq:lb-intermediate-bound}
\end{equation}
One may take $c':=\rho_{\rm gate}^{-1}\min\{\alpha/4,c_{\rm prec}/2\}>0$,
which covers both cases and depends only on the fixed model.

To return to the time normalization of the theorem, we use the lower time limit in Eq.~\eqref{eq:lb-regime}, which ensures $\mathcal T=\lceil J|T|\rceil$. Equation~\eqref{eq:lb-normalization} consequently gives
\begin{equation}
 \kappa_{\rm XXZ}\theta_{\rm hop}\le\mathcal T
 \le(\kappa_{\rm XXZ}+1)\theta_{\rm hop}.
 \label{eq:lb-time-comparison}
\end{equation}
Because both logarithms are at least one and differ by at most $\log(\kappa_{\rm XXZ}+1)$, replacing $\theta_{\rm hop}$ by $\mathcal T$ changes the bound only by constants depending on $D,\Delta_0$.
Substitution into Eq.~\eqref{eq:lb-intermediate-bound} therefore proves Eq.~\eqref{eq:depth-lower-bound}.
The choice $c_{\rm time}:=\kappa_{\rm XXZ}c$ expresses the permitted time range in the same normalization.
\end{proof}

\FloatBarrier
\section{Discussion}\label{sec:discussion}
Locality reduces two costs that must be controlled together: the coefficient sum of a high-precision correction and the distance over which its coherent controls travel.
HHKL localization makes the first cost small enough for a product formula of fixed order, while the finite label representation allows the second to be paid once across successive records.
Their combination gives an exponentially accurate block simulation in linear-diameter depth, and hence a single logarithmic overhead on the full lattice in the exact single-qubit gate model.

The selection construction can also be separated from the residual analysis.
Corollary~\ref{cor:local-product-select} implements ordered products from an explicit table of bounded-support unitaries whenever the number of records and their width satisfy the geometric conditions.
It therefore supplies the distribution and removal of coherent controls for other LCU constructions.

The lower bound explains why a bounded typical propagation distance does not permit bounded-depth simulation at global constant accuracy.
Many weak reference arrival events combine into a detectable signal, and the state-dependent boundary estimate transfers that signal to uniform interacting XXZ dynamics.
At fixed normalized time in the regime of Eq.~\eqref{eq:lb-regime} and sufficiently small constant error, this forces nearly logarithmic depth.
The same reasoning applies to another model whenever its arrival probabilities and boundary errors satisfy the relative comparison in Lemma~\ref{lem:separated-probes}.

A remaining question is the joint dependence on time and precision in Eq.~\eqref{eq:joint-depth-comparison}.
The fixed-time gap is doubly logarithmic, but at growing time the lower bound adds the two scales while the upper bound multiplies time by a logarithmic overhead.
A related question is which algebraic structures allow fast-forwarding~\cite{AtiaAharonov17,GuSomma21} while retaining an efficient geometric implementation.
For this construction, finite gate sets add the synthesis cost of Appendix~\ref{app:gates}; short lattice sides and time-dependent interactions would require further control-placement and residual-representation arguments, respectively.

\section*{AI Usage}
The core ideas and overall approach of       this work were developed by the authors. The proofs and manuscript were developed through multiple rounds of interaction between the authors and GPT-5.6 and GPT-6, which assisted with calculations, proof development, and writing. The authors take full responsibility for the correctness of the proofs and conclusions presented in this paper.

\section*{Acknowledgments}
The authors thank Xiaoming Zhang for helpful discussion. This work is supported by the National Natural Science Foundation of China (NSFC) (Grant Nos.~62501060, 62461160263, 62371050, 62332009, 12347104, and 12361161602), the NSAF (Grant No.~U2330201), the Quantum Science and Technology--National Science and Technology Major Project (Grant Nos.~2021ZD0302901 and 2023ZD0300200), the NSFC/RGC Joint Research Scheme (Grant No.~12461160276), the Beijing Natural Science Foundation (Grant No.~Z250004), the Beijing Science and Technology Planning Project (Grant No.~Z25110100810000), the Natural Science Foundation of Jiangsu Province (Grant No.~BK20243060), the Fundamental and Interdisciplinary Disciplines Breakthrough Plan of the Ministry of Education of China (Grant No.~JYB2025XDXM118), the ``111 Center'' (Grant No.~B26023), and the Fundamental Research Funds for the Central Universities (Grant No.~2026300376).

\appendix
\section{Reversible geometric distribution}\label{app:geometry}
To prove Lemma~\ref{lem:geometry}, the timing argument of Section~\ref{sec:geometry} must be supplemented by an explicit bounded-density layout and a reversible return circuit.
Equal root-to-leaf delays will ensure that all subblocks use the same record together; separate return paths will allow those copies to be cleared while later records move forward.
We first specify the coordinates and transport gates, then verify their timing, selected action, and cost.
Throughout this appendix $D\ge2$, and $v=\ell^D$ is the volume of the containing cube, within a constant factor of the original block volume.

\paragraph{The dyadic layout.}
Each node needs room for endpoint word segments and the short walks used to equalize connection lengths.
The choice of $s$ in Section~\ref{sec:geometry} supplies that space, with $b=2s$ measured in original data coordinates.
The dyadic subdivision uses a padded cube of side $\ell_p:=b2^K$, where $K\ge0$ is minimal with $\ell_p\ge\ell$; all added sites have zero interactions.
Thus $2^K$ counts leaf subblocks along a coordinate and $K$ counts parent--child levels. Since $b\le\ell$, the padding obeys $\ell_p<2\ell$, while $b^D=\Theta(w)$.
All padded and auxiliary sites are charged to the resource bounds.
Placing original data coordinate $\mathbf x$ at coarse coordinate $2\mathbf x$ leaves room between data sites for the auxiliary paths.
Distinct qubits can temporarily share a coarse coordinate; the separate fixed-scale refinement in Eq.~\eqref{eq:microcell} will give them distinct physical sites.
The leaves of the following subdivision are exactly the $b^D$-site subblocks, so $v_0=b^D$.

A dyadic cube is indexed by $\nu:=(k,\mathbf m)$, where $0\le k\le K$ is its level and $0\le m_j<2^{K-k}$ are its coordinate indices.
Its original data coordinates are
\[
 \prod_{j=1}^D[m_jb2^k,(m_j+1)b2^k),
\]
of side $b2^k$: level zero consists of the leaf subblocks, and level $K$ is the whole padded cube. Its center in the doubled coordinates and its auxiliary workspace are
\begin{equation}
 c_j(\nu):=(2m_j+1)b2^k,\qquad
 B_\nu:=\mathbf c(\nu)+\{0,\ldots,s-1\}^D.
 \label{eq:nodecoords}
\end{equation}
A node at level $k\ge1$ has children $(k-1,2\mathbf m+\boldsymbol\epsilon)$ indexed by $\boldsymbol\epsilon\in\{0,1\}^D$. The child centers therefore satisfy
\begin{equation}
 \mathbf c(\mathrm{child})-\mathbf c(\nu)
 =b2^{k-1}(2\boldsymbol\epsilon-\mathbf1).
 \label{eq:childcoords}
\end{equation}
Here $\mathbf1$ is the all-one vector.
Distinct centers differ by a nonzero multiple of $b$ in some coordinate, whereas each workspace has side $b/2$; their workspaces are therefore disjoint even across levels.
Each contains $\Theta(w)$ coarse sites, ordered by the alternating-coordinate snake starting at $\mathbf c(\nu)$.
The first $w$ positions hold a word, and the remaining snake length accommodates the endpoint paths below.
Reserving constantly many qubits at each position permits the constant number of local words to share this ordering at bounded density.
Leaf selectors and work likewise use constantly many qubits per data site.
By Lemma~\ref{lem:path-permutation}, aligning an arriving word with $\reg R_g$ costs depth $O(w)$ and $O(w^2)$ gates.
Figure~\ref{fig:hierarchy} shows the cubes and their disjoint workspaces.

\begin{figure}[tbp]
\centering
\includegraphics[width=.90\linewidth]{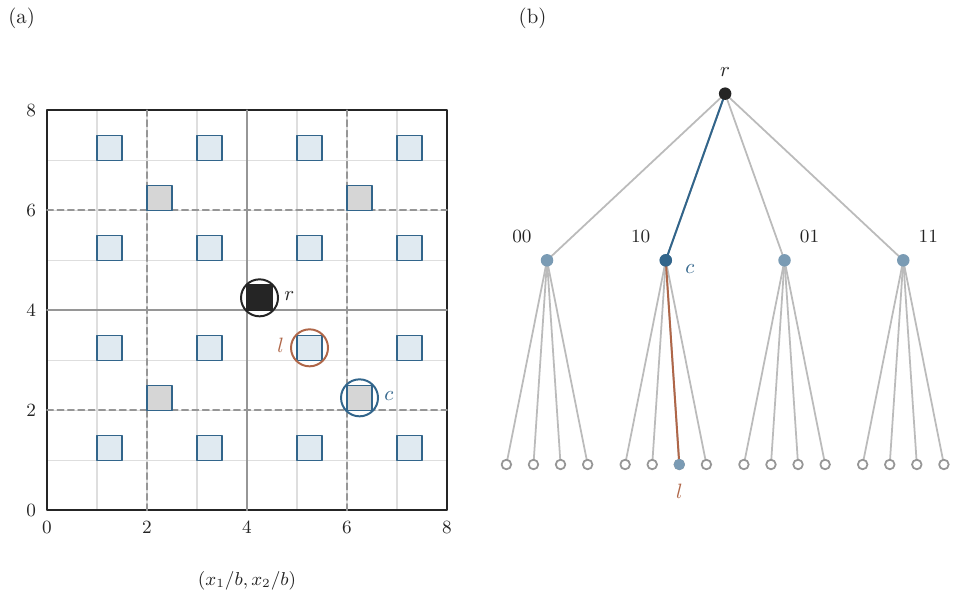}
\caption{Dyadic regions and workspaces for $D=2$ and $K=2$. (a) Coordinates are in units of $b$ after doubling the data coordinates. A level-$k$ region has original side $b2^k$, coarse side $2b2^k$, and workspace $B_\nu=\mathbf c(\nu)+\{0,\ldots,b/2-1\}^2$. Dark, gray, and pale blue squares show levels $2,1,0$; their workspaces are disjoint across levels. (b) The inclusion tree has child index $(k-1,2\mathbf m+\boldsymbol\epsilon)$, where $\boldsymbol\epsilon\in\{0,1\}^2$. The labels $r,c,l$ mark the same root, child, and leaf in both panels, at coarse centers $(4b,4b)$, $(6b,2b)$, and $(5b,3b)$. Leaves are subblocks.}
\label{fig:hierarchy}
\end{figure}

An internal node receives a word in $\reg I_\nu$ and distributes it into child words $\reg C_{\nu,\epsilon}$, one for each $\epsilon\in\{0,1\}^D$.
Separate words handle the reverse direction, so return traffic does not occupy the forward registers.
At the root, the original-slot array supplies the input.
Writing $0$ for the all-zero child index, distribution is implemented by
\begin{equation}
 \SWAP(\reg I_\nu,\reg C_{\nu,0});\qquad
 \CX(\reg C_{\nu,0};\reg C_{\nu,\epsilon})
 \quad(\epsilon\ne0).
 \label{eq:split}
\end{equation}
On $\ket y\ket0^{\otimes2^D}$, these gates leave the input zero and produce $\ket y^{\otimes2^D}$ in the children.
For a superposition, these are correlated basis words.
Because their local uses preserve the basis values, the inverse circuit can later combine the equal returned words, clear all but child zero, and swap that word into a zero output.
Aligning the constant number of registers within $B_\nu$ costs depth $O(w)$ and $O(w^2)$ gates.
The emptied input positions can then be reused in the next period.
Figure~\ref{fig:dyadic} relates this bitwise circuit to the geometric paths.

\begin{figure}[tbp]
\centering
\includegraphics[width=.90\linewidth]{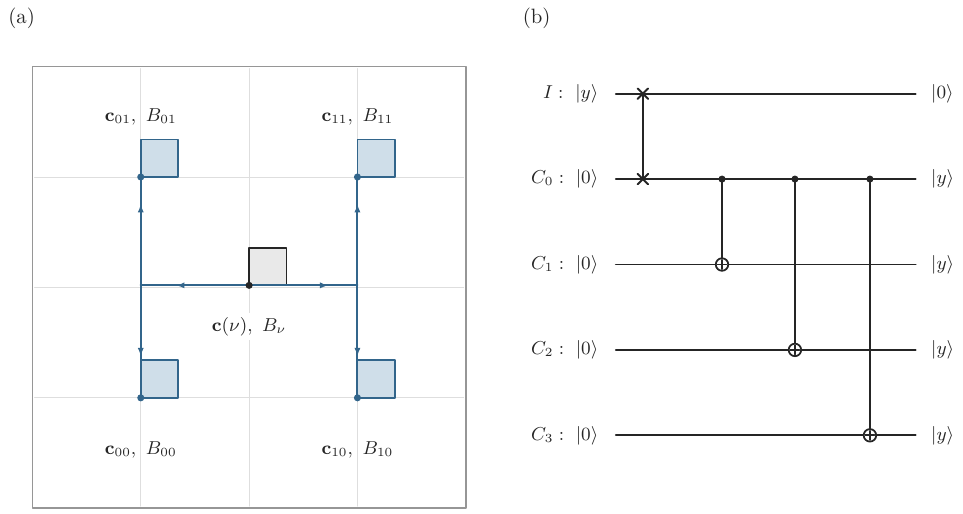}
\caption{Distribution from a parent to its four children. (a) A base path changes the first coordinate and then the second, giving length $m_k=Db2^{k-1}$. The labels $\mathbf c_\epsilon,B_\epsilon$ denote the center and workspace of child $\nu_\epsilon$; unindexed labels refer to the parent. Overlapping coarse paths use distinct qubits. (b) For a basis value $y$, a SWAP moves the input to $C_0$, and three CNOTs write it into the other child registers, leaving the input zero. The inverse combines equal returned copies and clears the child registers. Acting on each bit gives Eq.~\eqref{eq:split}; the action on superpositions follows by linearity. The labels $0,1,2,3$ abbreviate the four two-bit child indices.}
\label{fig:dyadic}
\end{figure}

\paragraph{Connections and word transport.}
A parent--child connection changes coordinates $1,\ldots,D$ in order, using the signed displacements of Eq.~\eqref{eq:childcoords}.
This gives a coarse nearest-neighbor path of center-to-center length $m_k$.
For word transport, its length must be a multiple of $w$ and must accommodate endpoint segments, so we extend it to a half-walk of length $d_kw$, where
\begin{equation}
 m_k:=Db2^{k-1},\qquad d_k:=\lceil m_k/w\rceil+4.
 \label{eq:edgedelay}
\end{equation}
The additive four accommodates the source and destination word segments and local padding walks.
Since a period advances each array by $w$ bit positions, the resulting $d_k$ is also its delay in periods.
Forward and reverse directions use independent arrays.
A leaf can therefore return a used record while later records continue toward it; inverse distribution at internal nodes will combine the returning copies.
Let $e$ denote a directed connection.
We implement its transport by a cycle of $2d_kw$ bit positions, indexed modulo $2d_kw$, with one half-walk from source to destination and another closing the cycle.
This closure permits the same reversible shift at every position.
Each position has distinct qubits $\reg A_{e,q}$ for the record and $\reg B_{e,q}$ for the initially zero shift channel.
Positions $0,\ldots,w-1$ form the source word and positions $d_kw,\ldots,d_kw+w-1$ the destination word.

Locality requires consecutive array positions to lie on a closed coarse walk.
Our placement starts at the source center, follows its workspace snake for $w-1$ edges, and retraces those edges; the first $w$ positions consequently hold the source word.
To obtain the prescribed connection length, the walk includes a further out-and-back segment within this workspace, of length
\begin{equation}
 \rho_k:=d_kw-m_k-2(w-1),\qquad
 2w+2\le\rho_k\le3w+1.
 \label{eq:localpadding}
\end{equation}
This segment uses $\lfloor\rho_k/2\rfloor$ snake edges each way, with a step between distinct qubits at the same coarse site if $\rho_k$ is odd.
The workspace has enough room because $s^D\ge4w+8$.
The length-$m_k$ base path then completes the first half-walk, bringing its length to exactly $d_kw$.
The same endpoint construction at the destination, followed by the reversed base path, closes the cycle. Its $2d_kw$ positions therefore contain both endpoint words in the required locations.
Repeated visits to a coarse site label distinct array positions, hence distinct physical qubits.
Only the final return is identified with the starting position to close the cycle.
Figure~\ref{fig:transport} shows how the endpoint segments, padding, and base paths fit together.

\begin{figure}[tbp]
\centering
\includegraphics[width=.90\linewidth]{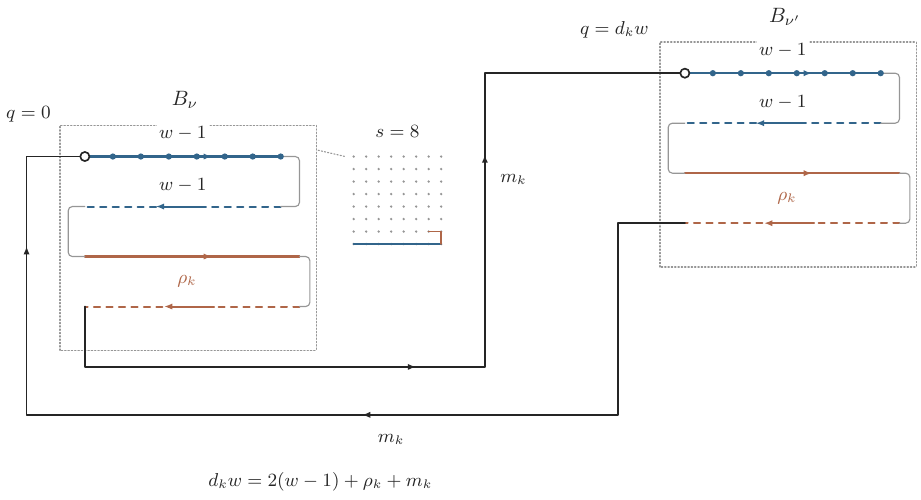}
\caption{Visit order in a closed transport array. Dotted boxes mark the endpoint workspaces. At each endpoint, the blue outward and retracing strokes contribute $2(w-1)$ steps, and the brown excursion contributes $\rho_k$. Thin gray joins add no step. The black base path and its reverse each have length $m_k$, so the full cycle has $2d_kw$ positions. The open circles mark the first bits of the endpoint words, at $q=0$ and $q=d_kw$. Repeated visits use distinct qubits $(\mathsf A_{e,q},\mathsf B_{e,q})$ even at the same coarse site. The inset projects the endpoint excursions onto the workspace snake. Here $D=2$, $w=s=8$, $b=16$, and $k=1$, giving $m_k=32$, $d_k=8$, and $\rho_k=18$. Equation~\eqref{eq:localpadding} also covers odd padding lengths. The drawing is not to scale.}
\label{fig:transport}
\end{figure}

One bit shift consists of the two matchings
\begin{equation}
 \prod_q\SWAP(\reg A_{e,q},\reg B_{e,q});\qquad
 \prod_q\SWAP(\reg B_{e,q},\reg A_{e,q+1}).
 \label{eq:cyclicshift}
\end{equation}
Following a basis value through these two matchings gives $A'_q=A_{q-1}$ and $B'_q=B_{q+1}$, hence the same register permutation on arbitrary states.
In particular, a zero $B$ channel returns to zero after every bit shift.
Repeating the shift $w$ times advances a word by one of the $2d_k$ word positions, at depth $O(w)$ and gate count $O(w)$ times the number of bit positions.
Figure~\ref{fig:shift} shows the matchings.
A word inserted at zero reaches the destination after $d_k$ word shifts and is extracted there.
As no occupied word passes the destination, the return half stays empty and the source is zero before the next insertion, even while later records occupy the forward half.

\begin{figure}[tbp]
\centering
\includegraphics[width=.90\linewidth]{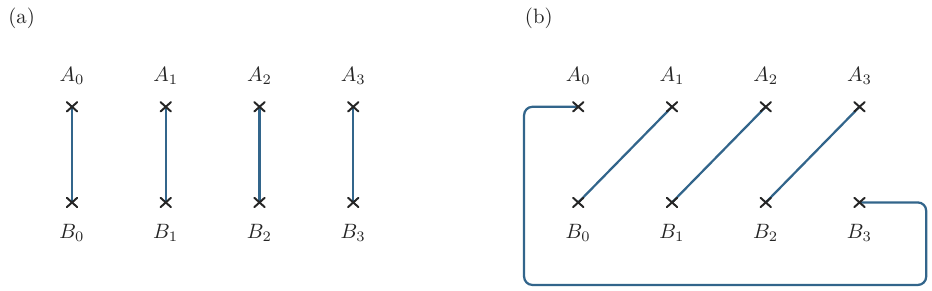}
\caption{A reversible cyclic bit shift from two SWAP matchings. Exchanging $A_q$ with $B_q$ and then $B_q$ with $A_{q+1}$ gives $A'_q=A_{q-1}$ and $B'_q=B_{q+1}$, with cyclic indices. Hence a zero $B$ array returns to zero while the $A$ contents advance. Repeating the operation $w$ times shifts a whole word, and the closed-walk placement keeps the wraparound SWAP local.}
\label{fig:shift}
\end{figure}

\begin{lemma}[Bounded physical density]\label{lem:density}
The node workspaces $B_\nu$, directed connection arrays $(\reg A_{e,q},\reg B_{e,q})$, and leaf registers $(\reg R_g,\reg W_g)$ above use $O(v)$ qubits, with a constant number of auxiliary qubits per coarse site.
\end{lemma}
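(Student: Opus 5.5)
The plan is to charge every auxiliary qubit to a coarse site and show that only constantly many structures claim any given coarse site. There are three kinds of qubits, and I would count them separately.

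\emph{Node workspaces.} The workspaces $B_\nu$ are pairwise disjoint, as noted after Eq.~\eqref{eq:nodecoords}. Each reserves constantly many qubits per position. These together contribute at most a constant per coarse site.

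\emph{Leaf registers.} The registers $(\reg R_g,\reg W_g)$ use constantly many qubits per data site of $\Omega_g$. The leaves partition the padded cube, so these too contribute a constant per coarse site.

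\emph{Connection arrays.} This is the main step. Fix a coarse site $x$ and count the array positions $(\reg A_{e,q},\reg B_{e,q})$ located at $x$, splitting them into endpoint parts and base-path parts.

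The endpoint parts consist of the snake segments, their retracing, and the padding of length $\rho_k\le3w+1$. They lie inside the workspace $B_\nu$ of the source or destination node. A node has one parent connection and $2^D$ child connections, each in two directions. Hence only $O(2^D)=O(1)$ arrays have endpoints in a given $B_\nu$. Each such array visits a given snake position only a bounded number of times: once outward, once retracing, and at most twice during the padding excursion, with each pass traversed in both halves of the cycle. So the endpoint contribution is constant per site.

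For the base paths, a level-$k$ parent--child path moves along the coordinate axes within the parent's region. That region is a dyadic cube of side $b2^k$ in original coordinates, or $2b2^k$ in doubled coordinates. The path is monotone in each coordinate, so it visits any site at most once per half-walk. I would then bound how many level-$k$ paths pass through $x$. Level-$k$ parent regions are disjoint, so $x$ lies in exactly one of them. That parent has only $2^D$ children, each with forward and reverse arrays and two half-walks. This gives $O(1)$ paths per level through $x$.

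The obstacle is summing over the $K=O(\log\ell)$ levels, which naively gives $O(\log\ell)$ per site rather than $O(1)$. I would handle this first by exploiting the coordinate structure. A level-$k$ path changes coordinates in order from $\mathbf c(\nu)$. Its straight segments lie on lines whose fixed coordinates are centers or child centers, of the form $(2m+1)b2^{k'}$ with $k'\in\{k-1,k\}$. The lines at different levels are distinct, since the odd multiples of $b2^{k}$ are disjoint across $k$. So a straight segment at $x$ through its interior has a level determined by $x$. This leaves only turning points, where a path switches coordinate. Turning points occur at points all of whose coordinates are such odd multiples. If the plain coordinate-class argument still leaves an $O(\log)$ overlap on a line for some dimension, I would fall back on slightly offsetting each level's base path inside the lattice refinement. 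For example, I would shift level-$k$ paths by a level-dependent offset modulo $s$, which keeps them within the constant refinement factor; I would verify this offset separately.

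\emph{Total count.} With density bounded per coarse site, the total count follows from the padded volume. There are $O(\ell_p^D)=O(v)$ coarse sites, since the doubling and $\ell_p<2\ell$ are constant factors. This gives $O(v)$ qubits in total.
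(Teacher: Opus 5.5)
\textbf{Overall.} Your plan is the paper's: disjoint workspaces $B_\nu$, at most $2^D+1$ connections with endpoints in any one workspace, disjoint parent interiors giving at most $2^D$ base paths per level and direction through a site, and a coordinate-class argument to prevent accumulation over the $K=O(\log\ell)$ levels. Your final count, bounded density times the $O(\ell_p^D)=O(v)$ sites of the doubled padded cube, is a valid shortcut for the paper's explicit level sum in Eq.~\eqref{eq:spacelevels}.

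\textbf{The gap.} It lies in the one step that matters. You establish level identification only for segment interiors, leave turning points open, and hedge with an unverified offsetting of base paths. That fallback is not admissible as stated, because the lemma concerns the layout already fixed above. It is also unnecessary.

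\textbf{How to close it.} The fixed coordinates of a segment remain fixed at its endpoints. On the direction-$j$ segment of a level-$k$ connection, every coordinate $i<j$ already equals the child center, an odd multiple of $b2^{k-1}$. Every coordinate $i>j$ still equals the parent center, an odd multiple of $b2^k$. This holds at every point of the segment, including both ends. Writing $\nu_2$ for the $2$-adic valuation, any site $\mathbf x$ on such a segment therefore has $k=\nu_2(x_2/b)$ if $j=1$ and $k=1+\nu_2(x_1/b)$ if $j\ge2$. So a site and a direction determine the level. A turning point is just a site on two segments of different directions, counted once in each.

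\textbf{Two further points.} First, this role assignment is what your phrase ``odd multiples of $b2^k$ are disjoint across $k$'' needs, since a level-$k$ child coordinate and a level-$(k-1)$ parent coordinate lie in the same class. Even without it, one fixed coordinate leaves at most two candidate levels, which already gives bounded multiplicity. Your own remark that all coordinates of a turning point are such odd multiples settles that case the same way. Second, the argument needs a coordinate other than the moving one, i.e.\ $D\ge2$, which the appendix assumes. On a line the levels would pile up, which is why the one-dimensional case uses a different schedule.
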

\begin{proof}
Bounded density is not implied by the total number of array qubits; we must also bound how many paths share a site.
On the direction-$j$ segment of a level-$k$ connection, coordinates before $j$ equal the child center, while coordinates after $j$ still equal the parent center.
By Eq.~\eqref{eq:nodecoords}, dividing a parent coordinate by $b$ gives an odd integer times $2^k$, and dividing a child coordinate gives an odd integer times $2^{k-1}$.
These powers distinguish paths at different levels.

The level can be recovered from a coordinate that stays fixed along a segment.
For a positive integer $a$, let $\nu_2(a)$ be the exponent of the largest power of two dividing $a$.
At a coarse site $\mathbf x$ on a segment in direction $j$, this gives
\[
 j=1:\quad k=\nu_2(x_2/b),\qquad
 j\ge2:\quad k=1+\nu_2(x_1/b).
\]
The displayed ratios are positive integers because they are center coordinates.
The use of $x_2$ when $j=1$ is the reason for assuming $D\ge2$.
Thus a fixed site and segment direction determine at most one possible level, preventing a logarithmic accumulation of path qubits across levels.

Within a level, every base path stays in the central interior of its parent cube, and distinct parent interiors are disjoint.
A parent has $2^D$ children, so at most $2^D$ paths in a given direction can pass through a site.
Summing over directions gives multiplicity at most $D2^D=O(1)$, independently of the tree height.

Endpoint visits do not change this conclusion.
The workspaces $B_\nu$ are disjoint, each meets at most $2^D+1$ connections, and each endpoint extension traverses its workspace snake only a bounded number of times.
Accounting for both directions, both half-walks, and the two array qubits changes the bound only by a constant.
The node registers and leaf computations likewise have bounded occupancy.

Combining the path and endpoint counts therefore bounds the number of ancillary qubits at every coarse site independently of $K$.
In particular, geometric path overlaps require only constantly many distinct qubits at a site.

For the total number of qubits, level $k$ has $O((\ell_p/(b2^k))^D)$ connections, each of size $O(b2^k+w)$. The relation $\ell_p^D=\Theta(v)$ therefore gives
\begin{align}
 \sum_{k=1}^K O\left(\frac{v}{b^D2^{Dk}}(b2^k+w)\right)
 &=O\left(\frac{v}{b^{D-1}}\sum_{k\ge1}2^{-(D-1)k}
 +\frac{vw}{b^D}\sum_{k\ge1}2^{-Dk}\right)
 =O(v).
 \label{eq:spacelevels}
\end{align}
Here $D\ge2$ and $b^D=\Theta(w)$. There are $O(v/w)$ node and leaf workspaces, each using $O(w)$ qubits, so these workspaces together also use $O(v)$ qubits.
\end{proof}

\paragraph{Physical nearest-neighbor gates.}
The density bound permits a fixed ordering $a=0,1,\ldots$ of the qubits at each coarse site.
A constant $c_{\rm phys}$ with $c_{\rm phys}^D$ larger than their maximum number then leaves a distinct physical position for every qubit, given by
\begin{equation}
 \operatorname{pos}(\mathbf x,a):=c_{\rm phys}\mathbf x+\operatorname{digits}_{c_{\rm phys},D}(a).
 \label{eq:microcell}
\end{equation}
Here $\operatorname{digits}_{c_{\rm phys},D}(a)\in\{0,\ldots,c_{\rm phys}-1\}^D$ gives the $D$ base-$c_{\rm phys}$ digits, including leading zeros.
This is an injective placement in a constant-factor enlargement, and gates between equal or adjacent coarse sites have constant physical range.
A CNOT between the endpoints of a path $p_0,\ldots,p_d$ is implemented by moving the control to $p_{d-1}$ with SWAPs, applying the adjacent CNOT, and reversing those SWAPs.
The reverse sequence restores every intermediate state. An exchange of the endpoints instead takes $2d-1$ adjacent SWAPs, each consisting of three CNOTs.
After the constant-size controlled gates are decomposed as in Appendix~\ref{app:gates}, the same routing applies to them.

The physical paths implementing a coarse layer have bounded radius.
Together with bounded qubit density, this implies that each path intersects only constantly many others.
A fixed greedy coloring consequently splits the layer into constantly many disjoint path families.
Executing those families in turn preserves the coarse layer and adds only constant depth and gate-count overhead, which is included in all subsequent bounds.

\paragraph{Periods and record order.}
The layout now allows one new record per period.
A period consists of node processing, leaf processing, and one word shift on every connection array, each of depth $O(w)$.
At boundary $t=0,1,\ldots$, destination words are extracted into zero node registers; reverse words are combined, forward words are distributed, and each leaf performs its local use.
Within that use, stage $k$ is completed at every leaf before stage $k+1$ begins, as required by Eq.~\eqref{eq:distribution-ordered-stages}.
The resulting words are swapped into zero source positions, and the shifts advance the circuit to boundary $t+1$.
Forward and reverse processing have separate registers, while empty arrivals are zero words and induce the identity.
Since each period contains a fixed number of processing steps, its depth is $O(w)$.

Because the root-to-leaf delays agree, they can be represented by a single $T_\downarrow$.
Let $\delta_{\rm geo}$ also include the return trip and the leaf processing in its arrival period.
Measured in periods, the two delays are
\begin{equation}
 T_\downarrow:=\sum_{k=1}^K d_k,\qquad \delta_{\rm geo}:=2T_\downarrow.
 \label{eq:latency}
\end{equation}
Record $u$ enters the root at boundary $u$.
Every level-$k$ connection contributes exactly $d_k$ periods, independently of the chosen child.
Induction down and back up the tree therefore gives
\begin{equation}
 t_{\rm leaf}(u)=u+T_\downarrow,\qquad
 t_{\rm return}(u)=u+\delta_{\rm geo}.
 \label{eq:recordtimes}
\end{equation}
All leaves consequently use a given record in the same period.
After use, each leaf swaps its word into the reverse source, clearing $\reg R_g$ for the next arrival; internal input and output registers are similarly cleared after processing.
Successive records occupy different array positions, so no extra workspace is needed for the overlapping travel.
When $K=0$, the root is already the sole leaf: its incoming record is used and returned in the same period, giving $T_\downarrow=\delta_{\rm geo}=0$.

\paragraph{Original records and final alignment.}
The source must retain all $N$ records until their copies return after $\delta_{\rm geo}$ periods.
A buffer length $R_c:=N+\delta_{\rm geo}+3$ provides the required storage and three empty positions. The cyclic word arrays $\reg F[i]$ and $\reg Q[i]$, for $0\le i<R_c$, hold forward inputs and align returning copies, respectively, with position zero near the root.
For the bitwise shift circuit, each array consists of $R_cw$ positions $A_q$ and a second, initially zero channel $B_q$.
Executing the second matching of Eq.~\eqref{eq:cyclicshift} before the first implements the inverse shift, $A'_q=A_{q+1}$ and $B'_q=B_{q-1}$ modulo $R_cw$.
Repeating it $w$ times moves the words one position toward lower indices and restores the zero $B$ channel.
Thus a word shift has depth $O(w)$ and uses $O(R_cw^2)$ gates, while the two arrays occupy $O(R_cw)$ qubits.

As with the connections, these source arrays need a local closed placement.
The walk starts at the root, follows its workspace snake for $w-1$ edges, and retraces them. An out-and-back coordinate segment supplies the remaining length up to $R_cw$, with one same-site step between distinct qubits when needed for parity.
Every visit except the closure uses a different qubit, so the first $w$ positions form the root word without identifying repeated visits.
The original records $\reg O_u$ and their retained $O(w)$ work occupy separate qubits along the same folded path; they are never identified with $\reg F$ or $\reg Q$.
Substituting the buffer length and connection delays bounds this path by
\begin{align}
 R_cw
 &=Nw+2w\sum_{k=1}^K\bigl(\lceil Db2^{k-1}/w\rceil+4\bigr)+3w\notag\\
 &\le Nw+2Db(2^K-1)+10wK+3w\notag\\
 &=O(Nw+\ell+wK)=O(\ell).
 \label{eq:sourcepathlength}
\end{align}
Here $b2^K=\ell_p=\Theta(\ell)$ and $K=O(\log\ell)$; the last equality uses $Nw+w\log\ell=O(\ell)$, with $N\ge1$ absorbing the $O(w)$ term.
The root workspace and an out-and-back segment of length $O(\ell)$ therefore fit inside a constant-factor enlargement of the processing cube.
The snake has only a fixed number of visits, and the coordinate segment is traversed once in each direction, so no site acquires unboundedly many qubits.
The resulting path through the source and array registers has length $O(R_cw)$.

The original fields must be retained to erase the returning copies.
For initialization, a fixed source-path permutation places each field of $\reg O_u$ next to its target in $\reg F[u]$; bitwise CNOTs create the copies, and reversing the permutation restores their positions.
All other words remain zero.
Lemma~\ref{lem:path-permutation} bounds this initialization by depth $O(R_cw)$ and $O((R_cw)^2)$ gates.

At each period boundary $t$, the root's forward operation takes the word from $\reg F[0]$, leaving that register empty, while the returning word enters the zero register $\reg Q[0]$.
Both arrays shift toward lower indices at the end of each period, for $t=0,\ldots,R_c-1$.
A record returning at boundary $u+\delta_{\rm geo}$ therefore undergoes $R_c-(u+\delta_{\rm geo})$ further shifts, which puts record $u$ at index
\begin{equation}
 0-[R_c-(u+\delta_{\rm geo})]
 \equiv u+\delta_{\rm geo}\pmod{R_c}.
 \label{eq:returnindex}
\end{equation}
By boundary $N-1+\delta_{\rm geo}<R_c$, all records have returned and both the tree and $\reg F$ are empty.
Equation~\eqref{eq:returnindex} gives the remaining displacement in $\reg Q$ explicitly.
Because it is fixed, one prescribed nearest-neighbor permutation aligns each returning field with its original in $\reg O_u$.
The original basis value then clears the copy by CNOT, and reversing the permutation restores the physical positions.
Thus alignment and clearing have
\begin{equation}
 \begin{aligned}
 \text{depth }\dcnt_{\rm align}&=O(R_cw),\\
 \text{gate count }\gcnt_{\rm align}&=O((R_cw)^2),\\
 \text{ancillas }\acnt_{\rm align}&=O(R_cw),
 \end{aligned}
 \label{eq:alignmentresources}
\end{equation}
using the source-path registers already allocated above.

\begin{proof}[Proof of Lemma~\ref{lem:geometry}]
We first verify correctness on original basis values $y_0,\ldots,y_{N-1}$.
Distribution creates correlated copies of these values, and transport changes only their positions.
The local action in Eq.~\eqref{eq:distribution-local-interface} preserves each word while applying its phase and data operation.
By Eq.~\eqref{eq:recordtimes}, every leaf receives the same record together and records arrive in increasing order.
Completing their stages in the prescribed order therefore gives Eq.~\eqref{eq:distributed-product}, including gates whose data support crosses a subblock boundary.

The unchanged word values also justify the return circuit.
Inverse distribution combines equal copies level by level, and Eq.~\eqref{eq:returnindex} determines where the surviving word must be aligned with its original.
The final CNOTs consequently clear it, leaving all local and geometric work in zero.
This argument holds for every basis branch; linearity extends it to arbitrary signal states, including entanglement with the data or an external reference.

The depth is determined by the sum of connection delays along a root-to-leaf path.
Equation~\eqref{eq:edgedelay} gives
\begin{equation}
 T_\downarrow\le\sum_{k=1}^K(m_k/w+5)
 =D(\ell_p-b)/w+5K=O(\ell/w+K).
 \label{eq:latencybound}
\end{equation}
The complete schedule has $R_c=N+2T_\downarrow+3$ periods, each of depth $O(w)$ for local processing, alignments, and shifts.
Initialization and final alignment contribute the same order as these periods combined.
Hence the total depth is
\[
 \dcnt=O(R_cw)
 =O(Nw+\ell+wK)
 =O(Nw+\ell+w\log\ell)
 =O(\ell),
\]
where the last step uses the size assumption of Lemma~\ref{lem:geometry}.

For the gate count, each period shifts $O(v)$ array positions by $w$ bits, costing $O(vw)$ gates.
The $O(v/w)$ node and leaf workspaces each use $O(w^2)$ gates, so processing contributes another $O(vw)$.
The source path has $O(R_cw)=O(\ell)\le O(v)$ positions and obeys the same per-period bound.
Finally, the two full-path alignments cost $O((R_cw)^2)$; because $R_cw=O(\ell)\le O(v)$, this is at most $O(vR_cw)$.
Adding the contributions yields
\[
 \gcnt=O(R_cvw+(R_cw)^2)
 =O(vR_cw)=O\bigl(v(Nw+\ell+wK)\bigr)=O(v\ell).
\]
The last equality uses the geometric size assumption.
All original records, arithmetic work, and source arrays fit along the path of length $O(Nw+\ell+wK)=O(\ell)$ from Eq.~\eqref{eq:sourcepathlength}, with constantly many qubits per site.
Together with Lemma~\ref{lem:density}, this gives $O(v)$ ancillas at bounded density.
Finally, Eq.~\eqref{eq:microcell} converts the coarse operations into physical nearest-neighbor gates with constant overhead.
\end{proof}

\FloatBarrier
\section{One-dimensional compilation}\label{app:1d}
In one dimension, an auxiliary line can carry every record past the data, but different subblocks necessarily receive it at different times.
The issue is therefore not transport alone: the resulting data operations must still have the order defining $W_z$.
We resolve this by separating each record into an insertion event and individual product-formula stages, spaced far enough apart to preserve order whenever their supports overlap.
After all events have acted, inverse transport brings the unchanged control values back to their source slots, where they can be cleared.

\paragraph{Data sites, records, and local registers.}
The one-dimensional layout uses the power-of-two word width $w$ from Section~\ref{sec:local}, including the fields of Eq.~\eqref{eq:geofields} and the local arithmetic work.
With this width fixed, zero-interaction sites pad the block to the smallest covering interval of length $\ell:=w2^{K_1}$, where $K_1\ge0$.
The block theorem ensures that the original interval is at least $w$ long, so this padding enlarges it by less than a factor of two.

The interval is partitioned into $G_{\rm blk}:=\ell/w$ subblocks.
For $0\le g<G_{\rm blk}$, subblock $\Omega_g$ contains the constant-width data registers $\reg D_x$ at $gw\le x<(g+1)w$.
As in Lemma~\ref{lem:residual}, each site is assigned $q_{\rm rec}$ residual entries, with zero coefficients providing padding. Each subblock consequently has $T_0:=q_{\rm rec}w=\Theta(w)$ entries.
The address $\lambda=(g,j,\mu)$ uses Eq.~\eqref{eq:address}, with $j$ now the left-to-right site index.
Each selector $\reg a_{g,i}$ is placed near its assigned site, including for padded entries; the remaining work is distributed along the subblock at constant density.
This clean local work is denoted by $\reg W_g$.

At an arrival, Lemma~\ref{lem:local} uses the control word as $\reg R_g$ to apply one insertion or one product-formula stage.
It supplies the required address-dependent phases and data gates in depth $O(w)$ with $O(w^2)$ gates, then restores the control fields to their original positions and clears $\reg W_g$.
This preservation is what allows the same word to continue to the next subblock.

\paragraph{Events and their input copies.}
To assign a transport time to every stage, we label the events of slot $u\in\{0,\ldots,N-1\}$ by
\begin{equation}
 f:=(L+1)u+\alpha,\qquad 0\le\alpha\le L,
 \qquad F:=(L+1)N.
 \label{eq:1devents}
\end{equation}
The event with $\alpha=0$ applies the insertion phases $\chi_\lambda,z_{\lambda,\beta}$ and the selected Pauli under $eX$.
For $\alpha=k\ge1$, it applies only $V_k(h)$ under $e(1-X)$.
Thus each event needs a copy of the transported fields $\reg O_u$, but not a quantum register for $\alpha$: the fixed classical schedule already specifies which stage is to act.

The required spacing depends only on the interaction range.
We choose a fixed integer $\rho\ge0$ such that operations with intersecting declared data supports are assigned to subblocks at index distance at most $\rho$.
Here declared support includes every site on which a controlled operation may act, even if that operation is inactive on some branches.
Such a $\rho$ exists because each residual Pauli and product-formula term has bounded support around its assigned site.
Accordingly, the event spacing and buffer length, measured in words, are
\begin{equation}
 \sigma:=\rho+2,\qquad P:=\sigma(F-1)+1.
 \label{eq:1dbuffer}
\end{equation}
With this spacing, the time difference between successive events exceeds any arrival-time difference between overlapping supports.
The buffer holds all $F$ input events on the left and their completed words on the right; below we also verify the empty boundary needed for a finite reversible shift.

To accommodate both buffers, we extend the coarse line to $x_{\min}:=-Pw$ and $x_{\max}:=\ell+Pw-1$.
Each site has distinct transport qubits $\reg A_x,\reg B_x$ and a fixed number of work qubits, while data and selectors occupy $0\le x<\ell$.
The $A$ channel carries the event words and the $B$ channel makes their shift reversible.
The initial word for event $f$ occupies
\begin{equation}
 (\reg A_{(-\sigma f-1)w+j})_{0\le j<w}.
 \label{eq:1dplacement}
\end{equation}
All remaining $A$ qubits and all $B$ qubits are zero.
Event zero therefore starts immediately to the left of the data, with successive events $\sigma$ words apart.

Since $L$ and $\sigma$ are fixed, all copies of one slot fit in an interval of length $O(w)$.
The original fields and arithmetic work occupy separate qubits in the same interval. This permits fixed permutations to align the fields with the event targets, followed by bitwise CNOTs and inverse permutations to copy the basis values and restore the placement.
This initializes all $L+1$ copies in depth $O(w)$ with $O(w^2)$ gates per slot.
Different slots occupy disjoint intervals and are initialized in parallel; their original fields remain available to clear the copies at the end.

\paragraph{Reversible transport and the event schedule.}
A finite shift must account for both endpoints, rather than assuming an infinite auxiliary line.
The following two SWAP matchings make the endpoint action explicit:
\begin{equation}
 \prod_{x=x_{\min}}^{x_{\max}}\SWAP(\reg A_x,\reg B_x);\qquad
 \prod_{x=x_{\min}}^{x_{\max}-1}\SWAP(\reg B_x,\reg A_{x+1}).
 \label{eq:openshift}
\end{equation}
The first exchanges the two channels at each site and the second exchanges adjacent sites across channels.
Tracking arbitrary input values gives
\begin{equation}
 \begin{aligned}
 A'_{x_{\min}}&=B_{x_{\min}},& A'_x&=A_{x-1}\quad(x>x_{\min}),\\
 B'_x&=B_{x+1}\quad(x<x_{\max}),& B'_{x_{\max}}&=A_{x_{\max}}.
 \end{aligned}
 \label{eq:openboundary}
\end{equation}
Consequently, if the $B$ channel and the rightmost $A$ bit are zero, the $A$ contents move right by one position and the $B$ channel returns to zero.
The chosen buffer guarantees this boundary condition throughout forward transport; its verification is part of the proof below.

A word step consists of $w$ bit shifts, followed by the local operations for words aligned with data subblocks.
The step counter $t$ starts at zero before transport and counts completed word steps.
Because event $f$ begins at word position $-\sigma f-1$, it reaches subblock $g$ at
\begin{equation}
 t(f,g):=\sigma f+g+1.
 \label{eq:1dtime}
\end{equation}
At each such time, transport pauses while the local circuits process exactly the pairs $(f,g)$ satisfying Eq.~\eqref{eq:1dtime}.
The aligned $A$ fields act as $\reg R_g$; after the insertion or stage, the local circuit restores them before transport resumes.
Lemma~\ref{lem:local}, including its fixed coloring of physical conflicts, bounds this processing by depth $O(w)$.
The $w$ bit shifts have the same depth, so every word step costs $O(w)$.

\begin{lemma}[One-dimensional record compilation]\label{prop:1d}
Consider $N\ge1$ original records $\reg O_u$ of width at most $w$ on a one-dimensional data block padded to length $\ell=w2^{K_1}$. Assume $w$ is a power of two sufficient for the local fields and work, the assigned residual entries have bounded density and bounded support, and the product formula has a fixed number $L$ of stages, each composed of disjoint bounded-support gates as in Lemma~\ref{lem:local}. With $F=(L+1)N$, suppose $Fw=O(\ell)$. The schedule Eqs.~\eqref{eq:1devents}--\eqref{eq:1dtime} implements the ordered data operator $W_z$ and the address-dependent diagonal phases prescribed by the original records. It preserves the original fields and restores every event copy, transport qubit, and local work qubit to zero. The circuit acts within a constant-factor enlargement of the chain and has
\begin{equation}
 \begin{aligned}
 \text{depth }\dcnt_{1}&=O(\ell+Fw)=O(\ell),\\
 \text{gate count }\gcnt_{1}&=O\bigl(\ell(\ell+Fw)\bigr)=O(\ell^2),\\
 \text{ancillas }\acnt_{1}&=O(\ell+Fw)=O(\ell).
 \end{aligned}
 \label{eq:1dresources}
\end{equation}
All gates are physically nearest-neighbor, with bounded ancillary density. The stated action holds for arbitrary original signal states, including entanglement with the data and a reference system.
\end{lemma}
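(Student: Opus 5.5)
The proof works on computational basis values of the original records and extends to arbitrary signal states by linearity. On a basis branch every operation is either a fixed permutation of registers or a local circuit that preserves its control word; this is guaranteed by Lemma~\ref{lem:local}. The argument has four parts: the transport invariant, the ordering of events, cleanup, and resource counting.

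First, transport. The $B$ channel starts at zero. Before forward transport, the rightmost nonzero $A$ position is the word of event zero, ending at $-w-1$. Event $F-1$ begins at word $-\sigma(F-1)-1\ge -P$, so every initial word lies in $[x_{\min},-1]$. The schedule takes $T_{\rm fwd}:=\sigma(F-1)+G_{\rm blk}+1$ word steps, enough for the last event to pass subblock $G_{\rm blk}-1$. During these steps the front word moves right by at most $T_{\rm fwd}w\le (P+\ell/w)w$ positions, so it stays below $x_{\max}$. Equation~\eqref{eq:openboundary} then shows that each bit shift translates the $A$ contents right by one and returns $B$ to zero. The local circuits restore the aligned words between steps, so by induction event $f$ occupies subblock $g$ exactly at time $t(f,g)$ in Eq.~\eqref{eq:1dtime}. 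The event's word carries the basis value $y_u$ unchanged throughout.

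Second, ordering, which I expect to be the main obstacle. The realized operator is the product, over pairs $(f,g)$ in order of increasing $t(f,g)$, of the local operators $D_{g,\alpha}(y_u)$. Pairs with equal time act together. I must show this product equals $\prod_f\prod_g D_{g,\alpha(f)}(y_{u(f)})$ taken in increasing $f$, which is $W_z$ by Eqs.~\eqref{eq:recordword} and~\eqref{eq:distribution-ordered-stages}. The plan is to compare any two pairs $(f,g)$ and $(f',g')$ with $f<f'$. If their declared supports are disjoint, they commute and their order is irrelevant. If the supports intersect, then $|g-g'|\le\rho$ by the choice of $\rho$, so
\[
t(f',g')-t(f,g)=\sigma(f'-f)+g'-g\ge\rho+2-\rho>0 .
\]
Thus every overlapping pair is executed in event order. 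Within a single event, all pairs at the same $f$ belong to one stage and commute across groups, as verified before Lemma~\ref{lem:local}. Pairs processed simultaneously at one time step have distinct $f$ and hence differing $g$; those with overlapping supports would violate the displayed inequality, so simultaneous pairs commute and Lemma~\ref{lem:local}'s constant coloring realizes them. A standard bubble argument then completes this step: reorder the execution sequence by adjacent transpositions of commuting factors only, ending at the event-ordered product. The phases are scalars, and they sum to the address-dependent phases of the insertion events.

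Third, cleanup. After forward transport, I apply the inverse of the entire shift sequence, but skip the local uses, which leave the control words unchanged. This returns every event word to the placement in Eq.~\eqref{eq:1dplacement} and leaves the data operation intact. The per-slot copying permutations and CNOTs are then reversed against the retained originals $\reg O_u$, which clears all copies. The local work $\reg W_g$ is already clean after each use.

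Fourth, resources. Forward and return transport together take $O(T_{\rm fwd})=O(F+\ell/w)$ word steps of depth $O(w)$, giving depth $O(Fw+\ell)$. Initialization costs depth $O(w)$. Each bit shift acts on all $O(\ell+Pw)$ line positions, so the total gate count is $O(\ell+Fw)$ per bit shift times $O(\ell+Fw)$ bit shifts. Local processing contributes $O(w^2)$ per subblock per step, which is $O(\ell w)$ per step and within the same bound. The line has length $O(\ell+Fw)$ with constantly many qubits per site, giving $O(\ell+Fw)$ ancillas at bounded density. Under the hypothesis $Fw=O(\ell)$, all three bounds reduce to Eq.~\eqref{eq:1dresources}.
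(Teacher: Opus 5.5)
Your proposal is correct and follows the paper's argument. It uses the spacing inequality $\sigma(f'-f)+g'-g\ge\sigma-\rho\ge2$ for overlapping supports, with your bubble-sort reordering making the paper's commutation step explicit. It also has the boundary check that keeps the rightmost $A$ bit zero so that Eq.~\eqref{eq:openboundary} resets the $B$ channel, reversal of only the shift gates followed by reversal of the copying, and the same resource count.

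There is one small arithmetic slip. Event zero's word initially ends at bit $-1$, not $-w-1$. So with your $T_{\rm fwd}=\sigma(F-1)+G_{\rm blk}+1$ its last bit lands exactly on $x_{\max}$ after the final shift, rather than staying below it. This is harmless, because $A_{x_{\max}}$ is still zero before every shift. The paper instead stops at $T_f=\sigma(F-1)+G_{\rm blk}$, the arrival time of the last event at the last subblock, after which no local use remains; that choice leaves a full zero word of margin.
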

\begin{proof}
To verify the data order, we compare two events $f<f'$ whose operations in subblocks $g,g'$ have intersecting data supports.
Their assigned subblocks satisfy $|g-g'|\le\rho$, so their arrival times obey
\begin{equation}
 t(f',g')-t(f,g)
 =\sigma(f'-f)+g'-g
 \ge\sigma-\rho\ge2.
 \label{eq:1dorder}
\end{equation}
Thus every pair with overlapping supports retains its event order.
Only operations with disjoint data supports can change order; for fixed original labels and phase bits, such operations commute and their scalar phases multiply in either order.
Since the controls keep their basis values, this equality extends to superpositions, even when two event words are correlated copies of one slot.
The argument concerns the completed local operations after work is cleared, not the routing gates themselves.

Contributions within a single event commute as well.
For a product-formula stage this follows from disjoint supports; for an insertion, only one residual entry is selected on each basis branch, so every other contribution is the identity.
The physical schedule therefore has the same data action and phase as completing the events in increasing order $f=0,\ldots,F-1$.
By Eq.~\eqref{eq:1devents}, that order is the insertion followed by $V_1,\ldots,V_L$ in each slot, exactly as in Eq.~\eqref{eq:recordword}.
A constant coloring of the bounded restoration paths resolves any simultaneous physical conflicts without changing these completed operations.

Next, the final event reaches the last subblock at
\begin{equation}
 T_f:=t(F-1,G_{\rm blk}-1)=\sigma(F-1)+G_{\rm blk}=P+G_{\rm blk}-1.
 \label{eq:1dfinaltime}
\end{equation}
The initially rightmost occupied word is event zero, at position $-1$.
After $T_f$ word shifts it is at $P+G_{\rm blk}-2$, whereas the last word of the extended chain is at $P+G_{\rm blk}-1$.
An entire zero word therefore remains to its right even at the final time, and occupied positions are farther left at every earlier bit shift.
Thus the rightmost $A$ bit is always zero when Eq.~\eqref{eq:openshift} is applied.
Equation~\eqref{eq:openboundary} now proves inductively that the whole $B$ channel is restored to zero after every forward shift.

Once all events have acted, the circuit reverses only the shift gates.
The local operations have preserved every control value and cleared their work, so inverse transport returns the event words to Eq.~\eqref{eq:1dplacement} without undoing the data operations or phases.
Reversing initialization then clears the copies against the retained original fields.
All event, transport, and local work registers are consequently zero on each original basis branch; linearity gives the same conclusion for arbitrary coherent and entangled inputs.

For the resources, the extended chain contains $\ell+2Pw=O(\ell+Fw)$ coarse sites with constant local density.
There are $T_f=O(F+\ell/w)$ forward steps, each of depth $O(w)$, and reversing their shifts has the same asymptotic cost.
Initialization and clearing add only $O(w)$ depth, giving total depth $O(\ell+Fw)$.

Each bit shift uses $O(\ell+Fw)$ swaps, so forward and inverse transport together require
\[
 O\bigl((F+\ell/w)w(\ell+Fw)\bigr)=O\bigl((\ell+Fw)^2\bigr)
\]
gates.
The $FG_{\rm blk}$ local uses contribute $O(\ell Fw)$ gates, and copying and clearing contribute $O(Nw^2)$.
Under $Fw=O(\ell)$ and $w\le\ell$, their sum is $O(\ell(\ell+Fw))$.
Constant local density gives $O(\ell+Fw)$ ancillas.
Finally, the fixed-scale refinement of Eq.~\eqref{eq:microcell} and the constant coloring of restoration paths turn all coarse gates into physical nearest-neighbor gates with constant overhead, proving Eq.~\eqref{eq:1dresources}.
\end{proof}

\FloatBarrier
\section{Elementary gates, reversible arithmetic, and precision}\label{app:gates}
The main construction uses three elementary operations whose costs and phases must be explicit: controlled local gates, arithmetic with retained inputs, and numerical realization of the resulting rotations.
This appendix supplies those details.
Work registers start and end in zero, whereas an output such as a sum or comparison bit remains available until the calling circuit erases it.
Matrix products act from right to left, and semicolons denote physical execution order.
The fixed-size gates below act in constant-size neighborhoods; the routing of Appendix~\ref{app:geometry}, or Appendix~\ref{app:1d}, reduces them to adjacent CNOTs and one-qubit gates with constant overhead.

\subsection{Controlled local operations}
Let $R_z(\theta):=e^{-i\theta Z/2}$ for real $\theta$.
A controlled rotation must preserve its basis control $c$, applying the rotation to target $s$ only when $c=1$.
The following physical gate order achieves this:
\begin{equation}
 R_z(\theta/2)_s\ ;\ \CX(c;s)\ ;\ R_z(-\theta/2)_s\ ;\ \CX(c;s).
 \label{eq:crz}
\end{equation}
For $c=0$, the two rotations cancel.
For $c=1$, $XZX=-Z$ makes their time-ordered product $XR_z(-\theta/2)XR_z(\theta/2)=R_z(\theta)$.
This identity also implements the rotation in the phase-average gate, whose decomposition is
\[
 \Dph(z)=e^{i\phi}R_z(-2\theta),\qquad
 \phi=\arg z,\quad \theta=\arccos|z|.
\]
Its controlled version therefore combines a controlled $R_z(-2\theta)$ with $\operatorname{diag}(1,e^{i\phi})$ on the control.
The latter is essential: the scalar $e^{i\phi}$ is a relative phase between the selected and unselected branches, not a disposable global phase.
The convention $\phi=0$ at $z=0$ makes the gate defined there as well.
Through Eq.~\eqref{eq:phaseaverage}, this implements the source factors $f(J_u)$ and $\gamma(\widetilde b_u)$ and the address-dependent factors $\chi_\lambda$ and $z_{\lambda,\beta}$.

The Hadamard gate is denoted by $H_{\rm Had}$, and $S_{\rm ph}:=\operatorname{diag}(1,i)$ is the phase gate.
Conjugating CNOT gives controlled $Y$ in the order $S_{\rm ph}^\dagger;\CX;S_{\rm ph}$ and controlled $Z$ in the order $H_{\rm Had};\CX;H_{\rm Had}$.
For a general Pauli rotation $e^{-i\theta P}$, we use the tensor-product basis change $F_P$ with factors $H_{\rm Had}$ for $X$, $H_{\rm Had}S_{\rm ph}^\dagger$ for $Y$, and $I$ for $Z$.
The conjugated operator $F_PPF_P^\dagger$ is a product of $Z$ operators.
After applying $F_P$, CNOTs accumulate their parity in one data qubit; Eq.~\eqref{eq:crz} with angle $2\theta$ applies the controlled phase, and reversing the parity computation and basis change restores the original coordinates.
This gives the required controlled rotation while preserving the control.
For $P=I$, the same operation is simply $\operatorname{diag}(1,e^{-i\theta})$ on that control.
Because every support has bounded size and diameter, these standard controlled-gate constructions~\cite{Barenco} have constant gate count and routing depth.
They cover both the selected Pauli insertions and the background rotations, including their scalar phases.

For the decoder, Toffoli and controlled SWAP are also needed.
A Toffoli is obtained by conjugating controlled-controlled-$Z$ with $H_{\rm Had}$ on the target.
Following the elementary phase-decomposition method~\cite{Barenco}, the integer identity
\[
 4xyz=x+y+z-(x\oplus y)-(x\oplus z)-(y\oplus z)
          +(x\oplus y\oplus z)
\]
for $x,y,z\in\{0,1\}$ expresses the phase $e^{i\pi xyz}$ as $\pi/4$ and $-\pi/4$ phases on bits and parities.
Each parity can be computed into one clean work qubit, phased, and uncomputed; the same work qubit is reused for the fixed number of parities.
These elementary gates implement a controlled swap of $a,b$ with control $c$ in the order
\[
 \CX(b;a)\ ;\ \CCX(c,a;b)\ ;\ \CX(b;a).
\]
The CNOTs cancel on $c=0$, while the three gates exchange $a$ and $b$ on $c=1$.
Conjugating a control wire by $X$ gives a negative control.
Thus all constant-size gates used by the decoder and phase-encounter network have explicit elementary implementations.

The table lookups further require a comparison with a fixed address.
For a $q$-bit string $x$ and fixed classical string $x_*$, let $\operatorname{EQ}_{x_*}$ have action
\[
 \operatorname{EQ}_{x_*}\ket{x}\ket{0}_{c}\ket{0}_{\rm work}
 =\ket{x}\ket{\ind{x=x_*}}_{c}\ket{0}_{\rm work}.
\]
Conjugating the positions where $x_*$ is zero by $X$ reduces this comparison to an AND of the resulting bits.
The AND is computed along fresh targets, copied to $c$, and uncomputed; undoing the initial $X$ gates then restores $x$, leaving only the comparison output.
Interleaving operands and work on a path gives depth, gate count, and work $O(q)$.
After $c$ has controlled its intended gate, the same XOR circuit clears it.
A control used at several adjacent positions can likewise be copied along a CNOT chain and then cleared in reverse.
This is valid for coherent inputs because the intervening gates preserve the copied basis values.
The fixed permutations needed to align these operations are charged separately in Section~\ref{sec:local} and Appendix~\ref{app:geometry}.

\subsection{Arithmetic on retained inputs}
The prefix circuit needs additions and comparisons whose outputs can later be erased using the original operands.
Reversible carry propagation~\cite{VBE}, together with compute--copy--uncompute~\cite{Bennett}, provides this interface.
The following recurrences specify its nearest-neighbor cost for the operations used here.
Let $x=(x_0,\ldots,x_{q-1})$ and $y=(y_0,\ldots,y_{q-1})$ be retained $q$-bit inputs, indexed from the least significant bit.
The addition circuit first writes the sum into a fresh output:
\begin{equation}
 \operatorname{ADD}_q\ket{x}\ket{y}\ket0_{\rm sum}\ket0_{\rm work}
 =\ket{x}\ket{y}\ket{(x+y)\bmod2^q}_{\rm sum}\ket0_{\rm work}.
 \label{eq:retainedadder}
\end{equation}
Starting with $c_0:=0$, the carry and sum are written into fresh targets according to
\begin{equation}
 c_{i+1}:=x_iy_i\oplus x_ic_i\oplus y_ic_i,
 \qquad s_i:=x_i\oplus y_i\oplus c_i
 \quad(0\le i<q).
 \label{eq:carry}
\end{equation}
The carry is the majority of $x_i,y_i,c_i$, so three Toffolis write it into a zero target, while CNOTs write the sum bit.
Once all sum bits are present, reversing the carry gates from high to low significance clears their work without changing either the sum or the retained inputs.
A constant number of qubits at each bit position keeps $x_i,y_i,c_i,c_{i+1},s_i$ within fixed distance, giving depth, gate count, and work $O(q)$.
For modular addition, only the low $q$ sum bits are retained; the final overflow carry is cleared by the same reverse sweep.

The same circuit gives the in-place update $y\mapsto y+x\pmod{2^q}$, but its old output must be erased.
Computing the sum into zero work and exchanging that register with $y$ leaves the new value in $y$ and the old value in work.
Because the retained $x$ recovers the old value as $(y_{\rm new}-x)\bmod2^q$, a subtraction into separate work, an XOR into the old-value register, and inverse subtraction clear that register.
For subtraction $x-y$, the required borrow recurrence starts at $t_0:=0$ and is
\[
 t_{i+1}:=(1-x_i)y_i\oplus(1-x_i)t_i\oplus y_it_i,
 \qquad d_i:=x_i\oplus y_i\oplus t_i.
\]
Here the $d_i$ are the difference bits modulo $2^q$.
As in addition, reversing the borrow computation retains that difference and clears the carries.
For comparison, the required output is instead the final borrow $t_q=\ind{x<y}$. Copying it to a condition bit permits the entire subtraction, including its difference output, to be reversed.
The resulting circuit therefore has the clean action
\begin{equation}
 \operatorname{LT}_q\ket{x}\ket{y}\ket0_c\ket0_{\rm work}
 =\ket{x}\ket{y}\ket{\ind{x<y}}_c\ket0_{\rm work}.
 \label{eq:cleancomparison}
\end{equation}
All three operations use depth, gate count, and work $O(q)$ in the aligned layout.
The in-place update needs a fixed number of extra length-$q$ registers, so its cleanup does not change these bounds.

Controlled addition $\ket c\ket x\ket y\mapsto\ket c\ket x\ket{(y+cx)\bmod2^q}$ reduces to the same update with operand $cx$.
A clean CNOT chain distributes the retained control next to $x$, where Toffolis write the bits $cx_i$ into work. The in-place adder acts on these bits and $y$; afterward, the retained $c$ and $x$ clear the product bits and control copies.
Each step has linear depth, gate count, and work, establishing the bounds used for $\calT$ and Eq.~\eqref{eq:width}.
Space-efficient ripple adders~\cite{Cuccaro} provide an alternative implementation with the same required interface.

For the denominator recurrence in Eq.~\eqref{eq:moddenom}, the preceding value remains in $\reg B_{u-1}$ while the new value is written into $\reg B_u$.
Controlled on $X_u=1$, the circuit copies the preceding value and adds $J_u$ and $p+1$ modulo $L_b$; on $X_u=0$, the fresh target remains zero.
At $u=0$ the copying step is absent because there is no preceding value.
Since $L_b=2^{b_{\rm den}}$, these are exactly the modular operations above.
The $b_{\rm count}$-bit counter has $2^{b_{\rm count}}>N$, so incrementing it by $1-X_u$ never overflows.
Before the update, comparison with the classical value $r$ writes $e_u=\ind{j_u<r}$ into $\reg E_u$.
During $\calT^\dagger$, the inverse swap and increment recover $j_u$ before the inverse comparison clears $\reg E_u$.
The retained inputs therefore suffice to erase every stored prefix output and all arithmetic work in the prescribed reverse order.

\subsection{Numerical evaluation and finite gate sets}
The gate matrices above are specified exactly by the input coefficients and explicit elementary functions.
For rational or algebraic data, zero tests can be performed exactly and the required matrix entries evaluated to certified precision; more general representations must supply the corresponding procedures.
The quantity to approximate is the final gate matrix in operator norm.
In particular, near an endpoint of $\arccos$, the required precision of its argument may be greater than the desired angle precision, so these two tolerances should not be identified.

The complete simulation contains $G_{\rm one}$ one-qubit gates. We allocate an error budget $\epsilon_{\rm syn}>0$ to their evaluation and synthesis.
If gate $j$ has evaluation error $\eta_j^{\rm num}$ and synthesis error $\eta_j^{\rm syn}$, its total error is at most their sum.
Telescoping the full product of unitary gates then gives
\begin{equation}
 \norm{\widetilde{\calU}\iota_0-\calU \iota_0}
 \le\norm{\widetilde{\calU}-\calU}
 \le\sum_j(\eta_j^{\rm num}+\eta_j^{\rm syn}).
 \label{eq:synthesisbudget}
\end{equation}
Here $\calU$ uses the specified exact rotations and $\widetilde{\calU}$ their numerical implementations.
The estimate therefore controls the same complete-output norm as the simulation theorem, including ancillary components.

This norm also retains the overall phase, which must be accounted for when using synthesis in $SU(2)$.
Each one-qubit gate has the form $G_j=e^{i\phi_j}V_j$ with $V_j\in SU(2)$, so the accumulated scalar phase is $\Phi_{\rm syn}:=\sum_{j=1}^{G_{\rm one}}\phi_j$.
Replacing $G_j$ by $V_j$ removes the scalar $e^{i\Phi_{\rm syn}}$ from the complete circuit.
To restore that phase, we reserve one additional local qubit, untouched by all other gates, and apply $R_z(-2\Phi_{\rm syn})$ to its initial state $\ket0$.
This contributes precisely $e^{i\Phi_{\rm syn}}$ while leaving its basis state unchanged.
Thus $G_{\rm one}+1$ single-qubit $SU(2)$ rotations and one extra ancilla suffice to preserve the target isometry, including its phase.

For a finite inverse-closed single-qubit gate set dense in $SU(2)$, together with exact CNOT, the Solovay--Kitaev theorem~\cite{SK} approximates each rotation to error $\eta$ with $O(\log^{c_{\rm SK}}(1/\eta))$ gates.
We choose the evaluation and synthesis tolerances so that every final approximation has error at most $\epsilon_{\rm syn}/(G_{\rm one}+1)$.
Equation~\eqref{eq:synthesisbudget} then bounds the added error by $\epsilon_{\rm syn}$, while gate count and worst-case depth acquire a factor $O(\log^{c_{\rm SK}}((G_{\rm one}+1)/\epsilon_{\rm syn}))$.
These replacements act at the original rotation sites and therefore preserve nearest-neighbor geometry and bounded density.
For example, dividing the final error budget equally between simulation and synthesis gives a finite-gate-set implementation with this additional precision dependence.

For Clifford+$T$, generated by $H_{\rm Had}$, CNOT, and $T_{\rm gate}:=\operatorname{diag}(1,e^{i\pi/4})$, specialized arithmetic methods are also available.
Kliuchnikov, Maslov, and Mosca characterize and synthesize exactly representable single-qubit unitaries~\cite{KMM}, while Ross and Selinger approximate arbitrary $z$ rotations without ancillas~\cite{RossSelinger}.
These methods can replace the general synthesis step at the same sites; the resource guarantee stated above uses Solovay--Kitaev.

\section{Proof of the spatial decomposition}\label{app:spatial-proof}
The spatial decomposition in the main theorem specializes the HHKL construction to finite-range interactions~\cite[Lemmas~5 and~6, Appendix~C.4]{HHKL}.
We derive the required form using our convention for restricting a Hamiltonian to a region.
Starting from the Lieb--Robinson commutator recursion~\cite{NS,HHKL}, the first step is to bound the effect of removing distant interactions on a local observable.
Applied at strip boundaries, this bound replaces the interaction-picture evolution by separate corrections in narrow slabs.
The corrections then give three layers of restricted evolutions in each coordinate direction.

Throughout this appendix, $H$ denotes the normalized $\bar H$ of Eq.~\eqref{eq:normalizedrestriction}, and $U_Y(t)=e^{-iH_Yt}$ uses its declared-support restriction.
The corresponding Heisenberg evolution is
\[
 \tau_t^Y(O):=U_Y(t)^\dagger O U_Y(t).
\]
The following restriction estimate is stated so that it can also be applied after earlier strip boundaries have been removed.

\begin{lemma}[Restriction of a local evolution]\label{lem:restriction}
Suppose the local strength is at most one. Let $\nu\ge1$ be a fixed upper bound on every interaction support size $|Z|$, and let $R_{\rm LR}:=\max\{1,R\}$. If $O_X$ is supported on a nonempty set $X\subseteq Y$ and $d:=\dist(X,\Lambda\setminus Y)$, the evolution difference satisfies the following bound for $|t|\le1$ and any fixed $\mu>0$:
\begin{equation}
 \norm{\tau_t^\Lambda(O_X)-\tau_t^Y(O_X)}
 \le |X|\norm{O_X}\sum_{q\ge\lceil d/R_{\rm LR}\rceil}
             \frac{(2\nu|t|)^q}{q!}
 \le C_{\rm LR}|X|\norm{O_X}e^{-\mu d},
 \label{eq:LR}
\end{equation}
where $C_{\rm LR}:=\exp(2\nu e^{\mu R_{\rm LR}})$. If $Y=\Lambda$, the difference and the tail are zero. The same estimates hold after removing any collection of interaction terms.
\end{lemma}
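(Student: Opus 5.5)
The plan is to compare the two Heisenberg evolutions by an interaction-picture (Duhamel) identity and then expand the resulting commutators as a Lieb--Robinson series. Write $H_\Lambda=H_Y+V$, where $V:=\sum_{Z\not\subseteq Y}h_Z$ collects the removed terms. Differentiating $s\mapsto\tau^Y_{t-s}\bigl(\tau^\Lambda_s(O_X)\bigr)$, or equivalently the reverse ordering, gives
\[
 \tau_t^\Lambda(O_X)-\tau_t^Y(O_X)=i\int_0^t\tau_s^{\Lambda}\bigl([V,\tau^Y_{t-s}(O_X)]\bigr)\,ds
\]
up to the sign convention. Because the outer evolution is unitary, it suffices to bound $\int_0^{|t|}\norm{[V,\tau^Y_{u}(O_X)]}du$. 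The same argument applies verbatim if $H_\Lambda$ is replaced by any Hamiltonian obtained by deleting a collection of terms, since only the local-strength and support bounds are used. If $Y=\Lambda$ then $V=0$, and both sides vanish.

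To bound $[V,\tau_u^Y(O_X)]$, I would use the standard commutator recursion. Set
\[
 C_Z(u):=\sup_{\norm{O_X}=1}\norm{[h_Z,\tau^Y_u(O_X)]}.
\]
The usual differential inequality, obtained by splitting off the terms of $H_Y$ that meet $Z$, iterates to a series
\[
 \sum_{q\ge0}\frac{(2|u|)^q}{q!}\sum_{\text{chains}}\prod\norm{h_{Z_j}}
\]
over chains $Z_0=Z,Z_1,\ldots,Z_q$ of overlapping supports ending at a set meeting $X$. Alternatively, I would expand $\tau^Y_u(O_X)$ in its absolutely convergent series of nested commutators $\mathrm{ad}_{H_Y}^k(O_X)$ and note that $[V,\mathrm{ad}^k_{H_Y}O_X]=0$ unless a chain of $k$ overlapping terms in $Y$ connects $X$ to a term of $V$. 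Such a term touches $\Lambda\setminus Y$, so the chain must bridge distance at least $d$. Each term has diameter at most $R$, so this needs $k+1\ge\lceil d/R_{\rm LR}\rceil$.

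Counting then gives the first inequality. Starting from a site of $X$ (a factor $|X|$), each step chooses a term containing some site of the current support. There are at most $\nu$ sites per support and local strength at most one per site, which contributes a factor at most $\nu$ per step. The commutator contributes a factor $2$. The $k$-fold nested commutator therefore has weighted mass at most $|X|\norm{O_X}(2\nu)^k$. Integrating $u^k/k!$ over $[0,|t|]$ and absorbing the final commutator with $V$ shifts the index to $q=k+1$. The result is the tail $\sum_{q\ge\lceil d/R_{\rm LR}\rceil}(2\nu|t|)^q/q!$.

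The step I expect to need the most care is this bookkeeping: supports grow as the commutators nest, so the sum over term choices at step $j$ is bounded by $\nu j$ rather than $\nu$. I would handle this by choosing at each step the site through which the new term attaches to the \emph{previous} term only (a path, not a tree). Terms not on such a path commute and are dropped in the recursion form, which is why I would prefer the differential-inequality recursion over the raw series.

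The second inequality is elementary. For $|t|\le1$ and $q_0:=\lceil d/R_{\rm LR}\rceil$,
\[
 \sum_{q\ge q_0}\frac{(2\nu)^q}{q!}\le e^{-\mu R_{\rm LR}q_0}\sum_q\frac{(2\nu e^{\mu R_{\rm LR}})^q}{q!}\le e^{-\mu d}\exp\bigl(2\nu e^{\mu R_{\rm LR}}\bigr),
\]
using $R_{\rm LR}q_0\ge d$. This is exactly the stated constant $C_{\rm LR}$.
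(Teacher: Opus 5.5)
Your proposal is correct and is essentially the paper's argument: the same Duhamel comparison against the restricted evolution, and the same Lieb--Robinson recursion iterated into chains in which each support need only meet its predecessor (exactly the point the paper stresses for geometric rather than factorial growth). It also uses the same distance constraint $(q+1)R_{\rm LR}\ge d$, weight bound $|X|\nu^q$, and exponential-tail estimate. The only difference is orientation, and it does not affect the counting. You fix $O_X$ and propagate the removed term $h_Z$; for the recursion to close on the terms $h_{Z'}$ themselves, this should be read as evolving $h_Z$ backward via $\norm{[h_Z,\tau^Y_u(O_X)]}=\norm{[\tau^Y_{-u}(h_Z),O_X]}$. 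The paper instead fixes $B=h_{Z_B}$ and propagates the observable on $X$.
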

\begin{proof}
It suffices to consider $t\ge0$, since negating the Hamiltonian gives the negative-time case.
For a Hamiltonian $G=\sum_Zg_Z$ obtained by retaining any subset of the interactions, its Heisenberg evolution is denoted by $\tau_t^G(O):=e^{iGt}Oe^{-iGt}$.
We measure how an observable initially supported on $X$ can affect a fixed $B$ through the normalized commutator
\[
 C_B(X,t):=\sup_{\substack{A\ne0\\\supp(A)\subseteq X}}
       \frac{\norm{[\tau_t^G(A),B]}}{\norm A}.
\]
Only terms meeting $X$ contribute to $[G,A]$.
Thus, with $G_{\cap X}:=\sum_{Z:Z\cap X\ne\varnothing}g_Z$ and $f(t):=[\tau_t^G(A),B]$, differentiation and the Jacobi identity give
\[
 f'(t)=i[\tau_t^G(G_{\cap X}),f(t)]
       -i[\tau_t^G(A),[\tau_t^G(G_{\cap X}),B]].
\]
The first term generates norm-preserving unitary conjugation; only the second can increase the commutator norm.
Variation of constants, followed by $\norm{[A,B]}\le2\norm A\norm B$ and the supremum over $A$, therefore gives the standard recursion
\begin{equation}
 C_B(X,t)\le C_B(X,0)
   +2\sum_{Z:Z\cap X\ne\varnothing}\norm{g_Z}
       \int_0^t C_B(Z,s)\,ds.
 \label{eq:LRrecursion}
\end{equation}
Iterating this recursion produces chains with $X\cap Z_1\ne\varnothing$ and $Z_j\cap Z_{j+1}\ne\varnothing$.
The local strength bound controls the first support by $\sum_{Z_1\cap X\ne\varnothing}\norm{g_{Z_1}}\le|X|$.
At every subsequent step, it gives
\[
 \sum_{Z':Z'\cap Z_j\ne\varnothing}\norm{g_{Z'}}
 \le\sum_{x\in Z_j}\sum_{Z'\ni x}\norm{g_{Z'}}\le|Z_j|\le\nu.
\]
Hence the total weight of chains of length $q\ge1$ is at most $|X|\nu^{q-1}$.
The remainder after $q$ recursion steps is bounded by $2\norm B\,|X|\nu^{q-1}(2t)^q/q!$ and tends to zero.
If $B$ is supported on $Z_B$, the initial bound $C_B(X,0)\le2\norm B\,\ind{X\cap Z_B\ne\varnothing}$ restricts the iterated sum to chains whose last support meets $Z_B$.
Together with the vanishing remainder, this gives
\begin{equation}
 C_B(X,t)\le2\norm B\sum_{q=0}^\infty\frac{(2t)^q}{q!}
 \sum_{\substack{X\cap Z_1\ne\varnothing,\ Z_j\cap Z_{j+1}\ne\varnothing\\
                  Z_q\cap Z_B\ne\varnothing}}
        \prod_{j=1}^q\norm{g_{Z_j}}.
 \label{eq:LRchains}
\end{equation}
For $q=0$, the inner expression is $\ind{X\cap Z_B\ne\varnothing}$ and the empty product is one.
In particular, each step follows from overlap with the preceding interaction support, rather than with the union of all earlier supports; this is why the chain weight grows only geometrically in $q$.

To turn the commutator bound into a restriction error, we differentiate $\tau_{t-s}^\Lambda(\tau_s^Y(O_X))$ and integrate between its endpoints.
Only the omitted boundary interactions contribute, giving
\begin{equation}
 \norm{\tau_t^\Lambda(O_X)-\tau_t^Y(O_X)}
 \le\int_0^t\sum_{Z_B\in\partial\mathcal I_Y}
          \norm{[\tau_s^Y(O_X),h_{Z_B}]}\,ds,
 \label{eq:LRduhamel}
\end{equation}
where $\partial\mathcal I_Y:=\{Z_B:Z_B\cap Y\ne\varnothing,\ Z_B\setminus Y\ne\varnothing\}$.
An omitted term wholly outside $Y$ commutes with $\tau_s^Y(O_X)$, which explains why no other terms appear.

We apply Eq.~\eqref{eq:LRchains} with $G=H_Y$ and $B=h_{Z_B}$ and sum over the crossing supports $Z_B$ in Eq.~\eqref{eq:LRduhamel}.
A term with $q$ retained interactions corresponds to
\[
 X\cap Z_1\ne\varnothing,\quad Z_1\cap Z_2\ne\varnothing,\quad\ldots,\quad
 Z_q\cap Z_B\ne\varnothing,\qquad Z_B\setminus Y\ne\varnothing.
\]
For $q=0$, this condition is simply $X\cap Z_B\ne\varnothing$.
In either case, the $q+1$ interaction supports connect $X$ to the complement of $Y$.
Since each has diameter at most $R_{\rm LR}$, no such term exists unless $(q+1)R_{\rm LR}\ge d$.
The boundary interaction is included in the same local-strength sum, so
\[
 \sum_{\substack{Z_1,\ldots,Z_q\subseteq Y,\ Z_B\in\partial\mathcal I_Y\\
                  \text{the above intersections hold}}}
       \left(\prod_{j=1}^q\norm{h_{Z_j}}\right)\norm{h_{Z_B}}
 \le |X|\nu^q.
\]
Substituting the distance constraint and this weight bound into Eq.~\eqref{eq:LRduhamel} gives
\begin{align*}
 2|X|\norm{O_X}
 \sum_{\substack{q\ge0\\(q+1)R_{\rm LR}\ge d}}
      \nu^q\int_0^t\frac{(2s)^q}{q!}\,ds
 &=|X|\norm{O_X}
 \sum_{\substack{k\ge1\\kR_{\rm LR}\ge d}}
       \frac{2^k\nu^{k-1}t^k}{k!}\\
 &\le |X|\norm{O_X}
 \sum_{k\ge\lceil d/R_{\rm LR}\rceil}\frac{(2\nu t)^k}{k!}.
\end{align*}
This is the first inequality of Eq.~\eqref{eq:LR}.
For the exponential form, every remaining index satisfies $kR_{\rm LR}\ge d$, so multiplying by the corresponding exponential weight bounds the tail by
\[
 \sum_{k\ge\lceil d/R_{\rm LR}\rceil}\frac{(2\nu|t|)^k}{k!}
 \le e^{-\mu d}\sum_{k\ge0}
       \frac{(2\nu e^{\mu R_{\rm LR}}|t|)^k}{k!}
 \le C_{\rm LR}e^{-\mu d}.
\]
All steps used only range, support size, and local strength, none of which increases when interactions are removed.
The same bound is therefore available for every restricted Hamiltonian encountered in the decomposition.
\end{proof}

\begin{proof}[Proof of Lemma~\ref{lem:spatial}]
Again, it suffices to treat $0\le t\le1$; a sign change gives negative time.
Our first step is to separate the lattice into wide strips and express the omitted interactions as corrections near their boundaries.

We partition the first coordinate into strips $B_i$ of widths between $4\ell_{\rm buf}$ and $8\ell_{\rm buf}$.
When its length is $L_1\ge8\ell_{\rm buf}$, distributing the sites as evenly as possible among $\lfloor L_1/(4\ell_{\rm buf})\rfloor$ intervals gives such a partition.
The Hamiltonian with all interstrip interactions removed is
\[
 H_0:=\sum_iH_{B_i},\qquad H-H_0=\sum_iV_i,
\]
where $V_i$ contains the terms crossing between $B_i$ and $B_{i+1}$.
The threshold is chosen with $\ell_0>R$, ensuring that no interaction crosses two strip boundaries.
Around boundary $i$, a slab $P_i$ extends $\ell_{\rm buf}$ sites on each side, with halves $P_i^-$ and $P_i^+$.
The slabs are disjoint and contain every support in $V_i$, which gives the identities
\begin{equation}
 H_{P_i}=H_{P_i^-}+H_{P_i^+}+V_i,
 \qquad (H_0)_{P_i}=H_{P_i^-}+H_{P_i^+}.
 \label{eq:boundaryhamiltonian}
\end{equation}
The second identity follows because $H_0$ has already removed all interactions between the two halves.
These identities let a correction supported on $P_i$ replace the full interaction-picture evolution of $V_i$.

The evolution under the decoupled Hamiltonian is $U_0(s):=e^{-iH_0s}=\prod_iU_{B_i}(s)$.
The remaining correction $W(s):=U_0(s)^\dagger U_\Lambda(s)$ obeys
\begin{align}
 W'(s)&=iU_0(s)^\dagger H_0U_\Lambda(s)
            -iU_0(s)^\dagger HU_\Lambda(s)\notag\\
 &=-iK_I(s)W(s),\qquad
 K_I(s):=U_0(s)^\dagger(H-H_0)U_0(s)
       =\sum_iU_0(s)^\dagger V_iU_0(s).
 \label{eq:interactiongenerator}
\end{align}
Every term of $V_i$ is at distance at least $\ell_{\rm buf}-R$ from the complement of $P_i$.
The restriction estimate therefore allows its conjugation by $U_0(s)$ to be replaced by conjugation inside $P_i$.
Equation~\eqref{eq:boundaryhamiltonian} identifies the restricted unitaries and the resulting approximate generator as
\[
 U_{i,0}(s):=U_{P_i^-}(s)U_{P_i^+}(s),\qquad
 \widetilde K_I(s):=\sum_iU_{i,0}(s)^\dagger V_iU_{i,0}(s).
\]
Each support has $|Z|\le\nu$, while the local strength bound gives $\sum_i\sum_{Z\text{ in }V_i}\norm{h_Z}=O(n)$.
Summing Eq.~\eqref{eq:LR} over these terms consequently yields
\begin{equation}
 \norm{K_I(s)-\widetilde K_I(s)}
 \le C_{\rm LR}\nu e^{\mu R}e^{-\mu\ell_{\rm buf}}
               \sum_i\sum_{Z\text{ in }V_i}\norm{h_Z}
 \le Cn e^{-\mu\ell_{\rm buf}}.
 \label{eq:boundarygeneratorerror}
\end{equation}
Here $C$ depends only on fixed model constants, not on the lattice size or buffer width.
Thus localization introduces an error exponentially small in the slab width, even after all boundaries are included.

The local replacement also has an explicit expression in restricted evolutions.
For slab $i$, the correction is $E_i(t):=U_{i,0}(t)^\dagger U_{P_i}(t)$.
By Eq.~\eqref{eq:boundaryhamiltonian}, its derivative is
\begin{align}
 E_i'(t)
 &=iU_{i,0}(t)^\dagger(H_{P_i^-}+H_{P_i^+})U_{P_i}(t)
       -iU_{i,0}(t)^\dagger H_{P_i}U_{P_i}(t)\notag\\
 &=-iU_{i,0}(t)^\dagger V_iU_{i,0}(t)E_i(t).
 \label{eq:boundarycorrectionderivative}
\end{align}
This is exactly the evolution generated by the $i$th summand of $\widetilde K_I$.
Since the slabs are disjoint, their corrections commute, and $\widetilde W(t):=\prod_iE_i(t)$ satisfies $\widetilde W'=-i\widetilde K_I\widetilde W$ with $\widetilde W(0)=I$.
We can therefore compare $W$ and $\widetilde W$ directly through their generators.
Writing $W(t,s):=W(t)W(s)^\dagger$, Duhamel's formula gives
\begin{equation}
 W(t)-\widetilde W(t)
 =-i\int_0^t W(t,s)\bigl(K_I(s)-\widetilde K_I(s)\bigr)
                   \widetilde W(s)\,ds.
 \label{eq:spatialduhamel}
\end{equation}
The propagators on either side of the generator difference are unitary.
Combining Eqs.~\eqref{eq:boundarygeneratorerror} and~\eqref{eq:spatialduhamel} therefore bounds the correction error by
\[
 \norm{U_0(t)^\dagger U_\Lambda(t)-\prod_iE_i(t)}
 \le |t|Cn e^{-\mu\ell_{\rm buf}}
 \le Cn e^{-\mu\ell_{\rm buf}}.
\]
Multiplying by $U_0(t)$ returns to the original evolution without changing the norm error.
Substituting the expressions for the slab corrections then gives
\begin{equation}
 \widetilde U_\Lambda(t):=
 \left(\prod_iU_{B_i}(t)\right)
 \left(\prod_iU_{P_i^-}(t)^\dagger U_{P_i^+}(t)^\dagger\right)
 \left(\prod_iU_{P_i}(t)\right),
 \label{eq:spatialformula}
\end{equation}
with $\norm{U_\Lambda(t)-\widetilde U_\Lambda(t)}\le Cn e^{-\mu\ell_{\rm buf}}$.
In grouping the last two parentheses, only factors in distinct disjoint slabs were exchanged.
Thus every parenthesis is a layer of disjoint restricted evolutions.
The strips have total volume $|\Lambda|$, and the slabs and their halves each have total volume at most $|\Lambda|$, so the three layers together have volume at most $3|\Lambda|$.

The remaining coordinate directions are treated by applying the same construction to each current factor.
A direction not yet processed still has its original side length, so the strip partition applies; processed directions already have widths in $[\ell_{\rm buf},8\ell_{\rm buf}]$.
For an inverse factor, we decompose its forward evolution and take the adjoint, which reverses the three layers.
Each direction therefore multiplies both the layer count and total region volume by at most three.
After $D$ directions there are at most $3^D$ layers, every side lies in $[\ell_{\rm buf},8\ell_{\rm buf}]$, and the volumes sum to at most $3^D|\Lambda|$.
Telescoping the unitary products bounds the accumulated error by
\[
 C|\Lambda|e^{-\mu\ell_{\rm buf}}\sum_{j=0}^{D-1}3^j
 \le C'n e^{-\mu\ell_{\rm buf}}.
\]
Because $D$ is fixed, the geometric sum is absorbed into the constant, proving the required error and completing the decomposition.

\end{proof}

\begingroup
\small
\raggedright
\let\standardthebibliography\thebibliography
\renewcommand{\thebibliography}[1]{%
  \standardthebibliography{#1}%
  \setlength{\itemsep}{2pt plus 1pt minus 1pt}}
\bibliographystyle{alphaurl}
\bibliography{references}
\endgroup
\end{document}